\newtheorem{thm}{Theorem}[section]
\newtheorem{cor}[thm]{Corollary}
\newtheorem{lem}[thm]{Lemma}
\newtheorem{prop}[thm]{Proposition}
\theoremstyle{definition}
\newtheorem{defn}[thm]{Definition}
\newtheorem{example}[thm]{Example}
\theoremstyle{remark}
\newtheorem{rem}[thm]{Remark}
\numberwithin{equation}{section}
\newcommand{\To}{\longrightarrow}
\newcommand{\Z}{\mathbb Z}
\newcommand{\Q}{\mathbb Q}
\newcommand{\C}{\mathbb C}
\newcommand{\R}{\mathbb R}
\newcommand{\N}{\mathbb N}
\newcommand{\Pro}{\mathbb P}
\newcommand{\q}{/\!/}
\newcommand{\tr}{\mathrm{tr}}
\newcommand{\can}{\mathrm{can}}
\newcommand{\cq}{\mathsf{q}}
\newcommand{\GL}{\mathrm{GL}}
\newcommand{\cone}{\mathsf{s}}
\newcommand{\trop}{\mathrm{trop}}
\newcommand{\Qt}{\mathcal{Q}}
\newcommand{\Qu}{\mathsf{1}}
\newcommand{\Qi}{\mathsf{i}}
\newcommand{\Qj}{\mathsf{j}}
\newcommand{\Qk}{\mathsf{k}}
\newcommand{\Ocan}{\mho}
\newcommand{\CC}{\mathcal{C}}
\newcommand{\Top}{\mathcal{T}op}
\begin{document}

\author{Francis Brown}
\begin{title}[Canonical Feynman integrals]
{Generalised graph  Laplacians and canonical Feynman integrals with kinematics}\end{title}
\maketitle

\begin{abstract} To any graph with external half-edges  and internal masses, we associate canonical  integrals  which   depend non-trivially on particle masses and momenta, and are always finite. They  are generalised Feynman integrals which satisfy graphical relations obtained from contracting edges in graphs, and  a coproduct involving  both ultra-violet and infra-red subgraphs.
Their  integrands   are defined by evaluating bi-invariant  forms, which represent stable classes in the  cohomology of the general linear group,  on a generalised graph Laplacian  matrix which depends on the external kinematics of a graph.
\end{abstract}

\section{Introduction}
In the paper \cite{CanonicalForms}  we introduced canonical differential forms on moduli spaces of metric graphs, and showed how they provide a connection between the  cohomology of the commutative graph complex, the algebraic $K$-theory of the integers, and Feynman integrals.
In this paper, we extend  this theory to the case of graphs with external momenta and masses, with an emphasis on  physical aspects.

Let us first recall some background on Feynman integrals. 
Consider a connected graph $G$, with $n$ external legs (or half-edges), $h_G$ loops, and $e_G$  internal edges. Every external leg represents an incoming particle with momentum $q_i \in \R^d$ subject to overall momentum conservation $\sum_{i=1}^n q_i=0$.  The number of spacetime dimensions $d$ will  be $d=2$ or $d=4$ in this paper.  To each edge $e$ one additionally associates one  of a finite set of particle masses  $m_e$, which are arbitrary real numbers.

For scalar theories,  the parametric Feynman integral  is the projective integral
\begin{equation} \label{FeynmanInt} 
I^{\mathrm{Feyn}}_G (q,m) =  \Gamma \left(e_G - \frac{h_G D}{2} \right)  \int_{\sigma_G}  \frac{ 1 }{\Psi_G^{D/2}}   \left(  \frac{ \Psi_G}{\Xi_G(q,m)}  \right)^{e_G-h_GD/2}  \Omega_G 
\end{equation} 
where the edges of $G$ are numbered from $1$ to $e_G$,  the form  $\Omega_G$ is defined by 
\[
\Omega_G  = \sum_{i=1}^{e_G}  (-1)^i \alpha_i d \alpha_1 \wedge \ldots \wedge \widehat{ d \alpha_i} \wedge \ldots \wedge  d \alpha_{e_G}\ ,
\]
and  $\sigma_G = \{ (\alpha_1: \ldots : \alpha_{e_G} ) \in \Pro^{e_G-1}(\R) : \alpha_i \geq 0\}$ is the coordinate simplex in projective space.  The quantity $D $  is typically an even integer,  or, in the setting of dimensional regularisation,  $D= 2k - \varepsilon$, for a small positive  $\varepsilon$.

The integrand involves the `second Symanzik' polynomial
\[ \Xi_G(q,m)  =  \Phi_G(q) + \left(\sum_{e \in E_G} m_e^2 \alpha_e\right) \Psi_G \]
which is expressed in terms of  the two more basic polynomials $\Psi_G, \Phi_G(q)$, which are homogeneous in the $\alpha_e, e\in E_G$ of degrees $h_G$, $h_G+1$ respectively. They are defined as sums over spanning forests in the graph $G$ with 1 or 2 connected components (see \S\ref{sect: DefSymanzik}). The  integral \eqref{FeynmanInt} diverges in general and can be regularised in a variety of manners including, for example,  Laurent expansion in the parameter $\varepsilon$. 

For most quantum field theories of relevance for collider physics, one is led to consider a wider class of Feynman integrals, which in parametric form (see, e.g. \cite{Golz}) have the following general shape (omitting $\Gamma$-factors for simplicity):
\begin{equation} \label{intro:Igeneral}
\int_{\sigma_G}  \frac{N_G}{\Psi_G^{a}\Xi^b_G(q,m)}   \Omega_G 
\end{equation} 
where $a,b\in \Z$ (or  $a,b \in \Z + \varepsilon\Z$), and the numerator $N_G$ is a polynomial in the parameters $\alpha_i$ with  typically complicated coefficients.  

Faced with the considerable difficulty in computing integrals  \eqref{FeynmanInt} or \eqref{intro:Igeneral},  a common recent theme  of research is to seek alternative theoretical frameworks in which the corresponding amplitudes are simpler and more highly structured, with a long term view to unearthing  mathematical  properties which are valid for  general quantum field theories.
Notable examples in this direction include the amplituhedron programme \cite{Amplituhedron}, which  organises certain amplitudes in $N=4$ SYM according to geometric principles; string perturbation theory, which studies scattering amplitudes defined on punctured Riemann surfaces; or integrable `fishnet'  models \cite{Fishnet, Gurdogan} which  reduce to a small number of Feynman graphs. 

In this paper, we introduce a special class of geometrically-defined  integrals of the form \eqref{intro:Igeneral}, which  are always finite irrespective of the graph $G$. 
They  generalize  \cite{CanonicalForms} to incorporate masses and momenta, and  satisfy  symmetry properties including a  family of  graphically-encoded relations.  
  The latter  involve both contraction of internal edges  and   also the `motic' coproduct of \cite{Cosmic} which has applications to the study of both UV and  IR divergences \cite{IR, IRUVtropical}.

\subsection{Canonical forms and their integrals} Our starting point is the moduli space  $\mathcal{M}^{\trop}_{g}$ of stable tropical curves of genus $g$, or more precisely the open locus 
\[
\left(\mathcal{M}^{\trop}_{g}\right)_{w=0}   \subset  \mathcal{M}^{\trop}_{g} \]
  consisting of graphs whose vertices have  weight zero.  The quotient $\left(\mathcal{M}^{\trop}_{g}\right)_{w=0} /\R^{\times}_{>0}$ is isomorphic to the quotient of Culler-Vogtmann's Outer space $\mathcal{O}_g$  \cite{CullerVogtmann} by the group of outer automorphisms $\mathrm{Out}(F_g)$ of the free group $F_g$ on $g$ elements. 
      Points in this space are equivalence classes of connected metric graphs (with no external legs), where a metric on a graph is an assignment of  a  length $\ell_e>0$ to  every internal edge, normalised  so that the total length of all edges  is $1$.  
      
      The  tropical Torelli map \cite{TropicalTorelli} on $\mathcal{M}^{\trop}_{g}$ restricts to a map 
      \begin{equation} \label{tropg} 
\left(\mathcal{M}^{\trop}_{g}\right)_{w=0} \To  \mathcal{P}_g/\GL_g(\Z)
\end{equation} 
where $\mathcal{P}_g$ is the space of positive definite $g\times g$ symmetric matrices $X$, upon which $ P \in \GL_g(\Z)$ acts via $X \mapsto P^T X P$. 
The map \eqref{tropg} assigns to 
a connected  metric graph $G$ the $\GL_g(\Z)$-equivalence  class of any choice of graph Laplacian matrix  $\Lambda_G$. The latter is  a  $g \times g$ symmetric matrix   whose determinant 
\[  
\det \Lambda_G = \Psi_G
\]
 equals the first Symanzik, or  Kirchhoff, graph polynomial which arises in the integrals \eqref{FeynmanInt}.  The graph Laplacian
 may be interpreted as a tropical version of the Riemann polarization form on a compact Riemann surface of genus $g$. 
 Canonical forms on $\left(\mathcal{M}^{\trop}_{g}\right)_{w=0}$ are defined as follows.
 The  forms 
$\tr( (X^{-1} d X)^{4k+1})$, for $k\geq 1$, are bi-invariant under  left and right multiplication by   $\GL_g(\R)$ and were shown by Borel to 
 generate the stable cohomology of the symmetric space $\mathcal{P}_g/\GL_g(\Z)$.  Their pull-backs  $\tr( (\Lambda_G^{-1} d \Lambda_G)^{4k+1})$  along the map \eqref{tropg} defines a distinguished family of differential forms on 
the images of cells in $\left(\mathcal{M}^{\trop}_{g}\right)_{w=0} /\R^{\times}_{>0}$. The integrals of these forms are called canonical since they do not depend on any choices, and were studied in \cite{CanonicalForms}. In that paper it was shown that interesting examples of Feynman residues for vacuum diagrams (see \cite{SchnetzCensus} for a survey) of relevance to quantum field theory can arise as canonical integrals. A natural question  is whether general  Feynman integrals with non-trivial kinematics are also amenable to a `canonical' geometrical interpretation of this kind.  
This is   addressed in this paper.

\subsubsection{Canonical forms with kinematic dependence}  We consider  a   generalised moduli space $\mathcal{M}^{\trop}_{g,n,M}$, first considered in \cite{ModuliColoured}, of weighted metric graphs which have $n$ external legs labelled from $1,\ldots, n$, and where each  internal edge $e$ is assigned one of $M$ possible particle  masses. 
 When  $M=0$,  this space coincides with the moduli space  $\mathcal{M}^{\trop}_{g,n}$ of $n$-marked tropical curves of genus $g$. Denote the open locus of graphs whose  vertex weights are all zero by:
 \begin{equation}  
\left( \mathcal{M}^{\trop}_{g,n,M}  \right)_{w=0}  \subset \mathcal{M}^{\trop}_{g,n,M} \ . 
\end{equation}
Consider in the first instance  external particle momenta $q_1,\ldots, q_n \in \R^2$ in two-dimensional Euclidean space, which we identify with the complex numbers,  and any choice of 
$M$ internal particle masses. For any choice of routing $\mu_e$ of the external momenta through the internal edges $e$ of $G$,  we consider  a generalised graph Laplacian matrix  (see  \eqref{LambdaGblockmatrix}) whose  determinant satisfies 
\begin{equation} \label{intro: detLambdatilde} 
\det \widetilde{\Lambda}_G(\mu,m) = \Xi_G(q, m) \ . 
\end{equation} 
The generalised Laplacian  $\widetilde{\Lambda}_G(\mu,m)$ is a Hermitian $(g+1) \times (g+1)$ matrix (note that $g+1$ is the degree of the   polynomial  $\Xi_G(q,m)$). It is the tropical version of a regularised Hermitian polarization form on the cohomology of
a  compact Riemann surface with $n$ punctures. From this perspective, the external particle momenta are interpreted as  tropical versions of the residues 
of a  differential form of the third kind.   The case when the external momenta lie in $\R^4$ may be treated by replacing complex momenta with  momenta in the ring of quaternions $\Qt \cong \R^4$, leading to a quaternionic Hermitian graph Laplacian which   satisfies
 \begin{equation} \label{intro: detLambdaQuattilde} 
\mathrm{Det} \, \widetilde{\Lambda}_G(\mu,m) = \left(\Xi_G(q, m)\right)^2 \ ,  
\end{equation}
where $\mathrm{Det}$ of a quaternionic matrix is the determinant of its complex adjoint. 
 One of our key results (\S\ref{sect: Canatinfinity}) is  that the generalised graph Laplacian admits an asymptotic decomposition into block matrices as one approaches the boundary of the moduli space $(\mathcal{M}^{\trop}_{g,n,M})_{w=0}$.   This provides a geometric  interpretation of  asymptotic factorisation formulae for graph polynomials which are important for the study of infra-red and ultra-violet singularities of Feynman integrals \cite{IR, IRUVtropical}.

Define  canonical forms\footnote{The reader familiar with Chern-Simons theory may recognise canonical forms as a special case: for example, the Chern-Simons 3-form 
  $\tr( dA \wedge A +\frac{2}{3} A\wedge A \wedge A)$ reduces to $-\frac{1}{3} \tr((\Lambda^{-1} d\Lambda)^3)$ upon substituting  $A=\Lambda^{-1} d\Lambda$. The higher degree forms are similar.} of the `first' and `second' kinds  for all $k \geq 1$, by 
 \begin{eqnarray} \label{intro: canonical} 
 \omega^{4k+1}_G & = &  \tr\left( \left( \Lambda_G^{-1} d\Lambda_G \right)^{4k+1}  \right)  \\ 
 \varpi^{2k+1}_G & = &  \tr\left( \left( \widetilde{\Lambda}_G(\mu,m) ^{-1} d  \widetilde{\Lambda}_G(\mu,m)  \right)^{2k+1}  \right)  \  ,  \nonumber
  \end{eqnarray} 
  when the external momenta lie in $\R^2 \cong \C$. In the case of quaternionic momenta in $\R^4 \cong \Qt$, the definition of  canonical forms of the first kind is unchanged, but the canonical forms of the second kind are now  defined for all $k \geq 1$ by: 
  \begin{equation} \label{intro: canonicalQuat}  
    \varpi^{\Qt, 4k+1}_G  =   \mathrm{Tr} \left( \left( \widetilde{\Lambda}_G(\mu,m) ^{-1} d  \widetilde{\Lambda}_G(\mu,m)  \right)^{4k+1}  \right)  \  .  
   \end{equation} 
   where the quaternionic trace $\mathrm{Tr}$ is the trace of the complex adjoint.
  A crucial property of canonical forms (theorem \ref{thm: Canformsalongdivisors})  is that they  factorise as one approaches infinity along the boundary faces of Feynman polytopes.

   When the external momenta lie in $\R^2 \cong \C$, we  define a \emph{canonical form}  $\omega_G$  
    to be any  homogeneous polynomial in the forms \eqref{intro: canonical}. 
    When the momenta lie in $\R^4 \cong \Qt$,  a  (quaternionic) canonical form, denoted $\omega^{\Qt}_G$,
    is a homogeneous polynomial in $\omega^{4k+1}_G$ and $ \varpi^{\Qt, 4k+1}_G$.  For any connected graph $G$, and  any  such form $\omega_G$ of degree $e_G+1$, we may consider the integral
\begin{equation}  \label{intro: canonicalint}
I_G(\omega, q, m) =  \int_{\sigma_G}  \omega_G \ . 
\end{equation} 
Since it is independent of all choices which go into defining  generalised graph Laplacian matrices, we call it  a \emph{canonical integral}.

We prove the following facts about the integrals \eqref{intro: canonicalint}:

\begin{enumerate}
\item They are always finite.
\item They are generalised Feynman integrals of the form \eqref{intro:Igeneral}.
\item They satisfy graphical relations  involving  both the contraction of internal edges   and a generalisation of the Connes-Kreimer coproduct which encodes not only ultraviolet, but also infra-red phenomena. 
\end{enumerate}

Thus the integrals  \eqref{intro: canonicalint}, which may be loosely  interpreted as certain `volumes'  of cells on a moduli space of tropical curves,  pick out a distinguished class of generalised Feynman integrals with special properties. The reader may wish to turn  to sections \ref{sect: ExamplesComplex} - \ref{sect: ExamplesQuaternionic}
for examples of canonical integrals and their relations. 
\begin{rem} The canonical forms  \eqref{intro: canonical}, \eqref{intro: canonicalQuat}  arise from   the graded exterior algebras of invariant differential forms $I_{H_n}$ on the symmetric spaces associated to the following sequences of classical non-compact Lie groups $H_n$. The following table is extracted from \cite{Borel}, page 265, and based on results of H. Cartan:
 \[
\begin{array}{cc|cc}
H_n   && & \underset{\longleftarrow}{\lim} \, I_{H_n}  \\ \hline 
\mathrm{SL}_n(\R)     &&  & E[ x_{4k+1},  k\geq 1 ]  \\
\mathrm{SL}_n(\C)   &&  &    E[ x_{2k+1},  k\geq 1 ]  \\
 \mathrm{SL}_n(\Qt)    &&  &   E[x_{4k+1},  k\geq 1 ]
\end{array}
\]
where  $E[\{x_i\}]$ denotes the graded exterior algebra generated by elements $x_i$ in odd degrees $i$, and the inverse limit  of the spaces of forms $I_{H_n}$ is taken in the category of graded exterior algebras. The generators are given by $\beta^n_X$ in the case of real and complex matrices, and by  $ \mathrm{Tr}  ((X^{-1} d X)^n)$ in the quaternionic case. They always vanish when $n$  is even, and    additionally for  $n\equiv 3 \pmod 4$ 
 in the case of $\R$ and  $\Qt$. 
  \end{rem}

\subsection{Discussion and questions for further research} 
It would be very interesting to find a momentum or position space formulation for the integrals \eqref{intro: canonicalint}, and to relate them more closely to the theory of graphical functions \cite{GraphicalParametric,BorinskySchnetz} in the case when all masses vanish. 
The generalised graph Laplacian is a tropical version of `single-valued' period integrals on a  punctured Riemann sphere, which are prevalent in closed string perturbation theory, and suggests a possible connection with string perturbation theory. In a different direction, it was shown in \cite{CanonicalForms} that the canonical integrals for graphs without   kinematics are closely connected to the homology of the commutative, even,  graph complex.  It would be interesting to find such a  relation for the integrals
\eqref{intro: canonicalint} based on corollary \ref{cor:StokesCompactType}, for example.
See also \cite{BerghoffKreimer} for related interpretations of Feynman integrals. 

We also expect the work in this paper to have applications to the motivic study of Feynman integrals. For example, the existence of the generalised graph Laplacian implies that the graph motives at a fixed loop order have  a universal family, and the canonical forms   provide  universally-defined classes in the  relative cohomology of graph hypersurface complements and the de Rham realisation of graph motives.

In particular, since canonical integrals are distinguished elements of period matrices associated to ordinary Feynman integrals, they should be related to the latter via differential equations and also via the motivic coaction. The Stokes' relations proven here should thus enable one to transfer information about differential equations, periods, and infra-red singularities between graphs with different topologies. 

Finally, it would also be interesting to study integrals    
 of the form 
 \[ \int_{\sigma_G} \left(  \frac{\Psi^{h_G+1}_G}{\Xi^{h_G}_G(q,m)}\right)^{\varepsilon} \omega_G \] in dimensional regularisation, where $\omega_G$ is canonical. They too will satisfy relations via Stokes' formula, but  of a slightly different form to those considered here.
 
\subsection{Contents}
 In $\S2$ we recall some notations and conventions as well as some background on quaternionic matrices. In \S3 we describe a moduli space of marked metric graphs with additional edge colourings, which encode particle masses. This space was previously introduced by Berghoff and M{\"u}hlbauer in \cite{ModuliColoured}.
 After a brief discussion in \S\ref{sect: MomInterp}      of momentum routings, and complex and quaternionic momenta,  we define the generalised Laplacians in \S\ref{sec: GenLaplace}. In the case when momenta are quaternionic and all masses are zero, they  coincide with matrices previously considered by Bloch and Kreimer in \cite{BlochKreimer}. The way  we arrived at the definition is  explained in \S\ref{sect: TropicalSV}; namely by taking the tropical analogue of the single-valued integration pairing on the cohomology of a  punctured Riemann surface.
  In \S \ref{sect: Symanzik} we recall the definitions of Symanzik polynomials, and prove a formula for $\Phi_G$ in terms of spanning forest polynomials, which is used to deduce
 \eqref{intro: detLambdatilde} and \eqref{intro: detLambdaQuattilde}.
  Starting in \S\ref{section: invariantformsofgraphs}, we recall the definitions of bi-invariant differential forms. A key result is a formula (\S\ref{sect: Canatinfinity}) for  the asymptotic behaviour at infinity of graph Laplacian matrices,   which is used 
to prove the convergence of canonical integrals in  \S\ref{sect: CanInt}, along with the  Stokes relations. It could also be used to prove  \eqref{intro: detLambdatilde} and \eqref{intro: detLambdaQuattilde}.      Sections \ref{sect: ExamplesComplex} and    
 \ref{sect: ExamplesQuaternionic} provide examples of canonical integrals for some well-studied Feynman diagrams, and \S\ref{sect: Stokes rel} illustrates the general Stokes relations by deducing a graphical 5-term functional equation  for the massive box diagram in 4 spacetime dimensions.
 
  \vspace{0.1in}

\emph{Declarations.}
The author has no relevant financial or non-financial interests to disclose,  nor any competing interests to declare that are relevant to the content of this article. 
\\

\emph{Acknowledgements}. Many thanks to  Marko Berghoff, Paul Fendley, Lionel Mason,  Erik Panzer, and  all the participants of the seminar at Oxford in 2022 on graph complexes.   This project has received funding from the European Research Council (ERC) under the European Union's Horizon 2020 research and innovation programme (grant agreement no. 724638).

\section{Notations and conventions}
Graphs in this paper are finite,  and  usually connected. They have a finite number $n$  of external half-edges. The set of internal edges is denoted by $E(G)$, and the set of vertices by $V(G)$. We write $e_G = |E(G)|$ for the number of internal edges, and $h_G$ for the number of loops of $G$. 
When an internal edge is directed, we denote by $s(e)$ its source, and $t(e)$ its  target. An edge with $s(e)=t(e)$ is called a self-edge or tadpole.

\subsection{Reminders on Quaternions} \label{sect: Quaternions}
Denote the ring of quaternions by
\[
\Qt = \R \Qu+  \R \Qi +  \R \Qj +  \R \Qk  \ , 
\]
where $\Qu$ is the unit, $\Qi^2 = \Qj^2=\Qk^2=-\Qu$, and $\Qi \Qj=-\Qj \Qi= \Qk, \Qk \Qi=-\Qi \Qk= \Qj, \Qj \Qk=-\Qk \Qj=\Qi$. It admits a representation
$
\chi: \Qt  \rightarrow M_2(\C)$  via 
\[ 
\Qu \mapsto \begin{pmatrix} 1 & 0 \\ 0 & 1 \end{pmatrix} \quad , \quad 
\Qi \mapsto \begin{pmatrix} i & 0 \\ 0 & -i \end{pmatrix} \quad , \quad 
\Qj \mapsto \begin{pmatrix} 0 & -1 \\ 1 & 0 \end{pmatrix} \quad , \quad 
\Qk \mapsto \begin{pmatrix} 0 & -i \\ -i & 0 \end{pmatrix} \ . 
\]
The ring $\Qt$ has  an anti-involution,  denoted $x \mapsto \overline{x}$, which is induced  by Hermitian conjugation  $V \mapsto (\overline{V})^T$ on $M_2(\C)$. The conjugate of $
x = x_1 \Qu + x_2 \Qi + x_3 \Qj + x_4 \Qk 
$ is 
$
\overline{x} = x_1 \Qu  -  x_2 \Qi - x_3 \Qj - x_4 \Qk 
$, 
 and  the square of the quaternion norm 
\[ 
|\!|  x |\!|^2 =  x \overline{x}  =  (x_1^2 + x_2^2 +x_3^2 +x_4^2) \Qu
\]
coincides with the square of  the Euclidean norm.

\subsubsection{Quaternionic matrices}  \label{sect: QuatMatrices} Matrices with entries in the ring of quaternions share many properties with rings of matrices over  a commutative ring (see  \cite{QuaternionMatrices} for a survey).
A convenient method  for studying them is via  the complex adjoint representation, which is the ring homomorphism
\begin{equation} \label{chidef}  \chi :  M_n(\Qt) \To M_n(M_2(\C)) \ , 
\end{equation}
induced by the representation $\chi$ defined above. One shows that a matrix $M \in M_n(\Qt)$  has a unique left inverse (which is necessarily also a right inverse) if and only if  its image $\chi_M$ is invertible. However, the determinant of a quaternionic matrix is not  defined in  general, and its trace is not a similarity invariant, meaning that $\tr(P^{-1}MP) \neq \tr(M)$ for general $P,M$.   For this reason it is customary to define the determinant and trace of $ M \in M_n(\Qt)$ to be $\mathrm{Det} (M) = \det \chi_M$ and $\mathrm{Tr}(M) = \tr \,\chi_M$ respectively. They take values in the complex numbers.

Two particular representations of quaternionic matrices  are convenient: 
\vspace{0.05in}

\emph{(i)}.  If one identifies $M_n(M_2(\C)) $ with $M_{2n}(\C)$, then the 
 image of $\chi$  is the set of matrices $\{V \in M_{2n}(\C): \overline{V}=J_n V J_n^{-1}\}$,
where $J_n$ is the block diagonal matrix consisting of $n$ copies of $\chi_{\Qj}$ along the diagonal.
\vspace{0.05in}

\emph{(ii)}.  If  one  identifies $M_n(M_2(\C)) $ with $M_2(M_n(\C))$,  then the image of $\chi$ consists of block matrices of the form:
\begin{equation} \label{QuaternionicMatrices} 
\chi_{A + \Qj B} = \begin{pmatrix} A & B \\ - \overline{B} & \overline{A} \end{pmatrix} \ , \qquad A, B \in M_n(\C) \  , 
\end{equation} 
since  any quaternionic matrix may be uniquely written    $A + \Qj \,B$, with $A, B \in M_n(\C)$. Here, and later, we identify the complex numbers with 
$\R \Qu + \R \Qi \subset \Qt$. 
\vspace{0.05in}

Either representation may be used to compute the quaternionic trace or determinant. We shall use $(i)$ in the examples.

\subsubsection{Moore determinant}

The first part of the following proposition implies  the existence of  the reduced Pfaffian norm $\mathrm{Nrp}$ \cite{Moore, Tignol}.

\begin{prop} \label{prop: Hermmatrices} Let $\chi$ be the map  \eqref{chidef}. 
\vspace{0.05in}

 (i). Let  $M \in M_n(\Qt)$ be a quaternionic Hermitian matrix, i.e.,  $M^T = \overline{M}$, where $\overline{M}$ denotes its quaternionic conjugate. Then $\chi_M$ is Hermitian and
 \begin{equation} \label{detissquare} 
\mathrm{Det}(M) = \det( \chi_{M}) =  \left(F(\chi_{M})\right)^2\ ,
 \end{equation}
 where $F$ is a  polynomial in the entries of $\chi_{M}$ which  is invariant under complex conjugation. 
 \vspace{0.05in} 

(ii).  In the case when $M \in M_n(\C) \subset M_n(\Qt)$  has complex entries \begin{equation}
 \mathrm{Det}(M)= \det( \chi_M) =  \det(M)^2\ .
 \end{equation}
\end{prop}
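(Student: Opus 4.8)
The plan is to prove both parts by explicitly analyzing the complex adjoint $\chi_M$ using the block representation \eqref{QuaternionicMatrices}. For part (i), I would first verify that the Hermitian condition $M^T = \overline{M}$ on $M$ translates into $\chi_M$ being Hermitian in the ordinary complex sense. Writing $M = A + \Qj B$ with $A, B \in M_n(\C)$, the condition $M^T = \overline{M}$ should force $A$ to be Hermitian ($A^T = \overline{A}$) and $B$ to be antisymmetric ($B^T = -B$); I would check this by comparing entries, using that quaternionic conjugation acts as $\overline{A + \Qj B} = \overline{A} - \Qj \overline{B}$ (noting $\Qj$ anticommutes appropriately with the complex subfield $\R\Qu + \R\Qi$). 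Substituting these constraints into $\chi_{A+\Qj B} = \left(\begin{smallmatrix} A & B \\ -\overline{B} & \overline{A}\end{smallmatrix}\right)$ and taking the conjugate transpose should recover the same block matrix, confirming $\chi_M$ is Hermitian.

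The heart of part (i) is the statement that $\det(\chi_M)$ is a perfect square of a real polynomial. The cleanest route is to invoke the general structure of the image of $\chi$ from representation \emph{(i)} in \S\ref{sect: QuatMatrices}: every $V$ in the image satisfies $\overline{V} = J_n V J_n^{-1}$, where $J_n$ is skew-symmetric ($J_n^T = -J_n$, $J_n^2 = -I$). For a Hermitian $\chi_M$ this symmetry, combined with Hermiticity, can be packaged to say that $J_n \chi_M$ is complex antisymmetric (or that $\chi_M$ is conjugate to a matrix preserving a symplectic form), so that $\det(\chi_M)$ arises as the determinant of a matrix admitting a Pfaffian. Concretely, I would show $J_n \chi_M$ is antisymmetric and set $F(\chi_M) = \mathrm{Pf}(J_n \chi_M)$, up to a sign/normalization from $\det J_n$; since $\det(J_n\chi_M) = \det(J_n)\det(\chi_M)$ and $\det(\chi_M)$ equals $\mathrm{Pf}(J_n\chi_M)^2/\det(J_n)$, and the Pfaffian is a polynomial in the entries, this exhibits the square root $F$ as a polynomial. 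Invariance of $F$ under complex conjugation follows because conjugating $\chi_M$ corresponds (via the $J_n$-relation) to a conjugation that fixes $\det(\chi_M)$, and since $F^2$ is conjugation-invariant and $F$ is real-valued on the positive-definite locus, $F$ itself must be conjugation-invariant.

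For part (ii), I would specialize $B = 0$, so $M \in M_n(\C)$ and $\chi_M = \left(\begin{smallmatrix} M & 0 \\ 0 & \overline{M}\end{smallmatrix}\right)$ is block diagonal. Then $\det(\chi_M) = \det(M)\det(\overline{M}) = \det(M)\overline{\det(M)} = |\det M|^2$; but when $M$ is additionally complex-Hermitian (the ambient hypothesis giving $\det M$ real), this is just $\det(M)^2$, matching the claimed formula. Here I would be careful to state part (ii) in the generality intended—if $M$ is merely complex rather than assumed Hermitian, the honest statement is $\det(\chi_M) = \det(M)\det(\overline{M})$, which equals $\det(M)^2$ precisely when $\det(M)$ is real.

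The main obstacle I anticipate is the conjugation-invariance claim for $F$ and pinning down the correct normalization so that $F^2$ equals $\det(\chi_M)$ exactly (rather than up to a factor of $\det(J_n) = 1$, which fortunately is benign here since $\det J_n = 1$). Establishing that the Pfaffian square root is not merely real-valued but a \emph{polynomial} invariant under entrywise complex conjugation requires identifying how conjugation acts on the Pfaffian of $J_n\chi_M$; the $J_n$-reality relation $\overline{\chi_M} = J_n \chi_M J_n^{-1}$ is exactly the tool that controls this, and verifying it forces $F$ to be conjugation-stable is the delicate bookkeeping step. The rest—the entrywise translation of the Hermitian condition and the block-diagonal computation in (ii)—should be routine.
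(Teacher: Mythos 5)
The paper does not in fact prove this proposition: it is stated as a known result with references to Moore and Tignol, and the remark immediately following identifies the square root $F$ with the reduced Pfaffian norm $\mathrm{Nrp}$. Your proposal therefore supplies a proof where the paper gives none, and the route you choose --- realising $F$ as $\mathrm{Pf}(J_n\chi_M)$ by combining Hermiticity of $\chi_M$ with the reality relation $\overline{\chi_M}=J_n\chi_M J_n^{-1}$ from \S\ref{sect: QuatMatrices}~\emph{(i)} --- is exactly the standard construction of that reduced Pfaffian norm, and it does go through: $(J_n\chi_M)^T=\chi_M^T J_n^T=(J_n\chi_M J_n^{-1})(-J_n)=-J_n\chi_M$, and $\det J_n=1$, so $\det\chi_M=\mathrm{Pf}(J_n\chi_M)^2$ with $\mathrm{Pf}(J_n\chi_M)$ a polynomial in the entries. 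Your caveat in (ii) is also correct and worth making explicit: for general complex $M$ one only gets $\det(\chi_M)=\det(M)\det(\overline{M})$, and the stated formula $\det(M)^2$ requires $\det M\in\R$; in every application in the paper the matrix is Hermitian or real symmetric, so this is harmless.

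Two points need tightening. First, your conjugation formula is off: for $b\in\C$ one has $\overline{\Qj b}=-\Qj b$, not $-\Qj\overline{b}$, so $\overline{A+\Qj B}=\overline{A}-\Qj B$. The condition $M^T=\overline{M}$ then gives $A^T=\overline{A}$ and $B^T=-B$, which are precisely the conclusions you state ($A$ Hermitian, $B$ antisymmetric) --- but the intermediate formula you wrote would instead force $B$ to be anti-Hermitian, so the slip should be corrected even though it does not propagate. Second, the conjugation-invariance of $F$ should not be routed through ``real-valued on the positive-definite locus,'' which as written is only a continuity heuristic; it follows directly from functoriality of the Pfaffian: since $J_n\overline{\chi_M}=J_n^2\chi_M J_n^{-1}=\chi_M J_n=J_n(J_n\chi_M)J_n^T$, one gets $\overline{\mathrm{Pf}(J_n\chi_M)}=\mathrm{Pf}(J_n\overline{\chi_M})=\det(J_n)\,\mathrm{Pf}(J_n\chi_M)=\mathrm{Pf}(J_n\chi_M)$. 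With these repairs the argument is complete.
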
 
Although we shall not use it, the Moore determinant $F=\mathrm{Nrp}(M)$ is a canonical solution to \eqref{detissquare} which satisfies $\mathrm{Nrp}(M) = \det(M)$ for all $M \in M_n(\C)$.

\section{Moduli of marked metric graphs with masses and momenta} 
We consider a moduli space of metric graphs with external legs and masses.  It was previously studied in \cite{ModuliColoured}. 
\subsection{Graphs with external legs and massive edges} 

\subsubsection{Combinatorial masses}
Let $M\geq 0$ be a fixed integer. 
For a connected graph $G$, the assignment of a mass to each internal edge $e\in E(G)$ is encoded by
\[
 \underline{M} : E(G) \To \{0,1, \ldots, M\} \ .
\]
The labelling $0$ denotes the zero mass.

\subsubsection{Physical masses} We fix a finite set of non-zero masses $m_i\in \R^{\times}$, for $i=1,\ldots, M$, and set $m_0=0$ to be the zero mass, i.e., $m: \{0,1\ldots, M\} \rightarrow \R$.
 Only the  mass   squares  $m_i^2$ will play a role in the theory. The mass $m_i$ denotes the mass of a  particle of type $i$, and there is no need for the $m_i$ to be distinct. 
When drawing Feynman diagrams, our convention is to depict massless edges with a single line, and massive edges with a doubled line. 

Two cases are of particular interest, namely the `massless case'  $M=0$,  when all edges have mass zero;
 and the  `generic mass case', when  $M=e_G$,  and $\underline{M}: E(G) \overset{\sim}{\rightarrow} \{1,\ldots, e_G\}$  is a bijection.

\subsubsection{Combinatorial momenta}  Let $n\geq 0$ be a fixed integer.  If $n\geq 1$, the data of $n$ external legs or `markings'  is given by a map
\[ 
\underline{n} : \{1,\ldots, n\} \To V(G)
\]
which can be realised combinatorially by attaching a half-edge, labelled $i$, to the vertex $\underline{n}(i)$.  These external half-edges will always be oriented towards their endpoint, i.e., all external momenta are incoming. The map $\underline{n}$ does not need to be injective or surjective.
In the case when $n=0$,  or when the image of $\underline{n}$ reduces to a single vertex, the associated graph is called  a vacuum diagram, or  momentumless. 

\subsubsection{Physical momenta}
 Let $d \geq 0$. Consider $n$ vectors in Euclidean space
 \[
Q = \{  q_i \in \R^d \quad \hbox{ for } i =1,\ldots, n \}
 \]
 which satisfy  momentum conservation 
 \begin{equation}  \label{MC}
 \sum_{i=1}^n  q_i =0\ .
 \end{equation} 
 The external half-edge $i$ represents an incoming particle carrying  momentum $q_i$. 
 Given such a collection of momenta, we obtain a vector at every vertex
 \begin{equation} \label{MomentumateachVertex} 
  \underline{q} = \left(q_v\right)_v  \in   \left(\R^d\right)^{V(G)} \quad \hbox{ where  }\quad  q_v =\sum_{ i: \underline{n}(i)=v}  q_i \ , 
 \end{equation} 
which is defined to be $0$ if vertex $v$ has no external edges attached to it, and otherwise is the sum of all incoming momenta at $v$.
\subsection{A category of graphs} Let us fix integers $g, n, M\geq 0$. 

\begin{defn} A \emph{marked, weighted metric graph with masses and momenta} is  a tuple
$
(G, w, \underline{n}, \underline{M}) 
$
where $G$ is a connected graph, and 
\begin{eqnarray}  \label{ExtradataonGraphs}
w \ : \  V(G) & \To&  \N \cup \{0\} \\ 
\underline{n} \ : \  \{1,\ldots, n\} & \To & V(G) \nonumber \\ 
\underline{M} \ : \  E(G)  & \To & \{0,\ldots,M\} \nonumber 
 \end{eqnarray} 
 are the data of: a weighting $w$ for every vertex; $n$ external half lines $\underline{n}$; and an assignment $\underline{M}$ of a  mass label to each internal edge of $G$.  Its genus is 
 \[ 
g(G) =  h_G + \sum_{v\in V(G)} w(v) 
\]
where $h_G$ is the first Betti (loop) number of $G$. Such a graph is called \emph{stable} if, for every vertex $v\in V(G)$, one has the inequality 
$ 2 w(v) -2 + d(v)   \geq 0 $ 
 where $d(v)$ is the total degree at $v$ (including both internal and external half-edges). 
 \end{defn} 
 
 An isomorphism of graphs  $(G, w, \underline{n}, \underline{M})$  is an isomorphism of the underlying graphs which respects the additional data \eqref{ExtradataonGraphs}.
 
 For every edge $e\in E(G)$, the contraction 
 \[(G, w, \underline{n}, \underline{M} ) /e = (G/e, w', \underline{n}', \underline{M}' ) \ ,
 \]
 of $e$ is defined as follows. 
 If $e$ has distinct endpoints, then $G/e$ is the graph with edge $e$ removed and its endpoints $\{s(e),t(e)\}$ identified.  As usual, the weight of the new  vertex is the sum of weights $w(s(e))+ w(t(e))$, and $\underline{n}'$ is the composite  of $\underline{n}$  
 with the quotient map $V(G) \rightarrow V(G')= V(G)/ \{s(e) = t(e)\}$.  The map $\underline{M}' $ is the restriction of $\underline{M} $ to the subset $E(G/e) = E(G)\backslash \{e\}$. 
 
 When $e$ is a tadpole or self-edge, and its endpoints coincide $v=s(e)=t(e)$,   the contraction   $G/e$ is simply the deletion of edge $e$. The weight of the vertex  $v$  in $G/e$ is defined to be one more than its weight in $G$: $w'(v)= w(v) +1$. 
 The map $\underline{n}$ is unchanged, and $\underline{M}'$ is defined by restriction of $\underline{M}$ to $E(G) \backslash \{e\}$ as before. 
  
  Edge contraction preserves both the genus and the property of stability.

  \begin{defn}
   For fixed $g,n, M$, let 
  $I_{g,n,M}$ denote the category whose
  objects
  $   (G, w, \underline{n}, \underline{M})  $
     are connected stable graphs  of  genus $g$, and whose morphisms are generated  by   isomorphisms and edge contractions. 
  \end{defn} 
When $M=0$, it is nothing other than the category of marked stable metric graphs of genus $g$ with $n$ external markings. 
The category $I_{g,n,M}$ has a final object given by  the graph consisting of a single vertex of weight $g$, and $n$ external legs.

\subsection{Associated moduli space}
Based on ideas in  \cite{ACP,TopWeightAg},  one can efficiently define the moduli space of marked metric graphs as follows.

Let $\Top$ denote the category whose objects are topological spaces and whose morphisms are continuous maps. 
Consider  a functor $\cone: I \rightarrow \Top$, where $I$ is a finite  (diagram) category.  Its  realisation $X_{\cone}$ is the topological space
\[ X_{\cone} = \varinjlim_{i \in I} \cone(i)  \ .
\] 
A morphism from   $\cone: I \rightarrow \Top$   to
$\cone': J \rightarrow \Top$   is the data of a functor $j: I\rightarrow J$, together with a natural transformation from $\cone$ to $\cone' \circ j$.
Such a morphism induces a continuous map $X_{\cone} \rightarrow X_{\cone'}$  between the associated topological realisations. 

\begin{defn}  Define a functor 
\begin{equation}
\cone: I^{\mathrm{opp}}_{g,n,M} \To \Top\ 
\end{equation} 
as follows. To the object    $(G,w,\underline{n}, \underline{M})$  it associates the  closed cell 
\[
\overline{C}(G)= \overline{C}(G,w,\underline{n}, \underline{M}) = \R_{\geq 0}^{E(G)}, 
\]
 given by the space of non-negative edge lengths $\ell_e$, for $e\in E(G)$. It   only depends on the underlying graph $G$. 
 An  isomorphism 
$(G,w,\underline{n}, \underline{M}) \cong (G',w',\underline{n}', \underline{M}')$
induces a bijection  $E(G)\cong  E(G')$, and   a linear isomorphism on the corresponding cells. 
Contraction of an edge $e \in E(G)$ induces a natural  inclusion of cells  \[
\iota_e :  \overline{C}(G/e ) \hookrightarrow  \overline{C}(G) \]
upon identifying $ \overline{C}(G/e)$  with the  subset $\ell_e=0$.
\end{defn}

 \begin{defn}
Define the tropical moduli space to be its topological realisation
\[ 
\mathcal{M}^{\trop}_{g,n,M} =  \varinjlim_{G \in  I^{\mathrm{opp}}_{g,n,M}}  \overline{C}(G) \  .
\]
\end{defn}
Although the cases when $g=0$ are topologically interesting, they will play no role in this paper: in the absence of internal edges the canonical differential forms we shall consider are identically zero and only interesting for $g>0$. 

Similarly, one can consider the functor $L \cone: I^{\mathrm{opp}}_{g,n,M} \rightarrow \Top\ $ which assigns to every object
  $(G,w,\underline{n}, \underline{M})$  the closed simplex
\[ 
\overline{\sigma}_G =   \{ \left(\ell_e\right)_e \in \R_{\geq 0}^{E(G)}   \ : \ \sum_{e\in E(G)} \ell_e =1 \}  
\]
where the edge lengths are normalised to sum to $1$.  If $E(G)$ is empty, then $\overline{\sigma}_G$ is defined to be the empty set.  

 It defines a functor for the same reasons as above.  The tropical moduli space $\mathcal{M}^{\trop}_{g,n,M}$ is a cone whose cone point $p$  is the cell associated to the final object in $I_{g,n,M}$.   The  link $\mathcal{M}^{\trop}_{g,n,M} \backslash \{p\}/\R^{\times}_{>0} $ is isomorphic to the topological space
\[ 
L\mathcal{M}^{\trop}_{g,n,M} =  \varinjlim_{G \in  I^{\mathrm{opp}}_{g,n,M}}  \overline{\sigma}_G \ .
\]

\begin{rem} In the case when $M=0$, one retrieves the definition of the moduli space $ \mathcal{M}^{\trop}_{g,n}$ of curves with $n$ marked points.  In general, the data of masses $\underline{M}$ is not completely anodine, since assigning distinct masses to the edges of a graph will in general reduce the size of its automorphism group. 
\end{rem}
The functor $\cone$ defined above factors through a category of rational polyhedral cones as in \cite{ACP,TopWeightAg}, which can in turn be upgraded to a category built out of affine spaces equipped with certain linear embeddings. There is a variant for the links  $L\cone$  involving projective spaces.   
These notions  will be pursued elsewhere.

\subsection{Homology and graph complexes}
The homology of the spaces $ \mathcal{M}^{\trop}_{g,n,M}$ can be expressed in terms of complexes of graphs in the category $I_{g,n,M}$ via a 
variant of the classical result which relates cellular and singular homology   (see \cite[Proposition 2.1]{TopWeightAg}, \cite[Theorem 4.2]{ACP}).  
We refer the reader to   \cite{ModuliColoured} for specific results about the homology of  $ \mathcal{M}^{\trop}_{g,n,M}$ when $M>0$. 

\section{Interpretations of momenta}  \label{sect: MomInterp}
 
\subsection{Genericity}
Let $G$ be a connected graph with external momenta $q_i \in \R^d$, and internal masses $m_e$. We shall say that momenta are \emph{generic} if  
\begin{equation}  \label{Genericity} 
\left(\sum_{i\in I} q_i \right)^2 \neq 0 
\end{equation} 
for all  strict subsets $I $ of the set of external half edges  \cite[(1.18)]{Cosmic}. It is automatic from our definitions that  all non-trivial edge masses $m_e$ are non-zero.  Note that the genericity conditions for complex-valued momenta, which will  not be considered  in this paper, are more general than the Euclidean condition above.

\subsection{Momentum routing} 
Choose an orientation   on every  edge of $G$, such that  external edges are oriented inwards.

\begin{defn}  
A  \emph{momentum routing}  relative to the edge orientation  is the data, for every internal  edge  $e\in E(G)$,  
  of a  momentum vector
\[
\mu_e \in \R^d  
\]
such that momentum conservation holds at every
 vertex $v\in V(G)$, i.e., 
\begin{equation} \label{localMC} 
\sum_{s(e) = v} \mu_e- \sum_{t(e) = v}  \mu_e  = 0 \ ,
\end{equation} 
where the first sum is over all edges emanating from $v$, and the second is over all edges terminating at $v$. In the second sum we include external half edges, for which we write $\mu_e= q_e$ for the incoming  external momentum along $e$. 
\end{defn}
To see that momentum routings exist, we can reformulate \eqref{localMC} as follows. Let us write $R = \R^d$, and consider the exact sequence:
\[ 
0 \To H_1(G;R) \To R^{E_G} \overset{\partial}{\To} R^{V_G} \To R \To 0 \ ,
\]
where $\partial(e) = t(e) - s(e)$. 
The data of external momenta defines a vector  $ \underline{q} \in R^{V_G}$ where $ \underline{q}= (q_v)_{v\in V_G}$   via  \eqref{MomentumateachVertex}.  The condition of momentum conservation \eqref{MC} is precisely the statement that its image under the map $R^{V_G} \rightarrow R$ is zero. Therefore $\underline{q} \in \mathrm{Im} (\partial)$. A choice of momentum routing is any element 
\[ 
\underline{\mu}= (\mu_e)_{e\in E_G} \in  R^{E_G}  \quad \hbox{ such that } \quad \partial  \underline{\mu} = \underline{q}  \ .
\] 
Any two  choices of momentum routing differ by an element of $H_1(G;R)$.

Note that one may always assume that $\mu_e=0$ for every self-edge or tadpole $e$. 

\begin{rem} 
One can define a canonical momentum routing by demanding that $\underline{\mu} \in R^{E_G}$ be orthogonal to the subspace $H_1(G;R)$ with respect to the inner product on $R^{E_G}$ for which the edges form an orthonormal basis. However, the canonical routing is not respected by contraction of edges. 
\end{rem} 
\subsection{Complex momenta}  Suppose that the number of spacetime dimensions is $d=2$, and  the external momentum associated with the $\mathrm{k}^{th}$ external half edge is 
\begin{equation} \label{qinR2}
q_k \in \R^2\ .
\end{equation} 
Having fixed a  choice of square root $i$ of $-1$,   there is an identification
\begin{eqnarray}
 \R^2  & \cong &  \C \nonumber \\ 
 q=(x,y) & \mapsto & \cq= x+iy \nonumber 
\end{eqnarray} 
which is compatible with the Euclidean norm. In other words, if $q_k = (x_k,  y_k)$ and $\cq_k = x_k+iy_k$ for $k=1,2$, then  the Euclidean inner product is
\begin{equation} \label{EuclideanToComplex} 
2\,  q_1. q_2  =   2 \left(x_1 x_2 + y_1 y_2\right) = 
 2 \, \mathrm{Re} (\cq_1 \overline{\cq}_2  )  =  \cq_1 \overline{\cq}_2 + \overline{\cq}_1 \cq_2  
 \end{equation}
and   $|\!|q_1|\!|^2 = \cq_1 \overline{\cq}_1$. 
We can always assume \eqref{qinR2} holds when the total number of external momenta is at most $3$, since by momentum conservation, the Feynman integral only depends on two external momenta, which  lie in a Euclidean  plane.

\subsection{Quaternionic momenta} 
\label{sect: QuatMomenta} Suppose that $G$ is a connected graph as above, but the number of Euclidean dimensions is $d=4$. The external momenta satisfy 
\begin{equation} \label{qinR4}
q_k = (q_k^{(1)}, \ldots, q_k^{(4)}) \in  \R^4\ .
\end{equation}
Let us denote the corresponding quaternions   \S\ref{sect: Quaternions} by  $\cq_k  \in \Qt $, where  
\[
 \cq_k = q_k^{(1)}\Qu + q_k^{(2)}\Qi +   q_k^{(3)}\Qj 
+ q_k^{(4)}\Qk \ , 
\]
which we may identify with their image in $M_2(\C)$ under $\chi$ \eqref{chidef}. The inner product will be written interchangably in either notation via the identity:
\begin{equation} 
  \cq_i \overline{\cq}_j + \cq_j \overline{\cq}_i   =  (2  q_i . q_j )  \Qu =  2 \sum_{k=1}^4  q_i^{(k)} q_j^{(k)}\ .
\end{equation} 

\section{Generalised Laplacians for graphs with external legs} \label{sec: GenLaplace}

We define a  generalised graph Laplacian of a  connected graph $G$ 
with external momenta subject to momentum conservation,  and internal particle masses.
In the following, let  $\CC$ be a ring, which is not necessarily commutative, equipped with an anti-commutative involution $\iota: \CC \rightarrow \CC$ such  that its invariant subspace $\mathcal{R} = \{ c\in \CC: \iota(c) = c\}$ is commutative. 

Examples include:
\begin{enumerate}
\item $\CC= \C$ the complex numbers, equipped with complex conjugation.
\vspace{0.05in} 

\item  $\CC = \Qt$ the ring of quaternions, with quaternionic conjugation. 
\vspace{0.05in} 

\item $\CC= \Q[ \mu_e, \overline{\mu}_e,  m_i] $ an abstract ring of  kinematic variables, where $\iota$ is the $\Q$ linear map acting trivially on $m_i$, and such that $\iota (\mu_e) = \overline{\mu}_e$, $\iota(\overline{\mu}_e) = \mu_e$. 
\vspace{0.05in} 

\item $\CC= \Q[ \lambda_e, \overline{\lambda}_e,  \nu_e, \overline{\nu}_e, m_i] $ as in $(3)$, 
where $\iota( \lambda_e) = \overline{\lambda}_e$,  and  $\iota( \nu_e) = -\nu_e$, $\iota( \overline{\nu}_e) = -\overline{\nu}_e$. The multiplication is given by viewing $\mu_e = \lambda_e + \Qj \nu_e$ as an abstract quaternion, or equivalently as  a $2 \times2$ matrix $\left( \begin{smallmatrix} 
\lambda_e & \nu_e \\ -\overline{\nu}_e &  \overline{\lambda}_e \end{smallmatrix} \right)$. 
 \vspace{0.05in}

\end{enumerate}
In the first instance, the reader may wish to consider only the  case $(1)$ where  $\CC=\C$. 
The generalised graph Laplacian requires the data of: 
\begin{itemize}

\item  an external momentum vector $\underline{q} \in \CC^{V_G}$ subject to momentum conservation  

\item a vector of internal masses $\underline{m}\in \CC^{E_G}$ which is invariant under $\iota$.  

\end{itemize} 
It is defined in two stages: first we define the graph Laplacian for zero masses, and then modify it very slightly to take into account the internal masses. 

\subsection{Definition of the generalised graph Laplacian} \label{sect: DefGraphLaplacianComplex} 
Choose an orientation on the internal edges of $G$. 
Consider the exact sequence 
\[
0 \To H_1(G; \CC)\overset{\mathcal{H}_G}{\To} \CC^{E_G}  \overset{\partial}\To \CC^{V_G} \To \CC \To 0  \ . 
\] 
Since the momentum vector $\underline{q} \in \CC^{V_G}$ maps to zero, we obtain an  extension 
\begin{equation} \label{ExtensionofH1G} 
0 \To H_1(G;\CC)  \To \mathcal{E}_G \To \CC \To 0
\end{equation} 
where $\mathcal{E}_G =\{ \underline{\mu}  \in \CC^{E_G} :  \partial( \underline{\mu}) \in \CC \underline{q}  \}$ and the element $1\in \CC$ in the right-hand term is identified with $\underline{q} \in \CC^{V_G}$. We shall call this the \emph{extension by momenta}. 

  Consider the pairing  satisfying
\begin{eqnarray} \label{HermformonCEG}
\CC^{E_G} \times \CC^{E_G} &  \To &  \CC[x_e, e\in E_G]   \\ 
\langle \alpha e , \beta e'\rangle_0  & = &  \alpha\,  \iota(\beta) \, x_e \delta_{e,e'} \ . \nonumber 
\end{eqnarray}
It restricts to a Hermitian form (with respect to the involution $\iota$) on $\mathcal{E}_G$, which we also denote by $\langle \ , \rangle_0$, and which may be interpreted as a  $\CC$-linear map  $ y\mapsto \left( x\mapsto \langle y, x  \rangle_0\right) $ which we call the `graph Laplacian for zero masses':
 \[
\widetilde{\Lambda}_G(\mu,0):  \mathcal{E}_G \rightarrow  \mathrm{Hom}_{(\CC,\iota)}
(\mathcal{E}_G,  \CC[x_e] ) \ .
 \] 
 Here, $\mathrm{Hom}_{(\CC,\iota)}(A,B)$ denotes anti-linear maps $\phi:A \rightarrow B$ satisfying   $\phi(\lambda x) = \iota(\lambda) \phi(x)$ for all $\lambda \in \CC$. 
Equivalently,  \eqref{HermformonCEG} may be viewed as  a $\CC$-linear map:
  \[
  D_G : \CC^{E_G} \To  \mathrm{Hom}_{(\CC,\iota)}(\CC^{E_G},  \CC[x_e] )
  \] 
  using which   one has the identity
$
\widetilde{\Lambda}_G(\mu,0)= \widetilde{\mathcal{H}}_G^* D_G \widetilde{\mathcal{H}}_G    $
 where  $\widetilde{\mathcal{H}}_G: \mathcal{E}_G \rightarrow \CC^{E_G}$ is inclusion and $X\mapsto X^*$ denotes Hermitian conjugation, i.e., $X^* = \iota(X)^T$.

The subspace $H_1(G;\CC) \subset \mathcal{E}_G$ contains the subspace  $H_1(G;\Z)$, which is invariant under the involution $\iota$ .
Restricting the Hermitian form $\langle \ , \  \rangle_0$    to the space $H_1(G;\Z)$ therefore  defines a symmetric bilinear form which is nothing other than the usual graph Laplacian $\Lambda_G$, which depends neither on masses nor momenta. 

The masses of $G$  are encoded by a single element
\[ 
\underline{m}= \sum_{e} m_e e   \  \in \  \CC^{E_G}
\]
where $m_e = \iota(m_e)$. 
The norm of this element with respect to \eqref{HermformonCEG}  is 
\[
|\!| \underline{m} |\!|^2 =  \langle  \sum_{e} m_e e ,  \sum_e m_e  e \rangle = \sum_e m_e^2 x_e \]
We define the \emph{mass-correction} Hermitian inner product $\langle   \ , \ \rangle_m$  on $ \mathcal{E}_G$ to be  the unique inner product vanishing identically on $H_1(G;\CC)$:
 \[
 \langle H_1(G;\CC)  \ ,  \   \mathcal{E}_G \rangle_m =  \langle    \mathcal{E}_G  \ , \ H_1(G;\CC)  \rangle_m=0 \]
  but for which 
 $ \langle f, f \rangle_m =  |\!| \underline{m} |\!|^2$ for any $f\in \mathcal{E}_G$ whose image is $1\in \CC$ under the natural map in  \eqref{ExtensionofH1G}.  Consider the sum of the two Hermitian forms
 \[ 
 \langle  \ , \ \rangle =  \langle  \ , \ \rangle_0 +  \langle  \ , \ \rangle_m  \ .
 \]
The associated $\CC$-linear map defines the generalised graph Laplacian 
\begin{equation}  \label{def: LambdaGqm} 
\widetilde{\Lambda}_G(\mu,m) : \mathcal{E}_G \To  \mathrm{Hom}_{(\CC, \iota)} (\mathcal{E}_G,  \CC[x_e] )  \ . 
\end{equation}

\subsection{Generalised graph Laplacian matrix}
The generalised graph Laplacian may be computed  by splitting \eqref{ExtensionofH1G}. Number the edges of $G$ from $1,\ldots, N$.  Choose a basis $c_1,\ldots, c_h$ of $H_1(G;\Z)$ and a routing of edge momenta $\underline{\mu} \in \CC^{E_G}$.

Define a  matrix $\widetilde{\mathcal{H}}_G$ with  $N$ rows, $h+1$ columns and entries in $\CC$  as   follows.
If $1 \leq k \leq h$ then the  entries $(\widetilde{\mathcal{H}}_G)_{e,k}$ count the number of times (with multiplicity) that the oriented edge $e$ appears in the cycle $c_k$. Thus the first $h$ columns consists of  the usual \emph{edge-cycle incidence matrix} $\mathcal{H}_{G}$. 
 The final column is defined by 
\[
\left(\widetilde{\mathcal{H}}_G\right)_{e,h+1} = \mu_e \qquad 1\leq e \leq N\ .
\]
Now consider  the matrix $\widetilde{M}_G$ with $N$ rows and $h+1$ columns  
\[ 
(\widetilde{M}_G)_{e,c} = \begin{cases}  m_e  \quad \hbox{ if  } \quad c= h+1 \ , \\ 0  \quad \hbox{ otherwise } . \end{cases}\ .
\]
Thus $\widetilde{M}_G$ is zero except for the last column, which is the vector of masses $m_e$.

 The generalised graph Laplacian matrix is defined to be 
 \[
 \widetilde{\Lambda}_G(q,m) =   \widetilde{\mathcal{H}}_G^{*}  D_G \widetilde{\mathcal{H}}_G +    \widetilde{M}_G^*  D_G \widetilde{M}_G\ , 
 \]
 where $X\mapsto X^*$ denotes  Hermitian conjugation, i.e., $X^* = \iota(X)^T$. Since $\widetilde{M}_G$ has entries which are invariant under $\iota$,  we have  $\widetilde{M}_G^{*} = \widetilde{M}_G^T$. Note that the matrix 
  $ \widetilde{M}_G^*  D_G \widetilde{M}_G$  only has a single non-zero entry in the bottom right-hand corner:
 \[
  \widetilde{M}_G^*  D_G \widetilde{M}_G  = \sum_{e=1}^N  \begin{pmatrix}  0  & \ldots  & 0 & 0  \\
\vdots  &  \ddots & \vdots  & \vdots  \\
  0  &    \ldots & 0 & 0  \\
     0    & \ldots &  0  &   m_e^2 \alpha_e  \\
      \end{pmatrix} \ .
      \]
  In general, the matrix $\widetilde{\Lambda}_G$  has the following block-matrix form:
  \begin{equation}\label{LambdaGblockmatrix} 
    \widetilde{\Lambda}_G = \left(
  \begin{array} {ccc|c}
  &  & &  c_1(\alpha ,\mu ) \\
  &  \Lambda_G & &  \vdots \\
   &  & &  c_h(\alpha ,\mu ) \\
   \hline
 c_1(\alpha , \iota(\mu) )    & \ldots  & c_h(\alpha , \iota(\mu)) & X_G 
  \end{array}  \right)  \ ,  \end{equation}
  where $\Lambda_G = \mathcal{H}_G^T D \mathcal{H}_G $ is the usual graph Laplacian matrix, 
  \begin{equation} \label{Xdef} 
   X_G = \sum_{e=1}^N  (\mu_e \iota({\mu}_e)  + m_e^2) \alpha_e  \ ,  
  \end{equation} 
and where for any $\Z$-linear combination of edges $c = \sum_{e\in E(G)} p_e e$, we write
\[
c(\alpha, \mu) = \sum_{e \in E(G)} p_e \mu_e \alpha_e \ ,
\]
and $c(\alpha , \iota(\mu))$ for its image under $\iota$.

\subsubsection{Change of edge orientations}
Changing the orientation of an edge does not modify the matrix $\widetilde{\Lambda}_G$.  To see this, observe that 
switching the orientation of any number of edges amounts to multiplying the matrix $\widetilde{\mathcal{H}}_G$ on the left by a diagonal matrix $E$ with entries in $\{1, -1\}$: reversing the orientation of edge $e$ results in a change of sign for the associated momentum $\mu_e$.  The claim follows since $E^* D_GE=D_G$ and since $\widetilde{M}_G$  does not depend on the edge orientations. 

\subsubsection{Change of basis} \label{sect: ChangeofBasis} The matrix $\widetilde{\Lambda}_G$  depends on the choice of basis for the homology $H_1(G;\Z)$.  Changing basis modifies the matrix $\widetilde{\Lambda}_G$ by 
\begin{equation} 
\label{ChangeLambdabyP}
 \widetilde{\Lambda}'_G =  \widetilde{P}^T \widetilde{\Lambda}_G \widetilde{P}
 \end{equation} 
where $\widetilde{P}$ is an invertible block matrix of the form 
\begin{equation} \label{formofP} 
\widetilde{P} = 
\left(
  \begin{array} {ccc|c}
  &  &   &  0\\
  &  P  & &  \vdots \\
    &  & &   0 \\
   \hline
0     & \ldots  & 0  & 1 
  \end{array}  \right)  \ , 
\end{equation}
where  $P \in \GL(H_1(G;\Z)) $ is a change of basis matrix  with integer entries.   In particular,  one has $\det(P)^2 =\det(\widetilde{P})^2=1$.

\subsubsection{Change of momentum routing} 
Two momentum routings $\overline{\mu}$, $\overline{\mu}'$ differ by an element of $H_1(G;\CC)$. Therefore changing momentum routing (or changing 
a choice of splitting of \eqref{ExtensionofH1G}) is equivalent to modifying the matrix $\widetilde{\Lambda}_G$ by
\begin{equation} 
\label{ChangeLambdabyS}
 \widetilde{\Lambda}'_G =  \widetilde{S}^* \widetilde{\Lambda}_G \widetilde{S}
 \end{equation} 
where $\widetilde{S}$ is an invertible block matrix of the form 
\begin{equation} \label{formofP} 
\widetilde{S} = 
\left(
  \begin{array} {ccc|c}
  &  &   &  s_1\\
  &  I  & &  \vdots \\
    &  & &   s_h \\
   \hline
 0 & \ldots  & 0  & 1 
  \end{array}  \right)  \ , 
\end{equation}
where $I$ denotes the $h_G \times h_G$ identity matrix and $s_1,\ldots, s_h\in \CC$. 

\subsubsection{Group of indeterminacy}
The  graph Laplacian matrix $\widetilde{\Lambda}_G$ is therefore ambiguous up to the action of the semi-direct product
\begin{equation} \label{GroupofIndeterminancy} 
 \CC^{g}    \rtimes\GL_g(\Z) 
\end{equation}
where $P \in \GL_g(\Z)$ acts on  $S\in \CC^g$ via  $S\mapsto PSP^{-1} $.    Note that it only involves the additive structure on $\CC$, which is commutative, and not the multiplicative structure, which may not be.  The group \eqref{GroupofIndeterminancy} can be viewed as the group of automorphisms of the extension 
$\mathcal{E}_G$  \eqref{ExtensionofH1G} which respects the integral structure $H_1(G;\Z) \subset H_1(G;\CC)$. 
Recall that $   \CC^{g}    \rtimes\GL_g(\Z) 
 $ is the set 
$(S, P) \in \CC^g \times \GL_g(\Z) ,$ which can be identified with the set of matrices
\begin{equation} 
SP = 
\left(
  \begin{array} {ccc|c}
  &  &   &  s_1\\
  &  P  & &  \vdots \\
    &  & &   s_h \\
   \hline
 0 & \ldots  & 0  & 1 
  \end{array}  \right)  \ , 
\end{equation}
equipped with the group law $(S_1,P_1) (S_2,P_2) = (S_1 P_1 S_2 P_1^{-1}, P_1P_2)$. 

\subsection{Complex adjoint of the generalised graph Laplacian}
Let $G$ be a connected graph as before, and suppose that the external momenta lie in $\CC = \Qt$, the ring of quaternions. 
The generalised graph Laplacian 
$
\widetilde{\Lambda}_G(\mu, m) 
$
is a quaternionic Hermitian form of rank $g+1$, and may be represented by a $(g+1) \times (g+1)$ Hermitian matrix with entries in $\Qt$. Its image under the complex adjoint map 
\eqref{chidef} is a  Hermitian complex matrix of rank $2g+2$.

In block matrix form \S\ref{sect: QuatMatrices} \emph{(i)}, one may represent a  choice of complex adjoint graph Laplacian 
in the form $M_{2n}(\C) \cong M_n(M_2(\C))$ as follows:
\begin{equation} \label{ExpandedLaplacianBlockMatrix} 
\chi_{ \widetilde{\Lambda}_G(\mu, m) } =  
\left(
\begin{array}{cccc|cc}
   &   &    &&   c_1(\lambda,\alpha)  &  c_1(\nu, \alpha)   \\
  &   \chi_{\Lambda_G}&    && - c_1(\overline{\nu},\alpha)  &  c_1(\overline{\lambda},\alpha)      \\
  &  &  & &&   \vdots \\ 
  &  &  &   &&  \\ \hline  
   c_1(\overline{\lambda},\alpha) &  - c_1(\nu,\alpha)    & \ldots && X & 0  \\
   c_1(\overline{\nu},\alpha) &   c_1(\lambda,\alpha)   & \ldots &  &0 &  X 
\end{array}
\right)
\end{equation} 
where  the quaternionic momenta are $\mu_e = \lambda_e +  \Qj\,  \nu_e $, for $\lambda_e, \nu_e \in \C$,  where
\begin{equation} \label{expandedLambdaG} 
\chi_{\Lambda_G}=  
\left(
\begin{array}{ccccc}
  \Lambda_{1,1}&  0    &  \Lambda_{1,2} & 0 & \ldots \\
  0 &     \Lambda_{1,1} & 0  &    \Lambda_{1,2}  &    \\
   \Lambda_{2,1} & 0 &  & &  \\ 
 0 &   \Lambda_{2,1} & \ddots  &  &   \vdots \\
\vdots  &   &      &  \Lambda_{g,g} & 0     \\
  & &  \ldots & 0 &   \Lambda_{g,g} \\
\end{array}
\right)
\end{equation} 
is the complex adjoint  of the ordinary graph Laplacian $\Lambda_G= (\Lambda_{ij})_{ij}$, and 
\[ 
X= \sum_{e=1}^{e_G} \left( m_e^2 + \lambda_e\overline{\lambda}_e + \nu_e \overline{\nu}_e \right)  \alpha_e  \ .
\]
In the  massless case $m_e=0$, these matrices  were considered in \cite{BlochKreimer}.

The reader may prefer to use the  equivalent matrix  representation  \S\ref{sect: QuatMatrices} \emph{(ii)}.

\section{Symanzik and spanning forest polynomials} \label{sect: Symanzik}
We  recall the definition of Symanzik polynomials and  relate the  polynomial $\phi_G(q)$ to a
sum of spanning forest polynomials relative to a  choice of momentum routing.  
For further background on Symanzik polynomials and some of their basic properties, we refer the reader to  \cite[\S1]{Cosmic}, for example. 
\subsection{Definition of Symanzik polynomials}  \label{sect: DefSymanzik} 
We recall the definition of the graph polynomials which arise in  Feynman integrals in parametric form, since conventions can vary slightly in the literature.  Let $d\geq 0$ be any non-negative integer. 
In this section,  $G$ is any connected graph with external edges and external momenta 
$
q_i \in \R^d\ .$
 Every internal edge $e$ is assigned a mass $m_e \in \R$. 
  
\subsubsection{1st Symanzik polynomial}  
Recall that the first Symanzik polynomial (also known as the `Kirchhoff' or graph polynomial), is defined by 
\begin{equation}  \label{Psidef} 
\Psi_G = \sum_{T\subset G}  \prod_{e\notin T} \alpha_e
\end{equation} 
where the sum is over all spanning trees of $G$. It does not depend on the external half-edges of $G$. 
It is equal to the determinant of the ordinary graph Laplacian:
\begin{equation} \det \Lambda_G = \Psi_G\  . \end{equation}
It is homogeneous of degree $h_G$ and is not identically zero. 

\subsubsection{2nd Symanzik polynomial}  The second Symanzik polynomial  depends on external momenta. It is the homogeneous polynomial  of degree $h_G+1$
 defined by 
\begin{equation} \label{Phidef}
\Phi_G(q) =  - \sum_{T_1,T_2\subset G}    (q^{T_1}.  q^{T_2}) \prod_{e \notin T_1 \cup T_2} \alpha_e
\end{equation} 
where the sum is over all spanning 2-trees of $G$ (or spanning forests with exactly two connected components), and $q_{T_i}$ denotes the total momentum entering $T_i$. By momentum conservation $q^{T_1} + q^{T_2}=0$.  

\subsubsection{Dependence on masses} The  `2nd Symanzik' polynomial which occurs in the  (Euclidean) parametric representation of Feynman integrals is 
\begin{equation} \label{Xidef} 
\Xi_G(q, m) = \Phi_G(q) + \left(\sum_{e \in E_G} \alpha_e m_e^2 \right) \Psi_G
\end{equation} 
which is also homogenous of degree $h_G+1$.

\subsection{Vanishing of Symanzik polynomials}
\begin{defn} \cite[\S1.4]{Cosmic}
A subgraph $\gamma \subset E_G$ is \emph{momentum-spanning} if  all  non-zero  external momenta   $q_v$ of $G$ meet   vertices $v$ which lie  in  a single connected component of $\gamma$. It is called \emph{mass-spanning} if it contains all massive edges of $G$.

It is called \emph{mass-momentum spanning} (or \emph{m.m.} for short) if both  hold.  \end{defn} 

It is shown in \cite[\S1.6]{Cosmic} that if $\gamma \subset E_G$ and genericity \eqref{Genericity} holds,  then 
\begin{equation} 
\Xi_{G}(q,m) \Big|_{\alpha_e =0, e\in \gamma} \   =  \  \Xi_{G/\gamma}(q,m)  
\end{equation}  
is identically zero  if and only if $\gamma$ is an \emph{m.m.} subgraph. 
In this paper we work in  the Euclidean region (\emph{loc. cit. } \S1.7) which implies that  the coefficients of every monomial in the variables  $\alpha_i$ in   $\Xi_G$ are  non-negative (this would not be true in Minkowski space). If $\gamma$ is not \emph{m.m.} then  $G/\gamma$ has a scale and one shows that
\[  \Xi_{G/\gamma }(q,m)   >0    \]
in the region $\alpha_e>0$, for all $e\in E_{G/\gamma}=E_G \backslash E_{\gamma}$. 

\subsection{Spanning forest and Dodgson polynomials}
The second Symanzik  can be expressed using  spanning forest polynomials, which were defined in \cite{BrownYeats}.

\subsubsection{Spanning forest polynomials} 
\begin{defn} Let $P= P_1 \cup \ldots \cup P_r$ be any partition of a subset of vertices of $G$ into disjoint sets. The associated spanning forest polynomial is defined by 
\[
\phi_G^P = \sum_{\mathcal{F} } \prod_{e \in E(G) \backslash \mathcal{F}} \alpha_e
\]
where the sum is over spanning forests $\mathcal{F} = T_1 \cup \ldots \cup T_r$ of $G$ with exactly $r$ connected components $T_i$  such that 
\[
V(T_i) \cap P = P_i \qquad \hbox{ for } \quad 1\leq i \leq r \ .
\] 
In other words, each tree $T_i$ contains the vertices in $P_i$ and no other vertices of $P$. Trees consisting of a single
vertex are permitted.  
We  set $\phi_G^P=0$ if the sets $P_i$ are not all disjoint. 
\end{defn}

 \subsubsection{An identity for the 2nd Symanzik polynomial}
 \begin{prop}  \label{prop: PhiGasForest}  Let $G$ be as in the previous paragraph. For any choice of momentum routing, the  second Symanzik polynomial satisfies:
\begin{multline} \label{PhiGasForest} 
 \Phi_G(q) = \sum_{e\in E(G)}  \mu_e . \mu_e \, \alpha_e \Psi_{G\q e}  \\
 +  \sum_{\substack{e,f \in E(G) \\ e\neq f }} \mu_e. \mu_f \,  \alpha_e\alpha_f \left( \phi_G^{\{s(e),s(f)\} , \{t(e),t(f)\} } -  \phi_G^{\{s(e),t(f)\} , \{s(f),t(e)\} }\right)\ , 
\end{multline}
where we recall that  $\phi_G^{P_1, P_2}$ is  zero  if $P_1 \cap P_2 $ is non-empty.  Note that the second summand is symmetric in $e$ and $f$ and therefore each term is counted twice. 
\end{prop}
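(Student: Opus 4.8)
The plan is to work directly from the spanning $2$-forest definition \eqref{Phidef} and to re-express the kinematic factor $q^{T_1}\cdot q^{T_2}$ in terms of the chosen momentum routing. First I would use momentum conservation $q^{T_1}+q^{T_2}=0$ to replace $-\,q^{T_1}\cdot q^{T_2}$ by $(q^{T_1})^2$, so that $\Phi_G(q)=\sum_{\{T_1,T_2\}}(q^{T_1})^2\prod_{e\notin T_1\cup T_2}\alpha_e$, the sum being over unordered spanning $2$-forests with components $T_1,T_2$. The key preliminary identity is that, for such a forest with vertex components $V_1=V(T_1)$ and $V_2=V(T_2)$, the total momentum entering $T_1$ is the signed sum of routing momenta across the cut, $q^{T_1}=\sum_{e\in\mathrm{cut}}\sigma_e\,\mu_e$, where the sum runs over the edges joining $V_1$ to $V_2$ and $\sigma_e=+1$ (resp.\ $-1$) according as $e$ is oriented into (resp.\ out of) $V_1$. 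This follows by summing the defining relation $\partial\underline{\mu}=\underline{q}$ of a momentum routing over the vertices of $V_1$ and observing that the contributions of edges internal to $V_1$ cancel in pairs. Crucially this step uses only $\partial\underline{\mu}=\underline{q}$, so the resulting identity is valid for \emph{every} momentum routing, as the statement requires.

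Squaring gives $(q^{T_1})^2=\sum_{e,f\in\mathrm{cut}}\sigma_e\sigma_f\,\mu_e\cdot\mu_f$, and upon substituting this into $\Phi_G$ I would interchange the order of summation, fixing an ordered pair $(e,f)$ and summing over all $2$-forests in which both $e$ and $f$ are cut edges. The diagonal terms $e=f$ are treated first: a $2$-forest in which $e$ is a cut edge is the same datum as a spanning tree of the contraction $G\q e$ (contracting $e$ fuses the two components into a single tree), and since $e\notin T_1\cup T_2$ the factor $\alpha_e$ splits off. Because $\sigma_e^2=1$, this identifies the diagonal contribution with $\sum_e \mu_e\cdot\mu_e\,\alpha_e\,\Psi_{G\q e}$, the first sum in \eqref{PhiGasForest}.

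For the off-diagonal terms $e\neq f$ I would classify the $2$-forests in which both edges are cut according to how their four endpoints distribute between $V_1$ and $V_2$. Since each of $e,f$ must straddle the cut, there are exactly two configurations: either $\{s(e),s(f)\}$ lie in one component and $\{t(e),t(f)\}$ in the other, or else $\{s(e),t(f)\}$ lie together and $\{s(f),t(e)\}$ lie together. A short sign computation gives $\sigma_e\sigma_f=+1$ in the first case and $\sigma_e\sigma_f=-1$ in the second, which is the source of the relative minus sign. In each configuration both $e$ and $f$ are forced out of the forest, so $\alpha_e\alpha_f$ factors off and the residual sum over forests (now with $e,f$ deleted, subject to the prescribed partition of their endpoints) is precisely $\phi_G^{\{s(e),s(f)\},\{t(e),t(f)\}}$, respectively $\phi_G^{\{s(e),t(f)\},\{s(f),t(e)\}}$. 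Recombining the two configurations yields the second sum of \eqref{PhiGasForest}; the convention $\phi_G^{P_1,P_2}=0$ when $P_1\cap P_2\neq\emptyset$ automatically discards the degenerate cases (e.g.\ when $e,f$ share an endpoint), and the stated double-counting is just the symmetry $\mu_e\cdot\mu_f=\mu_f\cdot\mu_e$ together with the symmetry of the forest polynomials.

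The main obstacle is the sign and combinatorial bookkeeping in the off-diagonal case: deriving $q^{T_1}=\sum_{\mathrm{cut}}\sigma_e\,\mu_e$ with the correct orientation signs, and then tracking how $\sigma_e\sigma_f$ depends on the endpoint configuration so that the two spanning forest polynomials enter with opposite signs. A secondary point requiring care is the structural asymmetry between the two sums — the diagonal terms produce the contracted graph $G\q e$, whereas the off-diagonal terms produce forest polynomials of $G$ with both cut edges \emph{deleted} — and verifying that the explicit prefactors $\alpha_e$, respectively $\alpha_e\alpha_f$, arise exactly from the edges that are forced to lie outside the forest. Everything else reduces to a routine interchange of finite sums.
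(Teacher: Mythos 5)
Your proposal is correct and follows essentially the same route as the paper: both express $q^{T_1}$ as a signed sum of routing momenta over the cut edges of the $2$-forest (the paper's equation \eqref{qT1formula}, which is exactly your summation of $\partial\underline{\mu}=\underline{q}$ over $V(T_1)$), then match coefficients of $\mu_e\cdot\mu_e$ via the bijection with spanning trees of $G\q e$ and coefficients of $\mu_e\cdot\mu_f$ via the two endpoint configurations with signs $\pm 1$, yielding the difference of spanning forest polynomials. The only slight difference is at the margins: the paper explicitly isolates the tadpole case and defers the degenerate endpoint configurations ($s(e)=s(f)$, etc.) to a case check, whereas you dispatch them via the vanishing convention for $\phi_G^{P_1,P_2}$ — both treatments are equally brief and neither affects the substance of the argument.
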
 
\begin{proof}
Let $T= T_1 \cup T_2$ be a spanning forest in $G$ with two connected components $T_1,T_2$.  
By momentum conservation \eqref{localMC}, the momentum flowing into $T_1$ equals
\begin{equation} \label{qT1formula}
q^{T_1} = - q^{T_2} = \sum_{t(e) \in T_1}  \mu_e - \sum_{s(e) \in T_1} \mu_e
\end{equation} 
where each sum is over all internal edges $e\in E(G)$ such that $e\cup T$ has no loops, and is hence a spanning tree of $G$. In particular, $e$ has one endpoint in $V(T_1)$ and the other in $V(T_2)$. 

Suppose that $e$ is an edge with distinct endpoints. 
It follows  from \eqref{qT1formula} that the coefficient of $\mu_e. \mu_e$ in $-q^{T_1}. q^{T_2}$ is $1$ if $e \cup T_1 \cup T_2$ is a tree, and $0$ otherwise. 
 By    \eqref{Phidef},  the coefficient of $\mu_e. \mu_e$ in  $\Phi_G(q)$  is therefore equal to
\[
\sum_{T }  \prod_{e' \notin T}  \alpha_{e'}
\]
where the sum is over the set of  spanning forests $T$ with two connected components such that $T \cup e$ is a tree. 
The map $T \mapsto T \cup e$ is a bijection between the latter
and the set of spanning trees which contain $e$, which in turn are  in one-to-one correspondence with the set of spanning trees in $G/e$. We conclude that the coefficient of $\mu_e. \mu_e$ in the formula \eqref{Phidef} is precisely $\alpha_e \Psi_{G/e}$ (which also equals  $ \alpha_e \Psi_{G\q e}$).

In the case when $e$ is a loop, $\Psi_{G\q e}$ vanishes. By \eqref{qT1formula}, $\mu_e$ does not contribute to $q^{T_1}$ and therefore does not appear in $\Phi_G(q)$.  This establishes the first line of \eqref{PhiGasForest}.

Now consider two distinct edges $e, f$ and assume first of all that they have  in total four  distinct endpoints. 
Formula \eqref{qT1formula} implies that $\mu_e. \mu_f$ does not arise in $-q^{T_1}.q^{T_2}$
unless both $T \cup e $ and $T\cup f$ are trees. In this situation the coefficient is $1$ if 
  $s(e), s(f)$ lie in the same connected  component  of $T$, and $-1$ if they lie in different components. In the former case, this means  precisely  that $T=T_1 \cup T_2$ is a spanning forest with 2 components such that  the vertices of one component $V(T_1)$ contains $\{s(e), s(f)\}$ and the other, $V(T_2)$, contains $\{t(e), t(f)\}$.  In the latter case, $V(T_1)$ contains  $\{s(e), t(f)\}$ and $V(T_2)$ contains $ \{s(f), t(e)\}$, or vice-versa.  From   \eqref{Phidef},  it follows that the coefficient of $\mu_e. \mu_f$ in $\Phi_G(q)$ is exactly 
  \begin{equation} \label{alphaefPhiGterm} 
  2 \, \alpha_e \alpha_f   \left( \phi_G^{\{s(e),s(f)\} , \{t(e),t(f)\} } -  \phi_G^{\{s(e),t(f)\} , \{s(f),t(e)\} }\right)\ .
  \end{equation} 
  It remains to check that \eqref{alphaefPhiGterm} remains true in the cases when $\{s(e),s(f), t(e), t(f)\}$ has 3 or fewer elements.
  The reader may verify this in every situation (e.g., $s(e) = s(f)$, $s(e)= t(f)$, etc), using a similar argument  and  the fact that $\phi_G^{P_1,P_2}$ vanishes if $P_1$ and $P_2$ are not disjoint.   
   Note that if $e$ or $f$ is a loop (i.e., $s(e)=t(e)$) then $\mu_e$ does not occur in \eqref{qT1formula}, and  indeed, both  of the  spanning forest polynomials in \eqref{alphaefPhiGterm} are defined to be zero in this case.     \end{proof}

\subsubsection{Spanning forests and minors of  graph Laplacians}
The difference of spanning forest polynomials occurring in proposition \ref{prop: PhiGasForest} can be interpreted as a `Dodgson polynomial' $\Psi_G^{e,f}$.  
We will make frequent use of the  following well-known lemma. 

\begin{lem}    \label{lem: choosecycles} Let $G$ be a connected graph with directed edges. Suppose that $e_1,\ldots, e_n $ are distinct edges such that 
$G\backslash \{e_1,\ldots, e_n\}$
is connected. Then $h=h_G \geq n$ and we may choose cycles $c_1,\ldots, c_h \in \Z^{E_G}$ representing a basis for $H_1(G;\Z)$ such that:
\[ 
\left( \mathcal{H}_G \right)_{e_i,c_j}   = \delta_{i,j}  \quad \hbox{for all } 1 \leq i\leq n  , 1\leq j \leq h_G \ .
\]
In other words, for each $1\leq i\leq n$, the edge $e_i$ appears in cycle $c_i$ with coefficient $1$ and occurs in no other cycle.
\end{lem}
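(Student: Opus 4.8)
The plan is to construct the cycle basis explicitly by choosing a spanning tree that avoids all the distinguished edges, and then reading off the fundamental cycles associated to the edges $e_1,\ldots,e_n$. First I would observe that since $G\backslash\{e_1,\ldots,e_n\}$ is connected, it contains a spanning tree $T$ of $G$; note that $T$ uses only edges of $G$ other than the $e_i$, so in particular none of $e_1,\ldots,e_n$ lies in $T$. Because $T$ is a spanning tree, the fundamental cycles $\{c_e : e\notin T\}$, where $c_e$ is the unique cycle supported on $e$ together with the tree-path in $T$ joining its endpoints, form a basis of $H_1(G;\Z)$; this is the standard fact that $h_G = e_G - |V(G)| + 1$ equals the number of non-tree edges. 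In particular $h_G\geq n$ since the $e_i$ are $n$ distinct non-tree edges.

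Next I would order this basis so that the first $n$ fundamental cycles are $c_1 := c_{e_1},\ldots,c_n := c_{e_n}$, and the remaining $c_{n+1},\ldots,c_{h_G}$ are the fundamental cycles of the other non-tree edges, taken in any order. Orienting each $c_{e}$ so that the non-tree edge $e$ is traversed in its given direction, one has the defining property of fundamental cycles: the non-tree edge $e$ appears in $c_e$ exactly once with coefficient $+1$, and no \emph{other} non-tree edge appears in $c_e$ at all (every edge of $c_e$ besides $e$ itself lies in the tree $T$). Since each $e_i$ is a non-tree edge, this immediately gives $(\mathcal H_G)_{e_i,c_j} = \delta_{i,j}$ for $1\leq i\leq n$ and $1\leq j\leq h_G$: for $j\leq n$ the $j$-th cycle $c_{e_j}$ contains $e_i$ iff $i=j$, and for $j>n$ the cycle $c_j$ is built from a different non-tree edge and the tree, hence contains no $e_i$.

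The only point requiring care, and the one I expect to be the main (though minor) obstacle, is the bookkeeping of signs and the treatment of possibly repeated edges within a single cycle. In a graph with multiple edges or tadpoles the path in $T$ closing up a fundamental cycle is still a simple tree-path, so no tree-edge is repeated; and a tadpole among the $e_i$ would itself be its own fundamental cycle, which is fine. The substantive content is entirely the standard fundamental-cycle construction, so once $T$ is chosen the verification of the incidence identity is a direct consequence of the defining property that each fundamental cycle meets the set of non-tree edges in exactly its own generating edge. I would close by remarking that the chosen $c_1,\ldots,c_{h_G}$ indeed span $H_1(G;\Z)$ over $\Z$, not merely over $\Q$, since the fundamental cycles of a spanning tree form a $\Z$-basis, which is precisely what is needed for the integral change-of-basis arguments in \S\ref{sect: ChangeofBasis}.
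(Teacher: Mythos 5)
Your proof is correct, and since the paper states this lemma without proof (as ``well-known''), your spanning-tree/fundamental-cycle construction is precisely the standard argument the author is implicitly invoking: a spanning tree of $G$ inside the connected subgraph $G\backslash\{e_1,\ldots,e_n\}$ makes each $e_i$ a non-tree edge, and the fundamental cycles of the non-tree edges (oriented so the generating edge has coefficient $+1$) give a $\Z$-basis of $H_1(G;\Z)$ with exactly the required incidence property. Your remarks on tadpoles and integrality are accurate and no gap remains.
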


For a matrix $M$, let $M^{i,j}$ denote the matrix with row $i$ and column $j$ removed. 

\begin{lem}   \label{lem: MinorsofLambdaG} Let $G$ be a connected graph with directed edges.

(i) Let $e\in E(G)$ be an edge such that $G\backslash \{e\}$ is connected (i.e., $e$ is not a bridge). Choose cycles $c_1,\ldots, c_h$ as in the previous lemma and let $\Lambda_G$ be the graph Laplacian matrix with respect to this basis.  Then we have 
\[
 \det \Lambda_G^{1,1}   =   \Psi_{G\backslash e}\ .
 \]
 
(ii).  Let $e_1,e_2\in E(G)$ be two edges such  that $ G\backslash \{e_1,e_2\}$ is connected, and choose cycles $c_1,\ldots, c_h$  as in the previous lemma. Then 
\[ 
\det \Lambda_G^{1,2}   =   \pm \left( \phi_G^{\{s_1,s_2\} , \{t_1,t_2\} } -  \phi_G^{\{s_1,t_2\} , \{s_2,t_1\}}   \right)   \ . 
\]
where  $s_i$ is the source of the directed edge $e_i$, and $t_i$ its target, for $i=1,2.$ 
\end{lem}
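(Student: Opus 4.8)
The plan is to prove both parts by identifying the relevant matrix minors with determinants of incidence matrices, and then applying the combinatorial formula from Proposition~\ref{prop: PhiGasForest} together with the matrix-tree theorem for the graph Laplacian. The key structural fact I would exploit is that $\Lambda_G = \mathcal{H}_G^T D \mathcal{H}_G$, where $D = \mathrm{diag}(\alpha_e)$ and $\mathcal{H}_G$ is the edge-cycle incidence matrix. The special choice of cycles furnished by Lemma~\ref{lem: choosecycles} is what makes everything computable: with that basis, deleting rows and columns of $\Lambda_G$ corresponds cleanly to deleting edges from the graph.

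For part (i), the observation is that the first cycle $c_1$ is the unique cycle containing edge $e$, and $e$ has coefficient $1$ there. **First I would** write $\Lambda_G^{1,1}$ — the Laplacian with the first row and column removed — and argue that this equals the graph Laplacian of $G \backslash e$ computed in the basis $c_2,\dots,c_h$. This is because removing $c_1$ from the cycle basis and simultaneously removing $e$ from the edge set leaves precisely the cycle space of $G\backslash e$: indeed, since $e$ appears only in $c_1$, the cycles $c_2,\dots,c_h$ lie entirely in $G\backslash e$ and form a basis for $H_1(G\backslash e;\Z)$ (here one uses that $G\backslash e$ is connected, so $h_{G\backslash e}=h_G-1$). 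The entries of $\Lambda_G^{1,1}$ are then unaffected by the $\alpha_e$ term since $e\notin c_j$ for $j\geq 2$. Applying the matrix-tree identity $\det \Lambda_{G\backslash e} = \Psi_{G\backslash e}$ closes the argument.

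For part (ii), the strategy is to relate $\det \Lambda_G^{1,2}$ directly to the coefficient computation already carried out in Proposition~\ref{prop: PhiGasForest}. **I would** first use the Cauchy--Binet formula on $\Lambda_G^{1,2} = (\mathcal{H}_G^T D \mathcal{H}_G)^{1,2}$, expressing this minor as a sum over pairs of edge subsets of products of maximal minors of $\mathcal{H}_G$ weighted by monomials in the $\alpha_e$. With the special cycle basis, the minor of $\mathcal{H}_G$ obtained by deleting the row-index and column-index structure has a transparent combinatorial meaning in terms of spanning forests — each nonvanishing term of the expansion corresponds to a spanning $2$-forest separating $\{s_1,s_2\}$ from $\{t_1,t_2\}$ or $\{s_1,t_1\}$ from $\{s_2,t_2\}$, with signs tracking which separation occurs. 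The emergence of the difference of the two spanning forest polynomials, with the overall ambiguous sign $\pm$, is exactly the signed incidence pattern appearing in \eqref{alphaefPhiGterm}.

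**The main obstacle** I anticipate is the careful bookkeeping of signs in part (ii): the sign in front of the difference of spanning forest polynomials depends on the orientations of $e_1,e_2$ and on the ordering conventions for the cycle basis, which is precisely why the statement is phrased with a $\pm$ rather than a definite sign. Controlling these signs requires tracking the Cauchy--Binet signature through the minor expansion and comparing it against the $\pm 1$ coefficients that arose from the momentum-conservation analysis of $-q^{T_1}\!\cdot q^{T_2}$ in the proof of Proposition~\ref{prop: PhiGasForest}. Since the ultimate application (establishing \eqref{intro: detLambdatilde}) only needs the minor up to sign, I would not attempt to pin down the sign canonically but instead verify that the two spanning-forest configurations appear with opposite signs, matching the structure of the Dodgson polynomial $\Psi_G^{e,f}$; the absolute sign can be absorbed once and for all into the orientation conventions.
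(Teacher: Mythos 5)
Your proposal is correct in outline but follows a genuinely different route from the paper, most notably in part (ii). For part (i) the paper does not identify $\Lambda_G^{1,1}$ with the Laplacian of $G\backslash e$ as you do; instead it observes that, with the basis from Lemma \ref{lem: choosecycles}, the variable $\alpha_e$ occurs only in the $(1,1)$ entry of $\Lambda_G$, so a cofactor expansion shows that $\det \Lambda_G^{1,1}$ is the coefficient of $\alpha_e$ in $\det\Lambda_G=\Psi_G$, which equals $\Psi_{G\backslash e}$ by contraction--deletion. Your identification $\Lambda_G^{1,1}=\Lambda_{G\backslash e}$ (valid because $c_2,\ldots,c_h$ avoid $e$ and span $H_1(G\backslash e;\Z)$, using that $e$ is not a bridge) is an equally clean and slightly more structural argument. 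For part (ii) the paper avoids any direct combinatorial expansion: it computes $\det\Lambda_G^{i,i}$ and $\det\Lambda_G^{12,12}$ by part (i), invokes the Dodgson (Desnanot--Jacobi) identity to show that $\bigl(\det\Lambda_G^{1,2}\bigr)^2=\Psi_{G\backslash e_1}\Psi_{G\backslash e_2}-\Psi_{G\backslash\{e_1,e_2\}}\Psi_G$, concludes that $\det\Lambda_G^{1,2}$ agrees up to sign with the Dodgson polynomial $\Psi_G^{1,2}$, and then cites \cite[\S2]{BrownYeats} for the spanning-forest expression of the latter. Your Cauchy--Binet expansion of $(\mathcal{H}_G^{(1)})^TD\,\mathcal{H}_G^{(2)}$ is a legitimate self-contained alternative: with the special basis, $\det\mathcal{H}_G^{(1)}[S]=\pm\det\mathcal{H}_G[S\cup\{e_1\}]$ and similarly for $e_2$, so the surviving terms are exactly the $2$-forests $F$ for which both $F\cup\{e_1\}$ and $F\cup\{e_2\}$ are spanning trees, i.e.\ each $e_i$ joins the two components. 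What your route buys is independence from the cited result and, in principle, a determination of the actual sign rather than $\pm$; what it costs is that the hardest step --- showing that the two separation patterns $\{s_1,s_2\}\,|\,\{t_1,t_2\}$ versus the mixed one contribute with \emph{opposite} signs --- is precisely the content of the Brown--Yeats computation and is only asserted in your sketch, not carried out. As long as you are aware that this sign analysis is the real work being outsourced by the paper's citation, your plan goes through.
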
 
\begin{proof} $(i)$. The variable $\alpha_e$ only occurs in the matrix $\Lambda_G$ in  a single place, namely the top left entry $(\Lambda_G)_{1,1}$. A row expansion shows that the coefficient of $\alpha_1$ in $\det (\Lambda_G)$ equals $\det (\Lambda_G^{1,1})$.  On the other hand, the coefficient of $\alpha_1$ in $\Psi_G$ is precisely $\Psi_{G\backslash e_1}$ by the contraction-deletion identity $\Psi_G = \alpha_1 \Psi_{G\backslash e} + \Psi_{G\q e}$.

$(ii)$.  By the same argument as in $(i)$, we have $\det(\Lambda_G^{i,i}) = \Psi_{G \backslash e_i} $ for $i=1,2$ and $\det (\Lambda_G^{12,12}) = \Psi_{G\backslash \{e_1,e_2\}}$.  Now  the `Dodgson identity' 
$$ \left(\det(\Lambda_{G}^{1,2})\right)^2  =  \Psi_{G \backslash e_1} \Psi_{G \backslash e_2} - \Psi_{G \backslash \{e_1,e_2\}} \Psi_G   $$
implies that $\det(\Lambda_{G}^{1,2}) $ is equal, up to a possible sign, to the `Dodgson polynomial'   $\Psi^{1,2}_G$  which satisfies the identical identity (but is defined using the graph matrix, and not the Laplacian). An expression for the latter  in terms of  the requisite forest polynomials follows from   \cite[\S2]{BrownYeats}. 
 \end{proof}

\section{Determinants of generalised graph Laplacians} \label{sect: detLambdaG}
Let $G$ be a connected graph as in \S\ref{sec: GenLaplace} with internal particle masses and  external momenta lying in 2-dimensional Euclidean space $\R^2 \cong \C$. 

\begin{thm}  \label{thm: detLambdaisXi} Let  $\widetilde{\Lambda}_G(\mu,m)$ be a generalised graph Laplacian for any choice of momentum routing $\mu_e \in \C^{E_G}$. 
It satisfies
\begin{equation} \label{detLambdaisXi}
\det \widetilde{\Lambda}_G(\mu, m) = \Xi_G(q,m)\ .
\end{equation}
\end{thm}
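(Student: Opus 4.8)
The plan is to evaluate the determinant by a Schur complement with respect to the $h \times h$ block $\Lambda_G$ appearing in the block decomposition \eqref{LambdaGblockmatrix}. Writing $B = (c_1(\alpha,\mu),\ldots,c_h(\alpha,\mu))^T$ and $C = (c_1(\alpha,\iota(\mu)),\ldots,c_h(\alpha,\iota(\mu)))$ for the last column and row, and using $\det \Lambda_G = \Psi_G$ together with $\Psi_G\,\Lambda_G^{-1} = \mathrm{adj}(\Lambda_G)$, the Schur formula gives
\[
\det \widetilde{\Lambda}_G(\mu,m) = \Psi_G\, X_G - C\, \mathrm{adj}(\Lambda_G)\, B\ .
\]
The pivot $\Lambda_G$ is invertible over the field of fractions, which suffices since both sides are polynomials. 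Substituting $X_G = \sum_e m_e^2 \alpha_e + \sum_e \mu_e \overline{\mu}_e\,\alpha_e$ from \eqref{Xdef}, the purely massive contribution is $\left(\sum_e m_e^2\alpha_e\right)\Psi_G$, which is exactly the mass term of $\Xi_G$ in \eqref{Xidef}. It therefore remains to prove the momentum identity
\[
\Psi_G \sum_{e} \mu_e \overline{\mu}_e\, \alpha_e \;-\; C\, \mathrm{adj}(\Lambda_G)\, B \;=\; \Phi_G(q)\ .
\]

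Both sides are Hermitian forms in the routing $\mu$, so I would match them coefficient by coefficient in the monomials $\mu_e \overline{\mu}_f$. The key simplification is that $\det\widetilde{\Lambda}_G$ and $\Phi_G$ are independent of the chosen cycle basis: by \S\ref{sect: ChangeofBasis} a change of basis conjugates $\widetilde\Lambda_G$ by $\widetilde P$ with $\det(\widetilde P)^2 = 1$. Hence, for each fixed pair $(e,f)$ with $G\backslash\{e,f\}$ connected, I may invoke Lemma \ref{lem: choosecycles} to pick a basis in which $e$ occurs with coefficient $1$ in $c_1$ and in no other cycle, and $f$ occurs with coefficient $1$ in $c_2$ only. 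In such a basis the rows of $\mathcal{H}_G$ indexed by $e$ and $f$ are standard vectors, so $C\,\mathrm{adj}(\Lambda_G)\,B$ selects single cofactors of $\Lambda_G$: the $\mu_e\overline{\mu}_e$-coefficient equals $\alpha_e^2\det\Lambda_G^{1,1}$, and the $\mu_e\overline{\mu}_f$-coefficient ($e\neq f$) equals $\pm\,\alpha_e\alpha_f\det\Lambda_G^{1,2}$.

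Now I would apply Lemma \ref{lem: MinorsofLambdaG}. For the diagonal terms, part (i) gives $\det\Lambda_G^{1,1} = \Psi_{G\backslash e}$, and the contraction--deletion relation $\Psi_G = \alpha_e\Psi_{G\backslash e} + \Psi_{G\q e}$ turns the diagonal coefficient $\Psi_G\alpha_e - \alpha_e^2\Psi_{G\backslash e}$ into $\alpha_e\Psi_{G\q e}$, matching the first line of \eqref{PhiGasForest}. For $e\neq f$, part (ii) identifies $\det\Lambda_G^{1,2}$ with the combination $\phi_G^{\{s_e,s_f\},\{t_e,t_f\}} - \phi_G^{\{s_e,t_f\},\{s_f,t_e\}}$, matching the second line of \eqref{PhiGasForest}. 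Comparing against Proposition \ref{prop: PhiGasForest} and using $\mu_e.\mu_f = \tfrac12(\mu_e\overline{\mu}_f + \overline{\mu}_e\mu_f)$ then yields the momentum identity, which completes the proof.

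The step I expect to be the main obstacle is the consistent bookkeeping of signs: the orientation signs in the incidence matrix $\mathcal{H}_G$, the cofactor signs in $\mathrm{adj}(\Lambda_G)$, and the undetermined $\pm$ in Lemma \ref{lem: MinorsofLambdaG}(ii) must conspire so that the off-diagonal coefficients emerge with exactly the sign fixed in \eqref{PhiGasForest}. Since the left-hand side is a genuine Hermitian pairing $\widetilde{\mathcal{H}}_G^* D_G \widetilde{\mathcal{H}}_G$, reality forces these signs to be compatible, but verifying this cleanly requires care, as do the degenerate configurations (when $e$ or $f$ is a self-edge, when $\{s_e,t_e,s_f,t_f\}$ has fewer than four elements, or when $G\backslash\{e,f\}$ is disconnected), in each of which both sides vanish or reduce exactly as in the proof of Proposition \ref{prop: PhiGasForest}. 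As an alternative that avoids the basis-change argument, one can write $\widetilde\Lambda_G = \mathcal{B}^* \widetilde D\, \mathcal{B}$, where $\mathcal{B}$ stacks $\widetilde{\mathcal{H}}_G$ over $\widetilde M_G$ and $\widetilde D = \mathrm{diag}(\alpha_e,\alpha_e)$, and apply Cauchy--Binet: index sets containing two mass-rows vanish, those with one mass-row reproduce $\left(\sum_e m_e^2\alpha_e\right)\Psi_G$ via the matrix--tree theorem, and those drawn from the cycle-rows reproduce $\Phi_G(q) = \sum_{T_1\cup T_2} \|q^{T_1}\|^2 \prod_{e\notin T_1\cup T_2}\alpha_e$ through \eqref{qT1formula}.
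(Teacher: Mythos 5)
Your main line of argument is, in substance, the paper's own proof repackaged. The Schur complement identity $\det\widetilde{\Lambda}_G=\Psi_G X_G-C\,\mathrm{adj}(\Lambda_G)B$ is a reorganisation of the cofactor expansion the paper performs on \eqref{LambdaGblockmatrix}; splitting off the mass term $\bigl(\sum_e m_e^2\alpha_e\bigr)\Psi_G$ is the paper's first reduction verbatim; and the comparison of the coefficients of $\mu_e\overline{\mu}_f$ via Lemma \ref{lem: choosecycles}, Lemma \ref{lem: MinorsofLambdaG} and Proposition \ref{prop: PhiGasForest} is exactly the paper's strategy (Lemma \ref{lem: casemuemubare} is your diagonal case). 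The one genuine gap is the point you flag yourself: the undetermined sign in Lemma \ref{lem: MinorsofLambdaG}(ii). The appeal to reality of the Hermitian pairing does not close it --- Hermitianity only forces the coefficients of $\mu_e\overline{\mu}_f$ and $\mu_f\overline{\mu}_e$ to be complex conjugates, which holds for either choice of the overall sign in front of $\phi_G^{\{s_e,s_f\},\{t_e,t_f\}}-\phi_G^{\{s_e,t_f\},\{s_f,t_e\}}$. Fixing that sign is precisely where the paper spends most of its effort: it selects a single nonzero monomial of the coefficient, contracts the corresponding spanning $2$-forest (which commutes with forming $\widetilde{\Lambda}$), removes tadpoles via Lemma \ref{lem: deletetadpoles}, and reduces to the explicit banana-graph computation of Example \ref{example:detLambdaforBanana}. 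Some such reduction, or an honest global tracking of the orientation and cofactor signs, is needed to complete your argument; likewise the degenerate configurations require the paper's explicit case analysis (in particular, when $G\backslash\{e_1,e_2\}$ is disconnected but neither edge is a bridge, neither side vanishes and both must be matched with $\pm\alpha_{e_1}\alpha_{e_2}\Psi_{G\backslash e_1}$).

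The Cauchy--Binet argument you sketch at the end is a genuinely different route from the paper's, and the cleaner one precisely on this point: writing $\widetilde{\Lambda}_G=\mathcal{B}^*\widetilde{D}\mathcal{B}$ gives
\[
\det\widetilde{\Lambda}_G \;=\; \sum_{|S|=h+1}\bigl|\det\mathcal{B}_S\bigr|^2\prod_{s\in S}\alpha_s\ ,
\]
in which every term is a squared modulus, so the sign ambiguity evaporates; the one-mass-row terms give $\bigl(\sum_e m_e^2\alpha_e\bigr)\Psi_G$ by the matrix--tree theorem, and the all-cycle-row terms give $\sum_{T_1\cup T_2}|q^{T_1}|^2\prod_{e\notin T_1\cup T_2}\alpha_e=\Phi_G(q)$ once one identifies the nonvanishing minors $\det\widetilde{\mathcal{H}}_S$ with complements of spanning $2$-forests and evaluates them as $\pm q^{T_1}$ via \eqref{qT1formula} (this is the all-minors matrix--tree theorem, and is the step you would still need to write out carefully). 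Fleshed out, that paragraph yields a complete proof which bypasses both the sign bookkeeping and the reduction to banana graphs; the paper does not take this route.
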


Now let $G$ be  as above with external momenta in $\R^4\cong \Qt$. 
\begin{thm}  \label{thm: detLambda4isXi} Let $\widetilde{\Lambda}_G(\mu, m)$ be a generalised quaternionic graph Laplacian with respect to a choice of momentum routing $\mu \in \Qt^{E_G}$.   It satisfies 
\begin{equation} \label{detLambda4isXi}
\mathrm{Det} \,  \widetilde{\Lambda}_G(\mu, m) = \left(\Xi_G(q,m)\right)^2\ .
\end{equation}
\end{thm}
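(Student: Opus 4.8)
The plan is to reduce the quaternionic statement to the complex computation underlying Theorem \ref{thm: detLambdaisXi}, exploiting that $\mathrm{Det} = \det \circ \chi$ is multiplicative because $\chi$ is a ring homomorphism. Throughout I work over the fraction field of the polynomial ring in the $\alpha_e$ and the kinematic variables, where $\det \Lambda_G = \Psi_G$ is invertible (it is not identically zero); since \eqref{detLambda4isXi} is an identity of polynomials, it suffices to prove it on the locus $\Psi_G \neq 0$ and then clear denominators.

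First I would perform a quaternionic Schur reduction on the block form \eqref{LambdaGblockmatrix}. Writing $\widetilde{\Lambda}_G = \left(\begin{smallmatrix} \Lambda_G & v \\ v^* & X_G\end{smallmatrix}\right)$ with $v$ the column $(c_1(\alpha,\mu),\ldots,c_h(\alpha,\mu))^T$ and $X_G$ as in \eqref{Xdef}, I note that choosing the momentum-routing shift $s = -\Lambda_G^{-1} v$ produces a congruence \emph{exactly} of the form \eqref{ChangeLambdabyS}, with unit-diagonal $\widetilde{S}$, that clears the last row and column and brings $\widetilde{\Lambda}_G$ to block-diagonal form $\mathrm{diag}(\Lambda_G, \xi)$, where $\xi = X_G - v^*\Lambda_G^{-1}v$ is the Schur complement. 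The decisive point is that $\xi$ is a \emph{real scalar}: since $\Lambda_G^{-1}$ is real symmetric and central, every off-diagonal contribution to $v^*\Lambda_G^{-1}v$ occurs as $\overline{v_i}v_j + \overline{v_j}v_i = 2\,\mathrm{Re}(\overline{v_i}v_j)\,\Qu$, so $v^*\Lambda_G^{-1}v \in \R\Qu$, and $X_G \in \R\Qu$ as well.

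Applying $\mathrm{Det}$ and using its multiplicativity together with Proposition \ref{prop: Hermmatrices}, I obtain $\mathrm{Det}(\widetilde{\Lambda}_G) = \mathrm{Det}(\Lambda_G)\,\mathrm{Det}(\xi) = \det(\Lambda_G)^2\,\xi^2 = (\Psi_G\,\xi)^2$, since $\Lambda_G$ has real entries (so part (ii) gives $\mathrm{Det}(\Lambda_G) = \det(\Lambda_G)^2 = \Psi_G^2$) and $\xi$ is a real scalar, whence $\mathrm{Det}(\xi) = \xi^2$. It then remains to identify $\Psi_G\,\xi = X_G\det\Lambda_G - v^*\,\mathrm{adj}(\Lambda_G)\,v$ with $\Xi_G(q,m)$, which is the same computation as in the complex case: the mass part of $X_G$ contributes $\left(\sum_e m_e^2\alpha_e\right)\Psi_G$, while the momentum part depends on $\mu$ only through the Euclidean inner products $\mu_e\overline{\mu_e} = q_e\cdot q_e$ and $\mu_e\overline{\mu_f} + \mu_f\overline{\mu_e} = 2\,q_e\cdot q_f$ (the coefficients $\sum_{i,j}p_{e,i}(\mathrm{adj}\,\Lambda_G)_{ij}p_{f,j}$ being symmetric in $e,f$). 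Expanding the adjugate minors through Lemma \ref{lem: MinorsofLambdaG} and collecting terms reproduces precisely the spanning-forest expansion of $\Phi_G(q)$ in Proposition \ref{prop: PhiGasForest}, giving $\Psi_G\,\xi = \Phi_G(q) + \left(\sum_e m_e^2\alpha_e\right)\Psi_G = \Xi_G(q,m)$, hence $\mathrm{Det}(\widetilde{\Lambda}_G) = \Xi_G(q,m)^2$.

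The main obstacle is the Schur step, where non-commutativity of $\Qt$ could in principle spoil the usual block factorisation: one must verify that the elementary congruence clearing the last row and column can be carried out with the central matrix $\Lambda_G^{-1}$, and that the resulting Schur complement $\xi$ is genuinely an element of $\R\Qu$ rather than a general quaternion. Both facts follow from the centrality of the real block $\Lambda_G$ and the Hermitian symmetry $\widetilde{\Lambda}_G^T = \overline{\widetilde{\Lambda}_G}$; once they are secured, multiplicativity of $\mathrm{Det} = \det\circ\chi$ and the forest identities finish the argument. As a consistency check, the expression $\mathrm{Det}(\widetilde{\Lambda}_G) = (\Psi_G\,\xi)^2$ exhibits the perfect square predicted by Proposition \ref{prop: Hermmatrices}(i), with reduced norm $\mathrm{Nrp}(\widetilde{\Lambda}_G) = \Psi_G\,\xi = \Xi_G(q,m)$.
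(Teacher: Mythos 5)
Your argument is correct, but it takes a genuinely different route from the paper's. The paper first invokes Proposition \ref{prop: Hermmatrices}(i) to know abstractly that $\mathrm{Det}\,\widetilde{\Lambda}_G(\mu,m)=F^2$ for some polynomial $F$, isolates the mass dependence by a degree count in the $m_e^2$, and then pins down the massless part coefficient-by-coefficient: for each pair of edges it chooses a routing in which those momenta lie in a copy of $\R^2\cong\C\subset\Qt$, so that $\chi_{\widetilde{\Lambda}_G}$ becomes the adjoint of a complex Laplacian and Theorem \ref{thm: detLambdaisXi} together with Proposition \ref{prop: Hermmatrices}(ii) applies. You instead exhibit the square root explicitly: the Schur congruence with $s=-\Lambda_G^{-1}v$ block-diagonalises $\widetilde{\Lambda}_G$ over the locus $\Psi_G\neq 0$, the centrality of the real block $\Lambda_G$ and the Hermitian symmetry force the Schur complement $\xi$ into $\R\Qu$, and multiplicativity of $\det\circ\chi$ gives $\mathrm{Det}\,\widetilde{\Lambda}_G=(\Psi_G\xi)^2$ with $\Psi_G\xi=X_G\Psi_G-v^*\,\mathrm{adj}(\Lambda_G)\,v$ depending only on the Euclidean inner products $\mu_e\cdot\mu_f$; the identification with $\Xi_G(q,m)$ is then the dimension-independent coefficient computation of Lemma \ref{lem: MinorsofLambdaG} and Proposition \ref{prop: PhiGasForest}. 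What your route buys: it treats masses and momenta uniformly (no separate degree analysis), it makes transparent \emph{why} the quaternionic determinant is the square of the complex answer, and it identifies the reduced norm $\mathrm{Nrp}(\widetilde{\Lambda}_G)=\Xi_G(q,m)$ explicitly, which the paper mentions but never exhibits. What the paper's route buys: it never inverts $\Lambda_G$, so it works directly with polynomials rather than passing to the fraction field and clearing denominators. Two small presentational caveats: your $\widetilde{S}$ has entries that are rational functions of the $\alpha_e$, so it is not literally a change of momentum routing as in \eqref{ChangeLambdabyS} (though the determinant argument only needs $\det\chi_{\widetilde{S}}=1$, which holds); and the final identification of $\Psi_G\xi$ with $\Xi_G$ silently imports the sign determination from the paper's proof of Theorem \ref{thm: detLambdaisXi} (the banana-graph normalisation), which you should cite rather than re-derive.
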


\begin{rem} The right-hand side of \eqref{detLambdaisXi}  or \eqref{detLambda4isXi} only depends on external kinematics. It follows that the determinant of the generalised graph Laplacian with generic momentum variables $\mu_e$ only depends on the quantities 
\[ 
\mu_v = \sum_{s(e) = v} \mu_e -   \sum_{t(e) = v} \mu_e 
\]
where the sum is over all  internal edges incident to a vertex $v \in V(G)$. 
\end{rem}

We first prove theorem \ref{thm: detLambdaisXi} and use it to deduce theorem \ref{thm: detLambda4isXi}. 
It is instructive to prove   theorem  \ref{thm: detLambdaisXi} by direct computation of the determinant of the left-hand side, making  use of proposition \ref{prop: PhiGasForest}.
An alternative approach could exploit the asymptotic factorisation identities proven in \S \ref{sect: Canatinfinity}, which were shown in \cite[\S4]{Cosmic} to determine $\Xi_G(q,m)$ essentially uniquely. 

\subsection{Preliminary reductions} 
 It follows from the presentation 
\eqref{LambdaGblockmatrix} of $\widetilde{\Lambda}_G(\mu,m)$ in block matrix form that 
$$ \det \widetilde{\Lambda}_G (\mu,m)=  \left(\sum_{i=1}^N \alpha_i m_i^2 \right) \det \Lambda_G + \det \widetilde{\Lambda}_G (\mu, 0)$$
Since $\det(\Lambda_G) = \Psi_G$,  it is enough by  definition of $\Xi_G(q,m)$ to prove 
\eqref{detLambdaisXi} 
 in the case when all masses are zero. Thus,  theorem \ref{thm: detLambdaisXi} is equivalent to:
\begin{equation} \label{in proof: detLambdaisphi} 
\det \widetilde{\Lambda}_G  (\mu,0)  = \Phi_G(q)\ .
\end{equation} 
Now recall that $\widetilde{\Lambda}_G  (\mu,0)$ takes the form:
\[
  \widetilde{\Lambda}_G(\mu,0)  = \left(
  \begin{array} {ccc|c}
  &  & &   c_1(\alpha,  \mu) \\
  &  \Lambda_G & &  c_2(\alpha, \mu)  \\
   &  & &  \vdots \\
   \hline
 c_1(\alpha,  \overline{\mu})    & c_2(\alpha, \overline{\mu}) & \ldots  & \sum_{i=1}^N \alpha_i \mu_i \overline{\mu_i}
  \end{array}  \right)  \ . 
  \]
  By \S\ref{sect: ChangeofBasis}, its determinant does not depend on the choices of cycles  $c_i$. To verify \eqref{in proof: detLambdaisphi}, we use  proposition \ref{prop: PhiGasForest}  and compare the coefficients of $\mu_{e_1}  \overline{\mu}_{e_2}$ on both sides of \eqref{in proof: detLambdaisphi}.   First we treat the case $e_1=e_2 =e$. 

\begin{lem} \label{lem: casemuemubare}  Let $e$ be an edge of $G$, then the coefficients of $\mu_e \overline{\mu}_e$ on the left and right hand sides of \eqref{in proof: detLambdaisphi} agree. 
\end{lem}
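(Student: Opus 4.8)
The plan is to compute the coefficient of $\mu_e\overline{\mu}_e$ in $\det\widetilde{\Lambda}_G(\mu,0)$ directly from the block form and match it against proposition \ref{prop: PhiGasForest}. First I would expand the bordered determinant along its last row and column: writing $C$ for the last column (entries $c_k(\alpha,\mu)$), $C^{*}$ for the last row (entries $c_k(\alpha,\overline{\mu})$), and $X_G=\sum_i\alpha_i\mu_i\overline{\mu}_i$ for the corner, one has the polynomial identity
\[
\det\begin{pmatrix}\Lambda_G & C \\ C^{*} & X_G\end{pmatrix}=X_G\,\det\Lambda_G-C^{*}\,\mathrm{adj}(\Lambda_G)\,C.
\]
Since $\mu_e$ occurs in the matrix only in $C$ and in $X_G$, and $\overline{\mu}_e$ only in $C^{*}$ and $X_G$, the monomial $\mu_e\overline{\mu}_e$ can arise in exactly two ways: both factors from $X_G$ (contributing via the first term), or $\mu_e$ from $C$ and $\overline{\mu}_e$ from $C^{*}$ (contributing via the second). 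On the right-hand side, the cross terms $\mu_e.\mu_f$ with $e\neq f$ expand as $\tfrac12(\mu_e\overline{\mu}_f+\overline{\mu}_e\mu_f)$ by \eqref{EuclideanToComplex} and so never produce $\mu_e\overline{\mu}_e$; hence by proposition \ref{prop: PhiGasForest} the coefficient of $\mu_e\overline{\mu}_e$ on the right is exactly $\alpha_e\Psi_{G\q e}$, using $\mu_e.\mu_e=\mu_e\overline{\mu}_e$.

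Next, assuming $e$ is not a bridge, I would exploit the independence of $\det\widetilde{\Lambda}_G$ from the cycle basis (\S\ref{sect: ChangeofBasis}) and invoke lemma \ref{lem: choosecycles} to choose cycles with $(\mathcal{H}_G)_{e,j}=\delta_{1,j}$. With this basis the entries $c_k(\alpha,\mu)$ and $c_k(\alpha,\overline{\mu})$ contain $\mu_e$, resp. $\overline{\mu}_e$, only for $k=1$, each with coefficient $\alpha_e$. Hence the $\mu_e\overline{\mu}_e$-coefficient of $C^{*}\,\mathrm{adj}(\Lambda_G)\,C$ collapses to $\alpha_e^2\,(\mathrm{adj}\,\Lambda_G)_{1,1}=\alpha_e^2\det(\Lambda_G^{1,1})$, which equals $\alpha_e^2\,\Psi_{G\backslash e}$ by lemma \ref{lem: MinorsofLambdaG}(i), while the first term contributes $\alpha_e\Psi_G$. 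Combining and applying the contraction–deletion identity $\Psi_G=\alpha_e\Psi_{G\backslash e}+\Psi_{G\q e}$ gives
\[
\alpha_e\Psi_G-\alpha_e^2\Psi_{G\backslash e}=\alpha_e\Psi_{G\q e},
\]
which matches the right-hand side.

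Finally I would dispose of the degenerate cases. A tadpole is not a bridge, so $G\backslash e$ is connected and the argument above applies verbatim (both sides vanish, since $\Psi_{G\q e}=0$). If $e$ is a bridge, then $e$ lies in no cycle, so $(\mathcal{H}_G)_{e,j}=0$ for all $j$ and the second term contributes nothing to $\mu_e\overline{\mu}_e$; the coefficient is then $\alpha_e\Psi_G$, and since $\Psi_{G\backslash e}=0$ the same identity yields $\Psi_G=\Psi_{G\q e}$, so again the two sides agree. The one point requiring genuine care is the bookkeeping that shows only the two monomial contributions above occur, together with the basis choice that isolates a single diagonal cofactor of $\Lambda_G$; once lemmas \ref{lem: choosecycles} and \ref{lem: MinorsofLambdaG} are in hand, the rest is routine.
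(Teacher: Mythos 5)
Your proof is correct and follows essentially the same route as the paper's: the same case split (bridge versus non-bridge), the same use of lemma \ref{lem: choosecycles} to isolate $e$ in a single cycle $c_1$, lemma \ref{lem: MinorsofLambdaG}(i) to identify $\det(\Lambda_G^{1,1})$ with $\Psi_{G\backslash e}$, and the contraction--deletion identity to finish. The only cosmetic difference is that you package the sign bookkeeping into the bordered-determinant identity $\det = X_G\det\Lambda_G - C^{*}\,\mathrm{adj}(\Lambda_G)\,C$, whereas the paper tracks the sign $(-1)^h(-1)^{h-1}=-1$ from the cofactor expansion directly; your explicit treatment of the tadpole case is a harmless addition subsumed by the non-bridge argument.
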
 
\begin{proof}
Consider the coefficient of $\mu_e \overline{\mu}_e$ in $\det \widetilde{\Lambda}_G  (\mu,0)$.   If $e$ is a bridge, then we can find  cycles $c_1,\ldots, c_h\in \Z^{E_G}$ representing a basis of  $H_1(G;\Z)$ which do not involve $e$ at all. In this case, $\mu_e$ only appears in $\widetilde{\Lambda}_G  (\mu,0)$ in the bottom right hand corner, and the coefficient of $\mu_e \overline{\mu}_e$ is 
\[ 
\det (\Lambda_G) \alpha_e =\Psi_G \alpha_e  = \Psi_{G\q e} \alpha_e  \ . 
\]
If $e$ is not a bridge, then choose cycles $c_1,\ldots, c_h$ as in  lemma \ref{lem: choosecycles}. There are two possible types of  terms in an expansion of the determinant which can involve $\mu_e \overline{\mu}_e$, namely those involving the top right and bottom left corners $c_1(\alpha, \mu)$ and $c_1(\alpha, \overline{\mu})$, and those involving the bottom right hand corner as before. In total they contribute
$$\alpha_e \det(\Lambda_G)-  \alpha_e^2 \det(\Lambda_G^{1,1})   = \alpha_e \left(  \Psi_G - \alpha_e \Psi_{G\backslash e}\right)   = \alpha_e  \Psi_{G\q e}  \ .$$
The minus sign on the left comes from  $(-1)^h (-1)^{h-1}=-1$; the first equality is a consequence of lemma \ref{lem: MinorsofLambdaG} (i); and the second equality is the contraction-deletion identity: $\Psi_G = \alpha_e \Psi_{G\backslash e}+ \Psi_{G \q e}$. 
In all cases,  we conclude  that the coefficient of $\mu_e \overline{\mu}_e$ in  $\det \widetilde{\Lambda}_G  (\mu,0)$ is $ \alpha_e \Psi_{G\q e}$, in agreement with  \eqref{PhiGasForest}.
\end{proof}

Before proceeding to the case $e_1, e_2$ distinct, we consider some boundary cases. 

\begin{lem}  \label{lem: deletetadpoles} Let $e$ be a tadpole in $G$. Then 
\[ 
\det \widetilde{\Lambda}_G(\mu,0) = \alpha_e \det \widetilde{\Lambda}_{G\backslash e}(\mu,0)  
\]
\end{lem}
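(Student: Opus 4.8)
The plan is to exploit the fact that a tadpole is itself a one-edge cycle, so it contributes a direct summand to the homology that decouples completely from the rest of the generalised Laplacian. First I would observe that if $e$ is a tadpole then $\partial e = t(e) - s(e) = 0$, so $e \in H_1(G;\Z)$; moreover deleting $e$ lowers the loop number by one and the cycle space splits as $H_1(G;\Z) = \Z e \oplus H_1(G\backslash e;\Z)$. I would therefore choose a basis $c_1,\ldots,c_h$ of $H_1(G;\Z)$ with $c_1 = e$ and with $c_2,\ldots,c_h$ supported on $G\backslash e$ (i.e.\ having vanishing $e$-coefficient). Since $e$ is a self-edge, I would also take the momentum routing to satisfy $\mu_e = 0$, which is always permissible.

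Next I would note that both choices are harmless for the quantity we wish to compute: by \S\ref{sect: ChangeofBasis} the determinant $\det\widetilde{\Lambda}_G(\mu,0)$ is unchanged under a change of integral homology basis (the factor being $\det(\widetilde{P})^2 = 1$), and by the paragraph on change of momentum routing it is unchanged under a change of routing (conjugation by $\widetilde{S}$ with $\det\widetilde{S} = 1$). Hence it suffices to compute the determinant for the specific basis and routing just fixed.

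With these choices I would read off the first row and column of the block matrix \eqref{LambdaGblockmatrix}. Since $c_1 = e$ involves only the edge $e$ with coefficient $1$, while the $c_k$ with $k\geq 2$ do not involve $e$, the Laplacian block satisfies $(\Lambda_G)_{1,1} = \langle e,e\rangle_0 = \alpha_e$ and $(\Lambda_G)_{1,k} = (\Lambda_G)_{k,1} = 0$ for $2\leq k\leq h$; and because $\mu_e = 0$, the top-right and bottom-left corners satisfy $c_1(\alpha,\mu) = \mu_e\alpha_e = 0$ and $c_1(\alpha,\overline{\mu}) = 0$. Thus the first row and column of $\widetilde{\Lambda}_G(\mu,0)$ are $(\alpha_e,0,\ldots,0)$. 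The complementary minor indexed by $\{2,\ldots,h+1\}$ involves only cycles and momenta supported on $G\backslash e$, and none of its entries involves $\alpha_e$ or $\mu_e$; I would identify it verbatim with the generalised Laplacian matrix $\widetilde{\Lambda}_{G\backslash e}(\mu,0)$, using that $G\backslash e$ is still connected with the same vertices and external legs and that the restricted routing is still a valid routing. A single Laplace expansion along the first row then yields $\det\widetilde{\Lambda}_G(\mu,0) = \alpha_e\det\widetilde{\Lambda}_{G\backslash e}(\mu,0)$.

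The computation is essentially bookkeeping, so there is no deep obstacle. The one point requiring care is the clean identification of the complementary minor with $\widetilde{\Lambda}_{G\backslash e}(\mu,0)$, namely checking that restricting the chosen cycles and the momentum routing to $G\backslash e$ reproduces exactly the data defining the smaller Laplacian; this is precisely where the choices $c_1 = e$ and $\mu_e = 0$ do all the work, isolating the tadpole in its own $1\times 1$ block with diagonal entry $\alpha_e$.
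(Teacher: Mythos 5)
Your proof is correct and follows essentially the same route as the paper: choose a homology basis with $c_1=e$ and the remaining cycles supported on $G\backslash e$, so the tadpole sits in its own block, then Laplace-expand. The only (cosmetic) difference is that you clear the off-diagonal entries $\alpha_e\mu_e$, $\alpha_e\overline{\mu}_e$ by normalising the routing to $\mu_e=0$ (legitimate, and the determinant is routing-independent since $\det\widetilde{S}=1$), whereas the paper keeps $\mu_e$ arbitrary and removes the entry by subtracting $\overline{\mu}_e$ times the first row from the last row before expanding.
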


\begin{proof} By lemma \ref{lem: choosecycles},  choose the first cycle in a basis $c_1,\ldots, c_h$ of $H_1(G)$ to be  $c_1= e$, and all other $c_i$ to be independent of $e$.  The matrix $\widetilde{\Lambda}_G(\mu,0)$ has the form 
\[
  \widetilde{\Lambda}_G(\mu,0)  = \left(
  \begin{array} {c|ccc|c}
   \alpha_{e}  &  0  &\ldots &  0 &   \alpha_{e} \mu_{e} \\ \hline
0  &  & & &  c_2(\alpha, \mu)  \\
  \vdots &  & \Lambda_{G\backslash e}    &&   \vdots \\
 0 &  & &&   c_h(\alpha, \mu) \\
   \hline
\alpha_{e}  \overline{\mu}_{e}    & c_2(\alpha, \overline{\mu}) & \ldots  & c_h(\alpha, \overline{\mu})&  \sum_{i=1}^N \alpha_i \mu_i \overline{\mu_i}
  \end{array}  \right)  \ . 
  \]
The result follows by subtracting $\overline{\mu}_{e}$ times the first row from the final row,  and performing an expansion of the determinant along the first column. 
\end{proof} 

Next we verify that the theorem is true in the massless case for  `banana'  graphs, i.e.,  graphs with  exactly  two vertices joined by $\geq 3$ internal edges. 

\begin{example} \label{example:detLambdaforBanana} 
Let $G$ have  two vertices $v_1,v_2$, and  $n \geq 3$ edges. Orient them so that they all  have source $v_1$ and target $v_2$, except for $e_n$, which has source $v_2$ and target $v_1$. Consider the basis of cycles:
\[
c_1 = e_1 + e_n \  , \ c_2 = e_2 + e_n \  , \  \ldots \ , \  c_h = e_{n-1} + e_n 
\]
where $h= n-1.$  With these conventions, the graph Laplacian is 
   \[
 \Lambda_G = 
 \left(
  \begin{array} {cccc}
   \alpha_{1} + \alpha_n  &   \alpha_n  &\ldots &   \alpha_n \\ 
  \alpha_n &   \alpha_2 +\alpha_n & \ldots &  \alpha_n  \\
  \vdots &  &   \ddots   & \vdots  \\
 \alpha_n &  \ldots  &   \alpha_n & \alpha_{n-1} + \alpha_n   \\
  \end{array}  \right)  \ .  
 \] 
 It follows from the explicit description of $\widetilde{\Lambda}_G(\mu,m)$ in the form \eqref{LambdaGblockmatrix} that 
 the coefficient  of $\mu_1 \overline{\mu}_2$ in $\det \widetilde{\Lambda}_G(\mu,0)$ is  
  $\alpha_1 \alpha_2 \det \Lambda_G^{1,2}$, 
  where 
\[ 
\Lambda_G^{1,2}  =
\left(
  \begin{array} {cccc}
    \alpha_n  &   \alpha_n  &\ldots &   \alpha_n \\ 
  \alpha_n &   \alpha_3 +\alpha_n & \ldots &  \alpha_n  \\
  \vdots &  &   \ddots   & \vdots  \\
 \alpha_n &  \ldots  &   \alpha_n & \alpha_{n-1} + \alpha_n   \\
  \end{array}  \right)  \ .  
\]
By subtracting the first row from all other rows, we see that 
$\det \Lambda_G^{1,2} = \alpha_3\ldots\alpha_n$, and therefore the coefficient of $\mu_1\overline{\mu}_2$ in $\det(\Lambda_G(\mu,0))$ is $\alpha_1\ldots \alpha_n$. By symmetry, the same holds for any $\mu_i \overline{\mu}_j$, where $i\neq j$.  This agrees with the coefficient of $\mu_i \overline{\mu}_j$ in $\Phi_G(q)$, by its definition (there is only one spanning two-tree). 
The case when $i=j$ follows from lemma \ref{lem: casemuemubare}, and  it follows that 
$\det  \widetilde{\Lambda}_G(\mu,0) = \Phi_G(q) $
as  claimed.
\end{example}

\subsection{Proof of theorem \ref{thm: detLambdaisXi} } 

Now let $e_1,e_2\in E(G)$ be  two distinct edges and consider the coefficient of $\mu_{e_1} \overline{\mu}_{e_2}$ in $ \det( \widetilde{\Lambda}_G(\mu,0)) $. 
Suppose first of all that one of the edges, say $e_1$, is a bridge. By the argument given above,  no terms in the  matrix   $ \widetilde{\Lambda}_G(\mu,0) $  depend on $\mu_{e_1}$ except for the bottom right-hand corner, and so  the coefficient of $\mu_{e_1} \overline{\mu}_{e_2}$ in $ \det( \widetilde{\Lambda}_G(\mu,0)) $ is zero. Now since $e_1$ is a bridge, it follows that $e_2$ lies in just one of the two connected components of $G \backslash e_1$, and therefore:
\[ 
  \phi_G^{\{s(e_1), s(e_2)\}, \{t(e_1), t(e_2)\}}    =    \phi_G^{\{s(e_1), t(e_2)\}, \{t(e_1), s(e_2)\}} = 0 \ . 
  \] 
 Now suppose that neither $e_1$ nor $e_2$ is a bridge, yet $G\backslash \{e_1,e_2\}$ has two connected components. In particular,  $e_2$ is a bridge in $G\backslash \{e_1\}$.   In this case we may choose the cycles $c_1, \ldots, c_h$ such that $c_2, \ldots, c_h$ are cycles in $G\backslash \{e_1\}$ which   do not depend on $e_2$. Thus we can assume that only cycle $c_1$ involves either  $e_1$ or $e_2$, and furthermore  we can direct the edges $e_1,e_2$ in such a way that their coefficient in $c_1$ are both $1$.   In this case, the coefficient of $\mu_{e_1} \overline{\mu}_{e_2}$  in $ \det( \widetilde{\Lambda}_G(\mu,0)) $  is 
 \[ 
  (-1)^{h}(-1)^{h-1} \alpha_{e_1} \alpha_{e_2} \det  \widetilde{\Lambda}^{1,1}_G(\mu,0) = - \alpha_{e_1}\alpha_{e_2} \Psi_{G\backslash \{e_1\}}
 \]
 Since the sources of $e_1,e_2$ lie in different components of  $G\backslash \{e_1,e_2\}$, we have 
 \[
   \phi_G^{\{s(e_1), s(e_2)\}, \{t(e_1), t(e_2)\}} = 0\ .
 \] 
 There is a   one-to-one correspondence between  the set of  spanning forests with two components $T_1 \cup T_2$ such that $ \{s(e_1), t(e_2)\} \subset V(T_1)$ and  $\{t(e_1), s(e_2)\} \subset V(T_2)$, and the set spanning trees $T= T_1 \cup T_2 \cup e_2$ of $G\backslash \{e_1\}$. This implies that
 \[ 
 \phi_G^{\{s(e_1), t(e_2)\}, \{s(e_1), t(e_2)\}}  = \alpha_{e_1} \alpha_{e_2}    \Psi_{G\backslash \{e_1\}}
  \]
  and again the coefficient of $\mu_{e_1} \overline{\mu}_{e_2}$ in $\det  ( \widetilde{\Lambda}_G(\mu,0)) $ and $\Phi_G(q)$ are in agreement.

 Finally, consider the case when $G \backslash \{e_1, e_2\}$ is connected.  Choose cycles $c_1,\ldots, c_h$ as in lemma \ref{lem: choosecycles}. By  expanding  $\det \,\widetilde{\Lambda}_G(\mu, 0)$,      its coefficient of $\mu_{e_1} \overline{\mu}_{e_2}$    is 
$$  \alpha_{e_1} \alpha_{e_2} \det(\Lambda_G^{1,2})  =  \pm  \alpha_{e_1} \alpha_{e_2} \left(        \phi_G^{\{s(e_1), s(e_2)\}, \{t(e_1), t(e_2)\}}  -      \phi_G^{\{s(e_1), s(e_2)\}, \{t(e_1), t(e_2)\}}          \right)   \ ,   $$
by  lemma \ref{lem: MinorsofLambdaG}. By proposition \ref{prop: PhiGasForest},  the coefficient of $\mu_{e_1} \overline{\mu}_{e_2}$
 in $ \det  \widetilde{\Lambda}_G(\mu,0)$  is equal to that of 
 $\mu_{e_1} . \mu_{e_2}$ in $\Phi_{G}(q)$ up to a possible sign. It remains to determine the sign.

If the coefficient of $\mu_{e_1} \overline{\mu}_{e_2} $ vanishes in $ \det  \widetilde{\Lambda}_G(q,0)$ then there is nothing to prove.  Otherwise,  it is enough to select a single non-zero monomial  $m$ in the  coefficient of $\mu_{e_1} \overline{\mu}_{e_2} $   in $ \det  \widetilde{\Lambda}_G(\mu,0)$, and compare with the corresponding  coefficient in $\Phi_G(q)$. Such a monomial corresponds to  a unique spanning forest $\mathcal{F}=T_1 \cup T_2$ via the formula $m = \prod_{e\notin \mathcal{F}} \alpha_e$.  The sign is determined by setting  all $\alpha_e$ to $0$ for $e \in \mathcal{F}$, which corresponds to contracting all edges in $\mathcal{F}$.   Since 
\[
\widetilde{\Lambda}_{G/ \mathcal{F}} (\mu,0)=    \widetilde{\Lambda}_{G} (\mu,0)\Big|_{\alpha_e=0,\,  e\in \mathcal{F}}\ ,
\] 
it therefore suffices to prove the identity  for the graph  $G'=G / \mathcal{F}$. 
Because $\mathcal{F}$ has two components,   $G'$ has exactly  two vertices. We can assume that  $G' \backslash \{e_1,e_2\}$ is  connected since we have already considered the disconnected case.  
By lemma \ref{lem: deletetadpoles}, we can  remove all tadpoles  from $G'$, and thus assume that $G'$ is  a banana graph with two vertices. The result then follows from  example \ref{example:detLambdaforBanana}.

\subsection{Quaternionic case (proof of theorem \ref{thm: detLambda4isXi})}
It follows from  proposition \ref{prop: Hermmatrices}  that there exists a well-defined polynomial $F$ in  the $\alpha_i, m_i$ and the components $\lambda_e, \nu_e$ of the momenta $\mu_e = \lambda_e + \Qj \nu_e$,  and their conjugates, such that 
\[ 
\mathrm{Det} (\widetilde{\Lambda}_G(\mu,m) ) = \det ( \chi_{\widetilde{\Lambda}_G(\mu,m)}  ) = F^2 \ .
\]
By inspection of \eqref{ExpandedLaplacianBlockMatrix}, the middle term  is at most quadratic in the mass squares $m_e^2$, and therefore $F$ may be written in the form 
\[
 F = F_0 + \sum_{e\in E_G} m_e^2 F_e\ . 
\]
More precisely, by expanding $\det (\chi_{\widetilde{\Lambda}_G(\mu,m)} ) $  using the representation \eqref{ExpandedLaplacianBlockMatrix},  one sees that the quadratic terms in the mass squares all  arise from the term 
\[
 X^2 \det \chi_{\Lambda_G}  = (X \det(\Lambda_G))^2
 \]
using  proposition \ref{prop: Hermmatrices} (ii).  By comparing terms in $m_e^4$,  it follows from this that 
\[  
\sum_{e\in E_G} m_e^2 F_e =    \left( \sum_{e\in E_G} m_2^2 \alpha_e \right) \det \left(\Lambda_G\right) =     \left( \sum_{e \in E_G} m_2^2 \alpha_e \right)\Phi_G
\]
and therefore 
\[ 
\mathrm{Det} (\widetilde{\Lambda}_G(\mu,0) ) = F_0^2 \ .
\]
We are thus reduced to proving the case when all masses are zero.

By inspection of the matrix \eqref{ExpandedLaplacianBlockMatrix},  the determinant is of degree at most four in the momentum components $\lambda_e, \nu_e, \overline{\lambda}_e, \overline{\nu}_e$, and therefore $F$ is of degree at most two in them. It suffices to check that  for any two edges $e,f$, the  coefficient of $\lambda_e \overline{\lambda}_f$ (or $\overline{\nu}_e \nu_f$, etc)  in $F$ coincides with that of $\Phi_G(q)$. Since $F$ does not depend on the choice of momentum routing, we can assume that the momenta through edges $e, f$ are purely complex, since they lie in a copy of $\R^2\cong \C $ inside $\R^4 \cong  \Qt$. In other words, we may assume that  $\nu_e = \nu_f=0$ is zero. Since the coefficient of $\lambda_e \overline{\lambda}_f$ in $F$ does not depend on the other momentum variables, we can assume that the latter are complex as well.  Now if $\nu_e=0$ for all edges $e$, the complex adjoint Laplacian 
$
  \chi_{\widetilde{\Lambda}_G(\mu,m)}
$ 
is simply the image under $\chi$ of the generalised Laplacian with complex momenta. Thus theorem \ref{thm: detLambda4isXi} follows from theorem \ref{thm: detLambdaisXi} and proposition \ref{prop: Hermmatrices} (ii).

\section{Invariant forms associated  to graphs} \label{section: invariantformsofgraphs} 
We define  invariant differential forms associated to generalised graph Laplacians.

\subsection{Invariant forms} \label{sect: invariantforms}
Let us define a graded  exterior algebra 
\[ 
B_{\can} = \bigwedge \bigoplus_{k\geq 1}  \Q\,  \beta^{2k+1}
\]
in which each element $\beta^{2k+1}$ has degree $2k+1$. It is a graded-commutative Hopf algebra with respect to the coproduct
\begin{equation} 
\Delta_{\can}  :  B_{\can} \To   B_{\can}\otimes B_{\can} 
\end{equation}
for which the generators $\beta^{2k+1}$ are primitive:
\[
\Delta_{\can} \left(\beta^{2k+1}\right)  =  1 \otimes \beta^{2k+1} + \beta^{2k+1} \otimes 1  \ . 
\]
We shall call  a primitive  element \emph{odd} or \emph{even} depending on whether $k$ is odd or even. 
The Hopf subalgebra generated by the even forms $\beta^{4k+1}$ was denoted by $\Omega_{\can} \subset B_{\can}$ in  \cite{CanonicalForms}.
Let $R= \bigoplus_{n\geq 0} R^0$ be a graded-commutative, unitary,  differential graded algebra with differential $d: R^n\rightarrow R^{n+1}$ of degree $1$. 
\begin{defn} For any invertible matrix $X \in \mathrm{GL}_h(R_0)$, let us denote by 
\[
\beta^{2k+1}_X = \tr \left( \left( X^{-1} dX \right)^{2k+1}  \right)  \quad \in R^{2k+1} \ .
\]
By taking exterior products, we can define $\beta_X$ for any $\beta \in B_{\can}$. 
\end{defn} 
We briefly recall  properties of these forms. Self-contained proofs of these facts, which are classical, may be found in \cite{CanonicalForms}. 
\begin{enumerate}
\item   The forms $\beta^{2k+1}_X$ are closed:  $d \beta^{2k+1}_X =0 $
\vspace{0.05in}

\item  One has $ \beta^{2k+1}_{X^T}  = (-1)^k \beta^{2k+1}_X \ . $ In particular, the odd primitive forms vanish on the space of symmetric matrices. 

\vspace{0.05in}
\item The forms $\beta^{2k+1}_X$ are bi-invariant:
\[
\beta^{2k+1}_{AXB} = \beta^{2k+1}_X 
\]
 for any matrices $A, B \in \GL_h(R_0)$  which are constant, i.e.,  $dA= dB=0$.

\vspace{0.05in}
\item The primitive forms are additive:  $\beta^{2k+1}_{X_1 \oplus X_2} = \beta^{2k+1}_{X_1} + \beta^{2k+1}_{X_2}$  with respect to direct sums. It follows for general $\beta \in B_{\can}$ that
\[
\beta_{X_1 \oplus X_2 }  = \sum   \beta'_{X_1} \wedge \beta''_{X_2}  
\]
where we write $\Delta_{\can} \beta = \sum \beta' \otimes \beta''$. 

\vspace{0.05in}\item Let $\lambda \in \left(R^0\right)^{\times}$ be invertible. Then  
\[ 
\beta_{\lambda X} = \beta_X \quad \hbox{ for any } \ \beta \in B_{\can}
\]

\vspace{0.05in}\item   If $X$ has rank $h$, then 
\[ 
\beta^{2k+1}_{X} =  0 \qquad  \hbox{ if } \quad k\geq h\ .
\]

\end{enumerate}

\subsection{Canonical forms associated to graphs}
The canonical forms introduced in \cite{CanonicalForms}, which do not depend on kinematics,  will henceforth be called `of the first kind' since their denominators involve the first Symanzik polynomial.

\begin{defn}  Let $\omega \in B_{\can}$.  For any connected graph $G$ with internal particle masses and external momenta in $\R^2\cong \C$,  define:
\vspace{0.05in}

\emph{(i).} The \emph{first canonical form} to be 
\begin{equation}
\omega_G =     \omega_{\Lambda_G} \ , 
\end{equation} 
where $\Lambda_G$ is any choice of graph Laplacian matrix. 
\vspace{0.05in}

\emph{(ii).}   Define the  \emph{second  canonical form}  to be 
\begin{equation}
\varpi_G  (\mu,m)= \omega_{\widetilde{\Lambda}_G(\mu,m)}
\end{equation}
where $\widetilde{\Lambda}_G(\mu,m)$ is any choice of generalised graph Laplacian matrix.  It depends \emph{a priori} on the external momenta $\mu_e$ and particle masses $m_e$. 
\vspace{0.05in}

\emph{(iii).} Define a \emph{mixed canonical form}, or simply \emph{canonical form} to be any polynomial in the canonical forms $\omega_G$, $\varpi_G$ of the first or second kinds.
\end{defn} 
The case when momenta lie in $\R^4$ is discussed below. 

Since the graph Laplacian $\Lambda_G$ is symmetric,  odd canonical forms of the first kind are identically zero.
This is not the case for the forms of the second kind. A mixed canonical form is abstractly represented, therefore, by an element of 
\begin{equation} \label{Ocan} 
\Ocan_{\can} =   B_{\can} \otimes \Omega_{\can}
\end{equation} 
The following theorem summarizes some basic properties of these forms. 

\begin{thm} The primitive  canonical graph forms are well-defined, projectively invariant closed differential forms of the shape: 
\begin{eqnarray} 
\omega^{4k+1}_G  & =   & \beta^{4k+1}_{\Lambda_G} =   \frac{ N_1(\alpha, d\alpha) } {\Psi_G^{k+1}} \nonumber \\ 
\varpi^{2k+1}_G (q,m)  & =  &   \beta^{2k+1}_{\widetilde{\Lambda}_G(\mu,m)} = \frac{ N_2(\alpha, d\alpha , m_e, \cq_v, \overline{\cq}_v)} {\Xi_G(q,m)^{k+1}} \nonumber 
\end{eqnarray} 
where $N_1, N_2$ are polynomials with rational coefficients.  Let 
\[X'_G= V(\Psi_G) \cup V(\Xi_G(q,m)) \subset \Pro^{E_G}\]
denote  the generalised graph hypersurface. For any canonical form $\omega\in\Ocan_{\can}$ of degree $n$, $\omega_G$ is a closed, projectively invariant form of degree $n$:
\[
\omega_G \in \Omega^n( \Pro^{E_G} \backslash X'_G)
\]
which, under the assumption of  generic momenta (\S\ref{Genericity}), is smooth on the open coordinate simplex $\sigma_G \subset   \Pro^{E_G}(\R)$ defined by $\alpha_1, \ldots, \alpha_{e_G}>0$. 
\end{thm}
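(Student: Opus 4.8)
The plan is to reduce everything to the primitive forms $\beta^{2k+1}_X$ for $X=\Lambda_G$ or $\widetilde{\Lambda}_G$, to dispose of the formal properties using the facts already recalled in \S\ref{sect: invariantforms}, and to isolate the one nontrivial point, which is the precise power of the denominator. Well-definedness is immediate from bi-invariance (property (3)): the ambiguity in $\Lambda_G$ and $\widetilde{\Lambda}_G$ is precisely the action of the indeterminacy group \eqref{GroupofIndeterminancy} together with change of edge orientation, all of which replace $X$ by $AXB$ with $A,B$ constant and invertible, so that $\beta_{AXB}=\beta_X$. Closedness is property (1), and since $d$ is a graded derivation a polynomial in closed forms is closed, covering a general $\omega\in\Ocan_{\can}$. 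For projective invariance, set $A=X^{-1}dX$; the entries of $\Lambda_G$ and $\widetilde{\Lambda}_G$ are homogeneous of degree one in the $\alpha_e$, so Euler's relation gives $\iota_E\,dX=X$ and hence $\iota_E A=I$ for the Euler vector field $E$. Because $\iota_E$ is an antiderivation and $\tr(A^{2j})=0$ by graded cyclicity of the trace, one gets $\iota_E\,\tr(A^{2k+1})=\tr(A^{2k})=0$; with closedness this makes the form horizontal and scale invariant, hence it descends to $\Pro^{E_G}$.

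The substance is the denominator. Writing $\mathrm{adj}(X)=(\det X)X^{-1}$, the claims $\omega^{4k+1}_G=N_1/\Psi_G^{k+1}$ and $\varpi^{2k+1}_G=N_2/\Xi_G^{k+1}$ are equivalent, via $\det\Lambda_G=\Psi_G$ and $\det\widetilde{\Lambda}_G=\Xi_G$ (Theorem \ref{thm: detLambdaisXi}), to the universal divisibilities $(\det X)^{3k}\mid\tr((\mathrm{adj}(X)\,dX)^{4k+1})$ for symmetric $X$ and $(\det X)^{k}\mid\tr((\mathrm{adj}(X)\,dX)^{2k+1})$ for Hermitian $X$, which I would prove for matrices of independent indeterminate entries and then specialise. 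Equivalently, $\beta^{2k+1}_X$ has a pole of order at most $k+1$ along the irreducible hypersurface $\{\det X=0\}$. To see this I work at a general, hence rank-$(N-1)$, point of that hypersurface; after a constant congruence (permitted by bi-invariance, and structure-preserving) I bring the base point to $\mathrm{diag}(I_{N-1},0)$ and write $X=\left(\begin{smallmatrix}P&q\\r^{T}&s\end{smallmatrix}\right)$ with $P$ invertible, so that $\det X=\det(P)\,\delta$ with Schur complement $\delta=s-r^{T}P^{-1}q$ a local equation for the hypersurface. A direct computation gives the rank-one pole decomposition $X^{-1}=R_0+\delta^{-1}ab^{T}$ with $R_0,a,b$ regular, whence $A=X^{-1}dX=\xi+\delta^{-1}a\eta^{T}$ with $\xi=R_0\,dX$ and $\eta^{T}=b^{T}dX$ regular; here $b=a$ in the symmetric case and $b=\overline{a}$ in the Hermitian case.

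The key step — and the one I expect to be the main obstacle — is the sharp counting that produces $k+1$ rather than the naive pole order. Expanding $\tr(A^{2k+1})$, a monomial with $p$ singular factors has pole order $p$ and, using cyclicity and $a\eta^{T}\xi^{n}a\eta^{T}=a(\eta^{T}\xi^{n}a)\eta^{T}$, collapses to $\pm\,c_1\cdots c_p$ with scalar forms $c_i=\eta^{T}\xi^{n_i}a$ of degree $n_i+1$ and $\sum_i n_i=2k+1-p$ (resp. $4k+1-p$). Every empty gap $n_i=0$ contributes the same $1$-form $c_0=\eta^{T}a$, so a nonzero term has at most one empty gap, for otherwise $c_0\wedge c_0=0$; in the Hermitian case this already forces $\sum_i n_i\ge p-1$ and hence $p\le k+1$. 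In the symmetric case $b=a$ yields the extra relation $c_i=(-1)^{\binom{n_i+1}{2}}c_i$, obtained by transposing the palindromic product $dX\,R_0\cdots dX$ and tracking the graded-reversal sign, so that $c_i=0$ unless $n_i\equiv 0,3\pmod 4$; since the smallest positive admissible $n_i$ is $3$, one gets $\sum_i n_i\ge 3(p-1)$ and again $p\le k+1$. This bounds the pole order by $k+1$, yields the divisibilities after specialisation, and gives the displayed shapes. For quaternionic momenta the same analysis applied to $\chi_{\widetilde{\Lambda}_G}$, whose determinant is $\Xi_G^{2}$ by Theorem \ref{thm: detLambda4isXi}, shows merely that the forms are regular off $\{\Xi_G=0\}$, which is all that is needed below.

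Finally, membership and smoothness are formal. Each primitive form has poles only along the zero locus of the relevant determinant, so any $\omega\in\Ocan_{\can}$, being a sum of wedge products of primitives, has poles contained in $V(\Psi_G)\cup V(\Xi_G)=X'_G$ and therefore lies in $\Omega^{n}(\Pro^{E_G}\setminus X'_G)$, closed and projectively invariant by the first step. On the open simplex $\sigma_G$ all $\alpha_e>0$, so $\Psi_G>0$ since each spanning-tree monomial is positive, and, under genericity \eqref{Genericity} and in the Euclidean region, $\Xi_G>0$ by the vanishing analysis of \S\ref{sect: Symanzik}; hence both denominators are nonvanishing and $\omega_G$ is smooth on $\sigma_G$.
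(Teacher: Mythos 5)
Your proposal is correct, and the formal parts (well-definedness, closedness, projective invariance via $\iota_E(X^{-1}dX)=I$ and $\tr(A^{2k})=0$, smoothness on $\sigma_G$ from positivity of $\Psi_G$ and $\Xi_G$ under genericity) coincide with the paper's argument, which likewise reduces everything to the bi-invariance properties of \S\ref{sect: invariantforms} and to Theorem \ref{thm: detLambdaisXi} for the identification of the determinant. The genuine difference is in the one substantive step, the exponent $k+1$ in the denominator: the paper simply cites \cite{CanonicalForms}[Theorem 5.2] for "the order of this power," whereas you prove it from scratch via the rank-one pole decomposition $X^{-1}=R_0+\delta^{-1}ab^{T}$ at a general rank-$(N-1)$ point of the irreducible hypersurface $\{\det X=0\}$, collapsing each trace monomial to a product of scalar forms $c_i=\eta^{T}\xi^{n_i}a$ and counting: at most one empty gap because $c_0\wedge c_0=0$, giving $p\le k+1$ in the Hermitian case, and the additional transposition constraint $n_i\equiv 0,3\pmod 4$ in the symmetric case giving the same bound for $\beta^{4k+1}$. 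This buys something real: the cited theorem of \cite{CanonicalForms} concerns only the symmetric case, so your unified treatment actually supplies the Hermitian pole bound that the paper's proof leaves implicit, and the divisibility-for-indeterminate-entries-then-specialise framing cleanly handles graphs where $\Psi_G$ or $\Xi_G$ factor. Two very minor points: the statement that $N_2$ depends on the momenta only through $\cq_v,\overline{\cq}_v$ (rather than the routing $\mu_e$) deserves the one extra sentence the paper gives it — the form is routing-independent by bi-invariance under the group \eqref{GroupofIndeterminancy}, and the routing-invariant combinations are generated by the $\cq_v$ — and the congruence to $\mathrm{diag}(I_{N-1},0)$ is unnecessary since a rank-$(N-1)$ Hermitian matrix always has an invertible principal $(N-1)\times(N-1)$ minor; neither affects correctness.
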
 
\begin{proof}
The fact that the forms are  closed and well-defined follows from bi-invariance \S\ref{sect: invariantforms}, as does projective invariance. 
It is clear from the definition that their denominators only involve a power of the determinant: the order of this power   follows from \cite{CanonicalForms}[Theorem 5.2], and the expression for the determinant  as a graph polynomial  follows from theorem \ref{thm: detLambdaisXi}.
The numerator $N_2$ is a priori only a polynomial in the momentum routing $\mu_e$, but again by bi-invariance, it is independent of the choice of momentum routing, and hence only depends on external momenta $\cq_v, \overline{\cq}_v$. 
The fact that $\omega$ is smooth on the open simplex follows from the assumption that kinematics are generic, and hence $\Xi_{G}(q,m)$ and $\Psi_G$ are strictly positive on $\sigma_G$. 
\end{proof}

\begin{rem} 
For any Hermitian matrix $X$, one has
\begin{equation}  \label{omegaofconjugate}
\overline{\beta^{2k+1}_{X}} = (-1)^k \beta^{2k+1}_{X}  \ .
\end{equation} 
As a consequence, $\varpi_G^{2k+1}$ is real if $k$ is even, and imaginary if $k$ is odd. 
In the case when $k$ is even, $\varpi_G$ will be a rational function of  the inner products $q_i. q_j$ of external momenta. In the case when $k$ is odd, $\varpi^{2k+1}_G$  may also depend on the quantities
\begin{equation}
\lambda_{ij}   = \cq_i \overline{\cq}_j-  \cq_j \overline{\cq}_i  =  \pm 2 \sqrt{ (q_i. q_j)^2 -  q_i^2 q_j^2} \ 
\end{equation} 
\end{rem}

\begin{rem}
If  $G$ has only two external momenta $q_1=-q_2$,   $\widetilde{\Lambda}_G(\mu,m)$ is equivalent to  a symmetric matrix because 
 we can find a momentum routing so that  the $\mu_e$ lie on the  real line spanned by $q_1$. 
In this case,  $\varpi^{2k+1}_G$ vanishes when $k$ is odd.
\end{rem} 

\subsection{Further properties of canonical forms}
The following properties of canonical forms are almost immediate from the definitions.

\begin{lem}

(i).  Let $\pi$ be an automorphism of $G$.  It acts on $R^0$ by permutation of the edge variables $x_e$.  For any canonical form of mixed type $\omega \in \Ocan_{\can}$,  
\[
 \pi^* \omega_G  = \omega_G \]

(ii). Let $\gamma \subset G$ be any subset of edges such that $\gamma$ neither  contains a loop, nor is mass-momentum spanning. Then 
\[
  \omega_G\Big|_{L_{\gamma}}  = \omega_{G/\gamma}\  \]
  where $L_{\gamma} \subset \Pro^{E_G}$ denotes the linear subspace defined by $\alpha_e=0$ for  $e\in \gamma$. 
\end{lem}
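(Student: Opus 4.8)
The plan is to reduce both assertions to the bi-invariance and the choice-independence of the forms $\beta_X$ recalled in \S\ref{sect: invariantforms}, together with the elementary fact that any pullback of differential forms is a homomorphism of differential graded algebras (it commutes with $d$, with $\wedge$ and with $\tr$).

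For part (i), I would first note that the pullback $\pi^*$ along the edge permutation satisfies $\pi^*\beta^{2k+1}_X = \beta^{2k+1}_{\pi^* X}$, where $\pi^*X$ means applying $\pi^*$ to each entry of the matrix; hence $\pi^*\omega_G = \omega_{\pi^*\widetilde{\Lambda}_G}$ for the second-kind factors, and similarly with $\Lambda_G$ for the first-kind factors. Because $\pi$ is an automorphism of the decorated graph $(G,w,\underline{n},\underline{M})$, it induces an automorphism of $H_1(G;\Z)$ and carries the chosen momentum routing to another valid routing; concretely $\pi^*\widetilde{\Lambda}_G = \widetilde{Q}^{*}\,\widetilde{\Lambda}_G\,\widetilde{Q}$ for a constant matrix $\widetilde{Q}$ in the group of indeterminacy \eqref{GroupofIndeterminancy}. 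Bi-invariance (property (3) of \S\ref{sect: invariantforms}) then yields $\omega_{\pi^*\widetilde{\Lambda}_G} = \omega_{\widetilde{\Lambda}_G} = \omega_G$; equivalently, one simply invokes the independence of $\omega_G$ from the cycle basis and momentum routing.

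For part (ii) the first step is a matrix identity valid for \emph{any} choice of combinatorial data. Since $\gamma$ contains no loop it is a forest, so contraction of its edges induces an isomorphism $H_1(G;\Z)\overset{\sim}{\To} H_1(G/\gamma;\Z)$ given by discarding the $\gamma$-components of a cycle (acyclicity of $\gamma$ makes this map injective, hence bijective on the rank-$h_G$ groups). Fixing any cycle basis $c_1,\dots,c_h$ and routing $\mu$ of $G$, every entry of $\widetilde{\Lambda}_G$ is a sum of the form $\sum_{e\in E_G}(\cdots)\,\alpha_e$, so that setting $\alpha_e=0$ for $e\in\gamma$ merely drops the $\gamma$-summands. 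Inspecting the block form \eqref{LambdaGblockmatrix}, the surviving entries are precisely those of the generalised graph Laplacian of $G/\gamma$ computed in the induced basis $\{\overline{c}_i\}$, the induced routing $\mu'$ (valid, since contraction preserves momentum conservation) and the induced masses, the corner $X_G$ of \eqref{Xdef} becoming $X_{G/\gamma}$. This gives
\[ \widetilde{\Lambda}_G(\mu,m)\big|_{\alpha_e=0,\,e\in\gamma} = \widetilde{\Lambda}_{G/\gamma}(\mu',m'), \]
and likewise $\Lambda_G|_{L_\gamma}=\Lambda_{G/\gamma}$; this is the identity already exploited for spanning forests in the proof of theorem \ref{thm: detLambdaisXi}.

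The second step is to commute this restriction through the $\beta$-construction. Writing $\jmath:L_\gamma\hookrightarrow\Pro^{E_G}$ for the inclusion, restriction to $L_\gamma$ is the pullback $\jmath^{*}$, and the only non-formal point is that $\jmath^{*}(X^{-1})=(\jmath^{*}X)^{-1}$ demands that $\jmath^{*}X=\widetilde{\Lambda}_{G/\gamma}$ be invertible along $L_\gamma$. Its determinant is $\Xi_{G/\gamma}(q,m)$ by theorem \ref{thm: detLambdaisXi}, and the hypothesis that $\gamma$ is not mass-momentum spanning forces $\Xi_{G/\gamma}\not\equiv 0$ via the vanishing criterion of \S\ref{sect: Symanzik}; for the first-kind factors one uses instead $\Psi_{G/\gamma}\neq 0$, which holds since $G/\gamma$ is connected. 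Hence $\jmath^{*}X$ is generically invertible on $L_\gamma$, the pullback passes through the inverse, and for each primitive generator $\jmath^{*}\beta_{\widetilde{\Lambda}_G}=\beta_{\jmath^{*}\widetilde{\Lambda}_G}=\beta_{\widetilde{\Lambda}_{G/\gamma}}$; since $\jmath^{*}$ respects products this extends to all of $\Ocan_{\can}=B_{\can}\otimes\Omega_{\can}$, giving $\omega_G|_{L_\gamma}=\omega_{G/\gamma}$. I expect the main obstacle to be exactly this interchange of restriction with inversion: one must confirm that the polar locus $V(\Psi_G)\cup V(\Xi_G)$ of $\omega_G$ does not contain the stratum $L_\gamma$, and this is precisely what the two hypotheses on $\gamma$ — acyclicity, and failure to be mass-momentum spanning — are there to guarantee.
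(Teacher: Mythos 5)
Your proof is correct and follows essentially the same route as the paper, which simply defers to the argument for forms of the first kind in \cite[\S 6.2]{CanonicalForms}: invariance under automorphisms via bi-invariance of $\beta_X$ under the group of indeterminacy, and compatibility with contraction via the identity $\widetilde{\Lambda}_G\big|_{\alpha_e=0,\, e\in\gamma}=\widetilde{\Lambda}_{G/\gamma}$ (the same identity used in Lemma 8.1 of the paper). Your write-up usefully makes explicit the two points the paper leaves implicit for forms of the second kind, namely that the change of momentum routing is absorbed by the semidirect-product group of indeterminacy, and that the hypothesis that $\gamma$ is not mass-momentum spanning is exactly what guarantees $\Xi_{G/\gamma}\not\equiv 0$, so that the restricted Laplacian is generically invertible along $L_\gamma$.
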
 
\begin{proof}
The proof for a general canonical form are essentially the same as for the first canonical form (\cite[\S6.2]{CanonicalForms}).
\end{proof}
\subsection{Additional form in degree 1} 
The bi-invariant form 
\[ 
\beta^1_X = \tr(X^{-1} dX) = d \log \det(X) 
\]
is  excluded because it is poorly behaved.  In particular, the  forms 
\[ \omega^1_G = d \log \det(\Lambda_G)  = \frac{d\,\Psi_G}{\Psi_G}  , 
\qquad \hbox{ and } \qquad 
\varpi^1_G = d \log \det(\widetilde{\Lambda}_G(\mu,m))  = \frac{d\, \Xi_G(q,m)}{\Xi_G(q,m)} \]
are not projectively invariant since $\partial \omega^1_G=h_G$ and $\partial \varpi^1_G =h_{G}+1$, where $\partial$ denotes the Euler vector field.  Nonetheless, the combination
\begin{equation}  \label{odefn}
\o^1_G  =  \partial (\omega^1_G  \wedge \varpi^1_G )  = h_G \varpi^1_G - (h_G+1)\, \omega^1_G  
\end{equation}
is  projectively invariant since $\partial^2=0$,  which provides an additional form in degree one.    However,  it  has poles  at infinity and so products of canonical forms  involving $\o^1_G$ often  give  rise to divergent integrals, and will be excluded on this account. See however, \S \ref{secto1}, for an example of a convergent integral  associated to $\o^1_G$. 

\begin{rem} The significance of the form $\o^1_G$ is that it appears in the connection which defines the twisted cohomology underlying dimensional regularisation. For example, one may consider the Mellin transform
\[
\int_{\sigma_G}   \left( \frac{\Psi_G^{h+1}}{\Xi_G(q,m)^{h}} \right)^s \omega_G 
\]
 where the form $\omega_G$ in the integrand is to be viewed as a cohomology class with respect to the twisted connection  $\nabla(\omega) =  -  s \o^1_G \wedge \omega+ d \omega$. 
\end{rem} 

\subsection{Quaternionic forms}
Let $G$ be a connected graph with external momenta in $\R^4$, which we identify with quaternions via \S\ref{sect: QuatMomenta}.

\begin{defn}  \label{def: quatcanonicalform} For any $\omega \in \Omega^{\can}$,  define the \emph{quaternionic canonical form of the second kind} associated to $G$ to be 
\[
\varpi_G^{\Qt} (\mu,m) =   \omega_{\chi_{\widetilde{\Lambda}_G(\mu,m)}}
\]
where $\chi_{\widetilde{\Lambda}_G(\mu,m)}$ is the image under \eqref{chidef} of any choice of quaternionic Laplacian. 
\end{defn}

\begin{rem} \label{rem: Quatoddvanish} Since the  quaternionic graph Laplacian  is Hermitian, its complex adjoint $X= \chi_{\widetilde{\Lambda}_G(\mu,m)}$ satisfies the following identity, where $J_n$ was defined in \S \ref{QuaternionicMatrices},
\[ X  = J_n^{-1} X^T  J_n 
\] 
and is therefore similar to its transpose. For any such matrix $\beta^{2k+1}_X=0$ if $k$ is odd. For this reason only the even canonical forms are relevant here. 
\end{rem} 

The quaternionic forms of the second kind have very similar properties to the complex case. 
We briefly summarise these properties in the following theorem.

\begin{thm}
 Every quaternionic canonical form of degree $n$ of the second kind is a  well-defined,  closed, projectively-invariant differential form 
 \[ 
 \varpi_G^{\Qt}(\mu,m)  \quad \in \quad \Omega^{n} (\Pro^{E_G} \backslash X'_G)
 \]
 which is smooth on the open simplex $\sigma_G$ when momenta are generic \eqref{Genericity}.  It is invariant with respect to automorphisms of the graph $G$ and compatible with contraction of subgraphs which have no loops and are not mass-momentum spanning. 
\end{thm}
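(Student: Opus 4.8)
The plan is to reduce every assertion to the complex case already treated, by exploiting that $\varpi_G^{\Qt}(\mu,m)=\omega_{\chi_{\widetilde{\Lambda}_G(\mu,m)}}$ is literally a canonical form of the second kind attached to the \emph{complex} Hermitian matrix $\chi_{\widetilde{\Lambda}_G(\mu,m)}$ of size $2g+2$ (Proposition \ref{prop: Hermmatrices} (i)). Once this is recognised, closedness and projective invariance follow at once from the general properties of the forms $\beta^{2k+1}_X$ collected in \S\ref{sect: invariantforms}: closedness is property (1), and projective invariance is property (5), since each entry of $\widetilde{\Lambda}_G(\mu,m)$, hence of $\chi_{\widetilde{\Lambda}_G(\mu,m)}$, is homogeneous of degree one in the $\alpha_e$, so that rescaling $\alpha\mapsto t\alpha$ multiplies $\chi_{\widetilde{\Lambda}_G}$ by the scalar $t$ and leaves $\beta_{\chi_{\widetilde{\Lambda}_G}}$ unchanged. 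I would also recall Remark \ref{rem: Quatoddvanish}, which explains why only even $\omega\in\Omega_{\can}$ can contribute.

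Next I would dispose of well-definedness. The ambiguity in $\widetilde{\Lambda}_G(\mu,m)$ is exactly the action of the group \eqref{GroupofIndeterminancy}, namely $\widetilde{\Lambda}_G\mapsto\widetilde{P}^{*}\widetilde{\Lambda}_G\widetilde{P}$ for an invertible quaternionic matrix $\widetilde{P}$ with constant entries, combining a change of integral homology basis with a change of momentum routing. Applying the ring homomorphism $\chi$ of \eqref{chidef} turns this into left and right multiplication of $\chi_{\widetilde{\Lambda}_G}$ by the constant invertible complex matrices $\chi_{\widetilde{P}^{*}}$ and $\chi_{\widetilde{P}}$, and bi-invariance (property (3) of \S\ref{sect: invariantforms}) shows that the resulting form is unchanged.

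To pin down the pole locus and smoothness, I would invoke Theorem \ref{thm: detLambda4isXi}, which gives $\det\chi_{\widetilde{\Lambda}_G(\mu,m)}=\Xi_G(q,m)^2$. Hence the entries of $(\chi_{\widetilde{\Lambda}_G})^{-1}$ are rational with denominator a power of $\Xi_G$, so $\varpi_G^{\Qt}$ is a closed rational $n$-form whose poles lie on $V(\Xi_G)\subseteq X'_G$, the pole order being bounded by applying \cite{CanonicalForms}[Theorem 5.2] to the rank-$(2g+2)$ matrix $\chi_{\widetilde{\Lambda}_G}$. This places $\varpi_G^{\Qt}\in\Omega^n(\Pro^{E_G}\setminus X'_G)$. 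Smoothness on $\sigma_G$ then follows because, under genericity \eqref{Genericity}, $\Xi_G(q,m)>0$ on the open simplex (\S\ref{sect: Symanzik}), so the denominator never vanishes there.

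The remaining points, invariance under $\mathrm{Aut}(G)$ and compatibility with contraction, are where I expect the only real work to lie. Automorphism invariance is quick: an automorphism permutes the edge variables and can be realised by a constant change of basis and routing, so bi-invariance applies once more. For contraction, take $\gamma\subset G$ with no loop and not mass-momentum spanning; since $\chi$ acts entrywise on \eqref{ExpandedLaplacianBlockMatrix} and commutes with the substitution $\alpha_e=0$ for $e\in\gamma$, the restriction of $\chi_{\widetilde{\Lambda}_G}$ to the stratum $L_\gamma=\{\alpha_e=0 : e\in\gamma\}$ equals $\chi_{\widetilde{\Lambda}_{G/\gamma}}$, and the hypotheses on $\gamma$ ensure $\Xi_{G/\gamma}\neq 0$ so that $\varpi_{G/\gamma}^{\Qt}$ is defined. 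The hard part will be to justify that restriction to $L_\gamma$ is legitimate at the level of forms, i.e. that $\varpi_G^{\Qt}$ has no pole along $L_\gamma$ and that the pole order does not jump under the substitution; via $\det\chi_{\widetilde{\Lambda}_G}=\Xi_G^2$ and the non-vanishing of $\Xi_{G/\gamma}$ this reduces to the identical verification in the complex case carried out in \cite[\S6.2]{CanonicalForms}. Granting this, one obtains $\varpi_G^{\Qt}\big|_{L_\gamma}=\varpi_{G/\gamma}^{\Qt}$, which is the claimed compatibility.
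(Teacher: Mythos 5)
Your proposal is correct and follows essentially the same route as the paper, which states this theorem without a separate proof precisely because everything reduces to the complex case applied to the $2g+2$-dimensional Hermitian matrix $\chi_{\widetilde{\Lambda}_G(\mu,m)}$: bi-invariance under the $\chi$-image of the group of indeterminacy gives well-definedness, Theorem \ref{thm: detLambda4isXi} controls the pole locus, and the contraction and automorphism properties are inherited exactly as in \cite[\S6.2]{CanonicalForms}. The step you flag as "the hard part" is already covered by the observation that $\gamma$ not mass-momentum spanning and loop-free forces $\Xi_{G/\gamma}$ and $\Psi_{G/\gamma}$ to be non-vanishing on the relevant stratum, so $L_\gamma\not\subseteq X'_G$ and the rational form restricts without difficulty.
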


\begin{rem} The primitive quaternionic forms may be defined directly using the quaternionic trace instead of passing via complex adjoints, using the identity:
\[ 
 \mathrm{Tr}( (M^{-1} dM)^n) = \tr( (\chi_{M}^{-1} d \chi_M)^n) = \beta^n_{\chi_M}
\]
This uses the fact that 
 $M\in M_n(\Qt)$ is invertible if and only if $\chi_M \in M_{2n}(\C)$ is invertible \S \ref{QuaternionicMatrices}.
\end{rem}

\section{Canonical forms at infinity}  \label{sect: Canatinfinity} 
This section consists of technical results concerning the asymptotic behaviour of canonical forms in the limit when the generalised graph Laplacians become singular.

\subsection{Asymptotic behaviour of generalised Laplacian matrices} 

Let $I \subset E_G$ denote any subset of the set of edges of $G$. Consider the map 
\begin{eqnarray} 
s_I^* : \Q[\alpha_e, e\in E_G] & \To & \Q[z, \alpha_e, e\in E_G] \\
s_I^* (\alpha_e)  & = & z \alpha_e \qquad \hbox{ if } e \in I \ , \nonumber  \\ 
s_I^* (\alpha_e)  & = &  \alpha_e \qquad \hbox{ otherwise }\ .   \nonumber
\end{eqnarray} 
Suppose that $G$ has internal masses and external momenta in $\R^2 \cong \C$,  for the time being.
Recall that $\mathcal{E}_G$ was the extension defined in \eqref{ExtensionofH1G}.
\begin{lem} \label{lem: blockmatrix} 
 i).  Let $\gamma \subset E_G$ be a subgraph.  Then, writing matrices in block matrix form   with respect to a choice of splitting  
\[
\mathcal{E}_G \cong H_1(\gamma;\C) \oplus \mathcal{E}_{G/\gamma} 
\] the rescaled generalised graph Laplacian satisfies:
\begin{equation} \label{LambdaUV} 
s_{\gamma}^* \widetilde{\Lambda}_G (\mu,m) 
=
\left(
\begin{array}{cc}
  z \Lambda_{\gamma}  &  zB    \\
zC   &  D    
\end{array}
\right)
\end{equation}
where  $B,C,D$ are matrices with entries in $\C[z, \alpha_e, e\in E_G]$ and 
\[ 
D\equiv  \widetilde{\Lambda}_{G/\gamma}(\mu, m)  \pmod{z} \ .
\]

ii). Now let  $\gamma \subset E_G$ be a mass-momentum spanning subgraph. If we write  matrices in block matrix form   with respect to a splitting  
\[
 \mathcal{E}_G \cong H_1(G/\gamma;\C) \oplus \mathcal{E}_{\gamma} \ , 
\] then the  rescaled generalised graph Laplacian satisfies:
 \begin{equation} \label{LambdaIR} 
s_{\gamma}^* \widetilde{\Lambda}_G (\mu,m)
=
\left(
\begin{array}{cc}
  A  &  zB    \\
zC   &  z \widetilde{\Lambda}_{\gamma}  (\mu, m)  
\end{array}
\right) \quad \hbox{ where } \quad A\equiv  \Lambda_{G/\gamma}  \pmod{z} 
\end{equation} 
for some matrices $A,B,C$ with entries in $\C[z, \alpha_e, e\in E_G]$.
\end{lem}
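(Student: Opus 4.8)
The plan is to reduce both statements to elementary bookkeeping on the explicit block matrix \eqref{LambdaGblockmatrix}, exploiting the fact that the generalised Laplacian is built from the pairing $\langle\ ,\ \rangle_0$, which is \emph{diagonal in the edges}: each edge $e$ contributes a single term proportional to $\alpha_e$. Consequently $s_\gamma^*$ acts on $\widetilde\Lambda_G(\mu,m)$ by multiplying, inside every matrix entry, those monomials supported on the edges of $\gamma$ by $z$, while leaving the monomials supported on $E_G\setminus\gamma$ untouched. The whole proof then amounts to choosing a basis of $H_1(G;\Z)$ — and, in part (ii), a momentum routing — adapted to $\gamma$, and reading off which entries acquire a global factor of $z$.

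For part (i), I would invoke Lemma \ref{lem: choosecycles} to choose a basis $c_1,\dots,c_h$ of $H_1(G;\Z)$ whose first $a=h_\gamma$ members are supported in $\gamma$ and span $H_1(\gamma;\Z)$, the remaining ones descending to a basis of $H_1(G/\gamma;\Z)$; this realises the splitting $\mathcal E_G\cong H_1(\gamma;\C)\oplus\mathcal E_{G/\gamma}$, with the momentum generator of \eqref{ExtensionofH1G} lying in the second factor. Because $c_1,\dots,c_a$ are supported in $\gamma$, every entry in their rows and columns of \eqref{LambdaGblockmatrix} — including the momentum-column entries $c_i(\alpha,\mu)$ — is a $\C$-combination of the $\alpha_e$ with $e\in\gamma$, hence is multiplied by $z$ under $s_\gamma^*$. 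The top-left $a\times a$ block is thus exactly $z\Lambda_\gamma$, and the off-diagonal blocks carry a factor $z$. Setting $z=0$ in the complementary block kills all $\alpha_e$ with $e\in\gamma$, i.e.\ contracts $\gamma$; using \eqref{Xdef} one checks that the surviving Laplacian entries, momentum column and bottom-right corner are precisely those of $\widetilde\Lambda_{G/\gamma}(\mu,m)$ built from the restricted cycles, routing and masses. This gives \eqref{LambdaUV}.

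For part (ii), the extra input is the mass-momentum spanning hypothesis. Since $\gamma$ is momentum-spanning, all non-zero vertex momenta $q_v$ sit in a single connected component of $\gamma$, so $\underline q$ lies in the image of the boundary map restricted to the edges of $\gamma$; I may therefore choose a momentum routing with $\mu_e=0$ for $e\notin\gamma$. Since $\gamma$ is mass-spanning, $m_e=0$ for $e\notin\gamma$ as well. Now I order the cycle basis the other way, placing the $h_\gamma$ cycles supported in $\gamma$ last, so that the splitting reads $\mathcal E_G\cong H_1(G/\gamma;\C)\oplus\mathcal E_\gamma$ with the momentum generator in the $\gamma$-factor. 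The $\gamma$-cycles, the routing and the masses are all supported in $\gamma$, so the entire lower-right $(h_\gamma+1)\times(h_\gamma+1)$ block — comprising $\Lambda_\gamma$, its momentum column, and the corner $X_\gamma$ of \eqref{Xdef} — is multiplied by $z$, yielding $z\widetilde\Lambda_\gamma(\mu,m)$, and the off-diagonal blocks again carry a factor $z$. In the remaining $H_1(G/\gamma)$-block, setting $z=0$ retains only the $\alpha_e$ with $e\notin\gamma$, and since each $c_i$ descends to a cycle of $G/\gamma$ with the same coefficients on those edges, this block reduces modulo $z$ to $\Lambda_{G/\gamma}$. This is \eqref{LambdaIR}.

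The main obstacle is not any single computation but the simultaneous adaptation of all the choices to $\gamma$, together with checking that the $z=0$ reductions genuinely reproduce the Laplacians of $G/\gamma$ (part (i)) and of $\gamma$ (part (ii)) \emph{with their correct induced kinematic data}. In particular one must verify that confining the routing to $\gamma$ in part (ii) is legitimate — which is exactly the content of the momentum-spanning condition via exactness of $0\to H_1(G;\C)\to\mathcal E_G\to\C\to 0$ — and that the induced routing and masses on the contracted or restricted graph are the expected ones. The independence of $\widetilde\Lambda_G$ on the choice of basis and routing up to the action of \eqref{GroupofIndeterminancy}, established in \S\ref{sec: GenLaplace}, guarantees that these convenient choices entail no loss of generality.
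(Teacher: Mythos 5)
Your proof is correct and follows essentially the same route as the paper: choose a cycle basis adapted to the filtration $H_1(\gamma;\Z)\subset H_1(G;\Z)$ (the paper obtains this directly from the short exact sequence relating $H_1(\gamma)$, $H_1(G)$ and $H_1(G/\gamma)$ rather than from Lemma~\ref{lem: choosecycles}, which concerns distinguished edges rather than subgraph homology), observe that every entry of \eqref{LambdaGblockmatrix} in a row or column indexed by $\gamma$-supported data is a $\C$-combination of the $\alpha_e$ with $e\in\gamma$ and hence acquires a global factor of $z$, and identify the $z=0$ reduction of the complementary block with the Laplacian of the quotient or subgraph. Your explicit verification that a momentum routing may be taken supported on $\gamma$ when $\gamma$ is momentum-spanning makes precise a step the paper leaves implicit in part (ii).
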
 

\begin{proof}  
\emph{i).}    The short exact sequence 
\[  
0 \To H_1(\gamma; \Z) \To H_1(G; \Z) \To H_1(G/\gamma; \Z) \To 0 \  
\]
defines a filtration $0 \subset H_1(\gamma; \Z) \subset H_1(G; \Z)$. 
We may choose representatives $c_1,\ldots, c_g \in \Z^{E_G}$  for a basis of $H_1(G;\Z)$ with the property that  $c_1,\ldots, c_h \subset \Z^{E_{\gamma}}$ represent a basis for $H_1(\gamma; Z)$. A choice of momentum routing $\mu$ defines a splitting $\mathcal{E}_G= H_1(G;\C) \oplus \C$ of \eqref{ExtensionofH1G}, and hence we may write 
\[ 
\mathcal{E}_G  = H_1(G;\C) \oplus \C  \cong  H_1(\gamma;\C) \oplus  H_1(G/\gamma;\C) \oplus \C  \cong H_1(\gamma;\C) \oplus  \mathcal{E}_{G/\gamma}\ . 
\]
The statement \eqref{LambdaUV} follows from the general form \eqref{LambdaGblockmatrix} of the  matrix $\widetilde{\Lambda}_G(\mu,m)$ using the fact that   $s_{\gamma}^* c_i(\alpha, \mu) = z c_i(\alpha,\mu)$ for $1 \leq i \leq h$,  and  the fact that 
\[
s_{\gamma}^*  \widetilde{\Lambda}_G(\mu,m)\Big|_{z=0} =   \widetilde{\Lambda}_G(\mu,m)\Big|_{\alpha_e=0, e\in \gamma} = \widetilde{\Lambda}_{G/\gamma}(\mu,m)
\]

\emph{ii).}  Now let $\gamma$ be mass-momentum spanning. 
As  above, a choice of momentum routing $\mu$ defines a splitting 
$\mathcal{E}_G \cong H_1(G;\C) \oplus \C$, and we may write 
\[ 
\mathcal{E}_G \cong H_1(G/\gamma; \C) \oplus \mathcal{E}_{\gamma}  . \]
Let $c_1, \ldots, c_g$ be a system of representative cycles for a basis of $H_1(G;\Z)$, adapted to the filtration $H_1(\gamma; \Z) \subset H_1(G;\Z)$. 
Consider the form \eqref{LambdaGblockmatrix} of the matrix $\widetilde{\Lambda}_G(\mu,m)$ with respect to this basis. 
Since $\gamma$ is momentum spanning,  one has by definition $s_{\gamma}^* c_i(\alpha, \mu)  = z c_i(\alpha,\mu)$  for all $1 \leq i \leq g$. 
Since $\gamma$ is mass and momentum spanning, one has $s_{\gamma}^* X_G = z X_{\gamma}$, where $X_{\bullet}$ is  defined by formula \eqref{Xdef}.
The statement therefore follows from  $s_{\gamma}^* \widetilde{\Lambda}_{G}(\mu,m) \big|_{z=0} = \Lambda_{G/\gamma}$.
\end{proof} 
\begin{rem} By taking the determinant and using theorem \ref{thm: detLambdaisXi},  the  previous lemma gives an interpretation of the `asymptotic' factorisation formulae 
\begin{eqnarray}
 \Xi_G(q,m)  &\sim &  \Phi_{\gamma}   \Xi_{G/\gamma}(q,m)   \qquad  \hbox{ if } \gamma \subset E_G    \hbox{ not m.m.}    \nonumber \\
 \Xi_G(q,m) & \sim  &   \Xi_{\gamma}(q,m)  \Phi_{G/\gamma}     \qquad \hbox{ if }  \gamma \subset E_G \hbox{ m.m.}   \nonumber  
\end{eqnarray}
stated in  \cite[Theorem 2.7]{PeriodsFeynman} for generic kinematics.  This observation leads to  another  proof of theorem \ref{thm: detLambdaisXi} since the above factorisation formulae determine $\Xi_G(q,m)$ essentially uniquely \cite[Proposition 4.8]{PeriodsFeynman}.
\end{rem}

Note that in the case when $\gamma$ is  mass-momentum spanning, both equations   \eqref{LambdaUV} and \eqref{LambdaIR}  are simultaneously valid. Since in this case  $G/\gamma$ is scaleless, $\widetilde{\Lambda}_{G/\gamma}(\mu,m)$ has vanishing determinant and 
\eqref{LambdaUV} is not  relevant.

\subsection{Reminders on motives for graphs with generic kinematics} 
We assume generic kinematics \eqref{Genericity} throughout this section. 
Let us denote by 
\[ X_G  = V(\Psi_G) \cup V( \Xi_{G}(q,m) ) \subset \Pro^{E_G}\]
the union of graph hypersurfaces.   By blowing up linear subspaces as in \cite{BEK} corresponding to motic subgraphs (the notion of a motic subgraph is defined in \cite[Definition 3.1]{Cosmic}), one obtains  a space \cite[Definition 6.3]{PeriodsFeynman}
\begin{equation} \label{PiGblowup}  
\pi_G : P^G \To \Pro^{E_G}
\end{equation} 
which contains a distinguished simple normal crossing divisor $D$ whose image under $\pi_G$ is the union of  coordinate hyperplanes. 
The irreducible components of  $D$ are of two types: exceptional divisors  $D_{\gamma}$ corresponding to $\gamma$ a motic subgraph of $G$, and divisors $D_{e}$ where $e$ is an edge of $E_G$, which are strict transforms of the coordinate hyperplanes $\alpha_e=0$. 
We denote the strict transform of $X_G$ by $Y_G \subset P^G$. 

Every exceptional  irreducible  component $D_{\gamma}$  of $D$ admits a canonical isomorphism  $D_{\gamma} \cong P^{\gamma} \times P^{G/\gamma}$ which induces an isomorphism
\begin{equation}
 D_{\gamma} \backslash  ( D_{\gamma} \cap Y_G)  \ \cong \   \left(P^{\gamma}  \ \backslash \  Y_{\gamma} \right) \  \times  \ \left( P^{G/\gamma}  \  \backslash  \ Y_{G/\gamma}\right) \ .    
\end{equation}

\subsection{Canonical forms along exceptional divisors}
For any canonical form $\eta$,  denote  its pull-back to $P^G \backslash (Y_G \cup \mathcal{E})$ along     \eqref{PiGblowup} by 
\[ \widetilde{\eta}_G = \pi_G^* ( \eta_G)\ , \]
 where $\mathcal{E}$ denotes the exceptional divisor of $\pi_G$.  It could \emph{a priori} have poles along $\mathcal{E}$, but the following theorem shows that in fact it does not.

\begin{thm} \label{thm: Canformsalongdivisors} Let $\eta_G \in  \Ocan^n_{\can}$ be any  canonical form. 
Then   $\widetilde{\eta}_G$  has  no poles along $D$ 
and  extends to a  smooth form on  $P^G \backslash Y_G$, i.e.,
\[
  \widetilde{\eta}_G  \ \in \ \Omega^n ( P^G \backslash Y_G)  \ . 
\]
Its restriction to an irreducible boundary component $D_e$ of $D$ satisfies:
\[ 
\widetilde{\eta}_G\big|_{D_e} = \widetilde{\eta}_{G/e}
\]
if $D_e$ is the strict transform of the hyperplane $L_e  = V(\alpha_e)$ corresponding to a single  edge $e$ of $G$. 
The restriction of $\eta_G$ to exceptional divisors depend on its kind. For simplicity, assume that $\eta_G \in \{\omega_G, 
\varpi_G\}$ is primitive (the general case follows by multiplying primitive forms together), and that $\gamma\subset G$ is motic and $e_{\gamma}>1$.  
\vspace{0.05in}

(i). Suppose that $\eta_G=\omega_G$ is of the first kind. Then 
\begin{equation} \label{Restrict1stkindcore} 
\widetilde{\omega}_G\big|_{D_{\gamma}}   =   \widetilde{\omega}_{\gamma} \wedge 1 +  1 \wedge \widetilde{\omega}_{G/\gamma}  
\end{equation} 
if $\gamma$ is a core (bridgeless but not necessarily connected)  subgraph, and $\widetilde{\omega}_G\big|_{D_{\gamma}}$ vanishes if $D_{\gamma}$ is the exceptional divisor corresponding to a m.m. subgraph which is not core. 
\vspace{0.05in}

(ii). Suppose that $\eta_G = \varpi_G$ is of the second kind. Then 
\begin{eqnarray} \label{Restrict2ndkind}  
\widetilde{\varpi}_G\big|_{D_{\gamma}}  &= &    \widetilde{\omega}_{\gamma} \wedge 1 +  1 \wedge \widetilde{\varpi}_{G/\gamma}  \quad \hbox{ if } \gamma \hbox{ core, not m.m.}  \\ 
\widetilde{\varpi}_G\big|_{D_{\gamma}} &  = &   \widetilde{\varpi}_{\gamma} \wedge 1 + 1 \wedge \widetilde{\omega}_{G/\gamma}\quad  \hbox{ if } \gamma \hbox{ is m.m.} \nonumber \ . 
\end{eqnarray} 
\end{thm}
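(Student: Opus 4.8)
The plan is to analyse $\widetilde{\eta}_G$ locally near each irreducible component of the normal crossing divisor $D$, reducing every assertion to a single computation about the Borel forms $\beta^{2k+1}_X=\tr((X^{-1}dX)^{2k+1})$ under a one-parameter degeneration of $X$. Since an arbitrary canonical form is a polynomial in the primitives $\omega^{4k+1}_G$ and $\varpi^{2k+1}_G$, and since pullback along $\pi_G$ and restriction to a divisor are compatible with wedge products and with the coproduct $\Delta_{\can}$, it suffices to prove each statement for the primitive forms and then extend by additivity (property (4) of \S\ref{sect: invariantforms}); this is exactly why the statement is phrased for primitive $\eta_G$.

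The technical core is a splitting lemma for the invariant forms. Suppose a family of invertible matrices factors as $X(z)=D(z)\,Y(z)$, where $D(z)=\mathrm{diag}(z^{a_1}I,\dots,z^{a_r}I)$ has constant exponents and $Y(z)$ is regular at $z=0$ with $Y(0)$ invertible and block triangular for the same blocking. Writing $\Delta=\mathrm{diag}(a_iI)$ one has $X^{-1}dX=d\log z\wedge Y^{-1}\Delta Y+Y^{-1}dY$, and since $d\log z\wedge d\log z=0$, graded cyclicity of the trace gives
\[ \beta^{2k+1}_X=\beta^{2k+1}_Y+(2k+1)\,d\log z\wedge \tr\!\big(\Delta\,(dY\,Y^{-1})^{2k}\big). \]
The residue along $z=0$ is the value at $z=0$ of the coefficient of $d\log z$; because $Y(0)$ is triangular, $dY(0)\,Y(0)^{-1}$ is triangular with diagonal blocks $dY(0)_{ii}\,Y(0)_{ii}^{-1}$, so the residue equals $\sum_i a_i\,\tr((dY(0)_{ii}\,Y(0)_{ii}^{-1})^{2k})$, a sum of traces of \emph{even} powers of matrix-valued $1$-forms, each vanishing by graded cyclicity. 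Hence $\widetilde{\eta}_G$ has no pole along $z=0$ and restricts to $\beta^{2k+1}_{Y(0)}=\sum_i\beta^{2k+1}_{Y(0)_{ii}}$, the sum of the forms attached to the diagonal blocks.

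With this in hand I would first dispatch the coordinate components: setting $\alpha_e=0$ realises edge contraction, $\Lambda_G|_{\alpha_e=0}=\Lambda_{G/e}$ and $\widetilde{\Lambda}_G(\mu,m)|_{\alpha_e=0}=\widetilde{\Lambda}_{G/e}(\mu,m)$ for $e$ neither a loop nor by itself m.m., so $\widetilde{\eta}_G|_{D_e}=\widetilde{\eta}_{G/e}$ with no pole, recovering the contraction property. For an exceptional component $D_\gamma$, the substitution $s_\gamma^*$ of Lemma \ref{lem: blockmatrix} supplies precisely the normal coordinate $z$ to $D_\gamma$, and \eqref{LambdaUV}--\eqref{LambdaIR} present $s_\gamma^*\widetilde{\Lambda}_G$ (and, after deleting the momentum row and column, $s_\gamma^*\Lambda_G$) in the factored form required by the lemma. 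For $\gamma$ core and not m.m. one factors out $\mathrm{diag}(zI,I)$ to obtain $Y(0)$ triangular with diagonal blocks $\Lambda_\gamma$ and $\widetilde{\Lambda}_{G/\gamma}$ (respectively $\Lambda_{G/\gamma}$), yielding \eqref{Restrict1stkindcore} and the first line of \eqref{Restrict2ndkind}; for $\gamma$ m.m. one factors out $\mathrm{diag}(I,zI)$ in \eqref{LambdaIR} to obtain diagonal blocks $\Lambda_{G/\gamma}$ and $\widetilde{\Lambda}_\gamma$, yielding the second line of \eqref{Restrict2ndkind}. Invertibility of $Y(0)$, needed to apply the lemma, follows from $\det\Lambda_\gamma=\Psi_\gamma\neq0$, $\det\widetilde{\Lambda}_\gamma=\Xi_\gamma\neq0$ and $\Psi_{G/\gamma}\neq0$ on $D_\gamma\setminus Y_G$ by the normal crossing property together with theorem \ref{thm: detLambdaisXi}.

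The step I expect to be the main obstacle is the vanishing of $\widetilde{\omega}_G|_{D_\gamma}$ of the first kind when $\gamma$ is m.m.\ but not core. A naive application of the splitting lemma to $s_\gamma^*\Lambda_G$ would produce $\widetilde{\omega}_\gamma\wedge 1+1\wedge\widetilde{\omega}_{G/\gamma}$, so the real content is that this combination is in fact zero on $D_\gamma$. I expect the correct treatment to hinge on two ingredients: first, the precise definition of a core (as opposed to a general motic) subgraph, which governs whether $\Psi_G$ acquires a pole along the center of $D_\gamma$ and hence whether an exceptional contribution $\widetilde{\omega}_\gamma$ is present at all; and second, the rank-vanishing property $\beta^{2k+1}_X=0$ for $k\ge\mathrm{rank}(X)$ of \S\ref{sect: invariantforms}, combined with the fact that $G/\gamma$ is scaleless, which constrains the surviving block. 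Reconciling these—so that precisely for m.m.\ non-core $\gamma$ the two pieces are either individually absent or cancel, and verifying that this propagates through the product structure to all first-kind forms supported on such $D_\gamma$—is the delicate point, and is exactly where the distinction between the first and second kinds is genuinely used.
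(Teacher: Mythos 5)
Your overall strategy is the same as the paper's: reduce to primitive forms via the product/coproduct structure, use the asymptotic block factorisations of Lemma \ref{lem: blockmatrix} as the local normal form near each component of $D$, and feed them into the computation of $\beta^{2k+1}_{D(z)Y(z)}$. The splitting lemma you state is precisely the ``essential part of the argument'' which the paper delegates to \cite[Theorem 7.4]{CanonicalForms}, and your derivation of it is correct: the cross terms collapse to $(2k+1)\,d\log z\wedge\tr(\Delta(dY\,Y^{-1})^{2k})$ by graded cyclicity, and the residue vanishes because $Y(0)$ is block triangular and traces of even powers of odd matrix-valued forms are zero. (One small imprecision: $dY(0)$ itself need not be block triangular, since the off-triangular blocks of $Y$ vanish only to first order in $z$ and contribute a $C(0)\,dz$ term; this is harmless because those terms die against $d\log z$, but you should say so, i.e.\ replace $dY$ by its $dz$-free part before invoking triangularity.) Your case analysis for $D_e$, for $\gamma$ core via \eqref{LambdaUV}, and for $\gamma$ m.m.\ via \eqref{LambdaIR} then reproduces \eqref{Restrict1stkindcore} and both lines of \eqref{Restrict2ndkind} exactly as intended.

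The genuine gap is the one you yourself flag: the vanishing of $\widetilde{\omega}_G\big|_{D_\gamma}$ for forms of the first kind when $\gamma$ is m.m.\ but not core. This is not a cosmetic omission. Applied to $s_\gamma^*\Lambda_G$ (the $H_1(G)$-block of either \eqref{LambdaUV} or \eqref{LambdaIR}), your splitting lemma outputs $\widetilde{\omega}_{\gamma}\wedge 1+1\wedge\widetilde{\omega}_{G/\gamma}$ with $\widetilde{\omega}_\gamma=\omega_{\Lambda_\gamma}$, and nothing in that computation distinguishes core from non-core $\gamma$: $\Lambda_\gamma$ is still generically invertible of rank $h_\gamma$ (the bridges of $\gamma$ simply do not appear in it), so the two terms do not individually disappear and do not obviously cancel. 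Your two candidate ingredients --- whether $\Psi_G$ actually acquires a pole along the centre of $D_\gamma$, and the rank-vanishing $\beta^{2k+1}_X=0$ for $k\geq\mathrm{rank}\,X$ --- do dispose of the low-genus cases (e.g.\ $h_\gamma=0$, where $\Lambda_\gamma$ is empty, or $h_{G/\gamma}\leq 2k$), but they are not assembled into an argument covering a general m.m.\ non-core $\gamma$ with $h_\gamma$ large, and you do not explain why the distinction core/non-core is the correct dividing line. So this clause of the theorem remains unproved in your proposal. To be fair, the paper's own proof is equally silent here --- it asserts that the remaining cases follow ``in an identical manner'' from Lemma \ref{lem: blockmatrix} --- so you have not missed an argument that the paper supplies; but since your write-up makes the naive answer $\widetilde{\omega}_{\gamma}\wedge 1+1\wedge\widetilde{\omega}_{G/\gamma}$ explicit, you are obliged either to exhibit the mechanism that kills it for m.m.\ non-core $\gamma$, or to restrict your claim to the cases your lemma actually covers.
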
 
\begin{proof}
The argument in \cite[Theorem 7.4]{CanonicalForms}  proves the case when $\eta$ is of the first kind and $\gamma$ is a core subgraph. 
The general result follows in an identical manner from a local description of the blow-ups in affine coordinates, combined with the  asymptotic formulae of  lemma \ref{lem: blockmatrix}. The essential part of the argument  is the calculation (\emph{loc. cit.}) of the behaviour of the bi-invariant form $\beta^n_X$ where $X$ is a matrix which is asymptotically block-diagonal as in \eqref{LambdaUV} or \eqref{LambdaIR}. 
\end{proof} 

\begin{rem}
An interesting feature of equations  \eqref{Restrict2ndkind} is that the right-hand side involves canonical forms of both the first and second kinds. In particular, if $\omega$ is primitive of degree $n\equiv 3 \pmod 4$ then the corresponding form of the first kind  $\omega_G$ vanishes, and the right-hand sides of \eqref{Restrict2ndkind} reduce to a single term.
\end{rem} 

The case of a non-primitive form is easily deduced from theorem \ref{thm: Canformsalongdivisors} by taking products.
For example,  for a general form $\omega$, equations \eqref{Restrict1stkindcore}, \eqref{Restrict2ndkind}  take the form 
\begin{eqnarray}  \label{RestrictionCanonicalFormsGeneral}
\widetilde{\omega}_G\big|_{D_{\gamma}}  & =   & \sum \widetilde{\omega}'_{\gamma} \wedge \widetilde{\omega}''_{G/\gamma}   \hbox{ if } \gamma \hbox{ is core},   \\ 
\widetilde{\varpi}_G\big|_{D_{\gamma}}  &= &   \sum \widetilde{\omega}'_{\gamma} \wedge  \widetilde{\varpi}''_{G/\gamma} \hbox{ if } \gamma \hbox{ is core and not m.m.,}  \nonumber \\
\widetilde{\varpi}_G\big|_{D_{\gamma}} & = &  \sum  \widetilde{\varpi}'_{\gamma} \wedge  \widetilde{\omega}''_{G/\gamma}  \hbox{ if } \gamma \hbox{ is m.m.},  \nonumber 
\end{eqnarray} 
where $\Delta_{\can} \omega= \sum \omega' \otimes \omega''$ is the coproduct of $\omega$.

\begin{rem} The differential form $\o^1_G$ in general has poles along boundary divisors.
Let $G$ be as above, and consider a core subgraph $\gamma \subset E_G$.
Then  by the partial factorisations for the first Symanzik polynomial  (or by  lemma \ref{lem: blockmatrix} i)), one has:
\[
s_{\gamma}^*  \omega^1_G =  s_{\gamma}^*  \,  d\log \Psi_G  =h_{\gamma} \frac{dz}{z} + d\log \Psi_{\gamma} + d \log \Psi_{G/\gamma} +O(z)
\]
\[
s_{\gamma}^*  \varpi^1_G =  s_{\gamma}^*  \,  d\log \Xi_{G}  =h_{\gamma} \frac{dz}{z} + d\log \Psi_{\gamma} + d \log \Xi_{G/\gamma} +O(z)\ .
\]
It follows that 
\[ 
s_{\gamma}^* \o^1_G  =  - h_{\gamma} \frac{dz}{z}  - \omega^1_{\gamma} + \o^1_{G/\gamma} +O(z)
\]
and so $\o^1_G$ has a simple pole along the exceptional boundary divisor  $D_{\gamma}$ (which is given by $z=0$ in local coordinates) and its residue depends only on the loop number of $\gamma$.  A similar argument shows that $\o^1_G$ also has poles along exceptional divisors $D_{\gamma}$, when $\gamma$ is mass-momentum spanning.  Differential forms of the type $\o^1_G \wedge \eta$, where $\eta$ is canonical, will therefore  in general  have poles along the divisor $D$, and will not be considered here. They naturally arise, however, when considering Stokes' formula for canonical forms in dimensional-regularisation.
\end{rem} 

\subsection{Quaternionic case}
The above arguments similarly apply to the case of a canonical form of the second kind  $\varpi^{\Qt}_{G}$ associated to a quaternionic generalised Laplacian, by replacing $\widetilde{\Lambda}_G(\mu,m)$  with the complex adjoint matrix $\chi_{\widetilde{\Lambda}_G(\mu,m)}$. Theorem \ref{thm: Canformsalongdivisors} holds verbatim (with the only difference that odd canonical forms of the second kind vanish in the quaternionic setting). In particular, $\varpi^{\Qt}_{G}$ has no poles along $D$ and extends to a smooth differential form on $P^G \backslash Y^G$. Its restriction to irreducible components of $D$ are given by \eqref{RestrictionCanonicalFormsGeneral}.

\section{Canonical integrals and relations from Stokes theorem} \label{sect: CanInt}
Throughout this section we assume generic (Euclidean) kinematics \eqref{Genericity}.
\subsection{Canonical integrals} Let $G$  be a connected graph  with an orientation $e_1\wedge \ldots \wedge e_{N} \in \left(\bigwedge \Z^{E(G)}\right)^{\times} $. It induces an orientation on the simplex $\sigma_G$.

\subsubsection{Complex momenta} Suppose that external momenta lie in $\R^2 \cong \C$.

\begin{thm}  \label{thm: converges} Let $\omega \in \Ocan_{\can}$ be a canonical form of degree $N+1$ and let $G$ be a connected graph with  $N$ edges. Then under assumption  \eqref{Genericity}, the integral
\begin{equation} \label{IGdef} 
I_{G}(\omega, q, m) = \int_{\sigma_G} \omega_G  
\end{equation} 
is finite and defines an  analytic function of the external kinematics.  
\end{thm}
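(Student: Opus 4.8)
The plan is to pull the integral back to the blow-up $\pi_G\colon P^G\to\Pro^{E_G}$ of \S\ref{sect: Canatinfinity}, where the integrand becomes smooth along the boundary, and then to reduce finiteness to a single disjointness statement. Since $\pi_G$ is proper, restricts to an isomorphism over the open simplex, and has exceptional locus lying over the coordinate hyperplanes, I would write $\widetilde\sigma_G$ for the closure in $P^G(\R)$ of $\pi_G^{-1}(\sigma_G)$ and use
\[
I_G(\omega,q,m)=\int_{\sigma_G}\omega_G=\int_{\widetilde\sigma_G}\widetilde\omega_G,\qquad \widetilde\omega_G=\pi_G^*(\omega_G).
\]
The region $\widetilde\sigma_G$ is compact, being a closed subset of the compact space $\pi_G^{-1}(\overline\sigma_G)$. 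By Theorem~\ref{thm: Canformsalongdivisors}, $\widetilde\omega_G$ extends to a smooth form on $P^G\setminus Y_G$. Hence finiteness follows at once from the geometric claim
\[
\widetilde\sigma_G\cap Y_G=\emptyset,
\]
for then $\widetilde\omega_G$ is a smooth top-degree form on a neighbourhood of the compact domain $\widetilde\sigma_G$.

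Establishing this disjointness is the crux of the argument, and I expect it to be the principal obstacle. Over the open simplex the claim is immediate, because genericity \eqref{Genericity} and the Euclidean region hypothesis force $\Psi_G$ and $\Xi_G(q,m)$ to be strictly positive there while $\pi_G$ is an isomorphism; thus $\widetilde\sigma_G$ can meet $Y_G$ only over the boundary divisor $D$. I would treat the boundary by induction on $e_G$, examining each irreducible component of $D$. Along the strict transform $D_e$ of a hyperplane $\alpha_e=0$ one has $\widetilde\sigma_G\cap D_e\cong\widetilde\sigma_{G/e}$ and $Y_G$ restricts to $Y_{G/e}$, so disjointness follows from the inductive hypothesis for $G/e$. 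Along an exceptional divisor $D_\gamma\cong P^\gamma\times P^{G/\gamma}$ one has $\widetilde\sigma_G\cap D_\gamma=\widetilde\sigma_\gamma\times\widetilde\sigma_{G/\gamma}$, and, taking determinants in the block decompositions of Lemma~\ref{lem: blockmatrix} and using Theorem~\ref{thm: detLambdaisXi}, the leading coefficients of $s_\gamma^*\Psi_G$ and $s_\gamma^*\Xi_G$ on $D_\gamma$ factor as products of graph polynomials attached separately to $\gamma$ and to $G/\gamma$. Consequently $Y_G\cap D_\gamma$ is contained in $(Y_\gamma\times P^{G/\gamma})\cup(P^\gamma\times Z_{G/\gamma})$, where $Z_{G/\gamma}$ is the relevant hypersurface of the quotient, and each factor misses the corresponding lifted simplex by the inductive hypothesis. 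The genuine subtlety is that when $\gamma$ is mass-momentum spanning the quotient $G/\gamma$ is scaleless, so $\Xi_{G/\gamma}\equiv 0$ and the quotient factor is governed instead by the first Symanzik polynomial $\Psi_{G/\gamma}$; here disjointness of $\widetilde\sigma_{G/\gamma}$ from $V(\Psi_{G/\gamma})$ is precisely the convergence of the first-kind canonical integral established in \cite{CanonicalForms}. The deeper corners, where several components of $D$ meet, are handled by the same factorisations applied iteratively, positivity at every stratum being guaranteed by genericity.

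Finally I would deduce analyticity. Fix a generic real configuration; since $\widetilde\sigma_G$ is compact and disjoint from $Y_G$, the polynomials $\Psi_G$ and $\Xi_G(q,m)$ are bounded away from zero on $\widetilde\sigma_G$, and this persists for $(q,m)$ in a complex neighbourhood $U$ of that configuration. On $U$ the integrand $\widetilde\omega_G$ is a rational form in the edge coordinates whose only denominators are powers of $\Psi_G$ and $\Xi_G(q,m)$, so it is holomorphic in the kinematic variables and jointly continuous, with all kinematic derivatives uniformly bounded on $\widetilde\sigma_G$. Differentiating under the integral sign, or equivalently invoking Morera's theorem together with Fubini, shows that $I_G(\omega,q,m)$ is holomorphic on $U$; as the generic configuration was arbitrary, $I_G$ is analytic in the external kinematics.
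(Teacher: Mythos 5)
Your proposal follows essentially the same route as the paper: pull back to the blow-up $P^G$, invoke Theorem \ref{thm: Canformsalongdivisors} to see that $\widetilde{\omega}_G$ is smooth away from $Y_G$, and conclude from the fact that the compact set $\widetilde{\sigma}_G$ does not meet $Y_G$. The only difference is that the paper simply cites this last disjointness from \cite{Cosmic} and asserts analyticity without detail, whereas you sketch the inductive factorisation argument for the disjointness (correctly identifying the mass-momentum-spanning subtlety) and spell out the analyticity; both additions are sound.
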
 
\begin{proof}
This is a corollary of theorem \ref{thm: Canformsalongdivisors}. Let $\widetilde{\sigma}_G = \overline{\pi^{-1}_G (\sigma_G)}$ denote the closure in the analytic topology of the pull-back of the open simplex $\sigma_G$ to $P^G$. One has
\[
I_{G}(\omega, q, m) = \int_{\sigma_G} \omega_G  = \int_{\widetilde{\sigma}_G} \widetilde{\omega}_G\ .
\] 
The right-most integral converges because of the fact (\cite{Cosmic}) that  $\widetilde{\sigma}_G$, which is compact,  does not meet $Y_G$, which is where the poles of   $\widetilde{\omega}_G$ are located. 
\end{proof} 

Reversing the orientation on $G$ reverses the sign of the canonical integral \eqref{IGdef}.
Furthermore, if $\tau: G \overset{\sim}{\rightarrow} G'$ is an isomorphism of oriented graphs (with external legs and masses, etc), then  since $\tau^* \omega_G = \omega_{G'}$, we have:
\[
I_{G}(\omega, q, m) =I_{G'}(\omega, q, m) \ . 
\]
Thus the  integrals \eqref{IGdef} are invariants of isomorphism classes of oriented graphs. 
\subsubsection{Quaternionic case} The previous theorem, and comments which follow, hold verbatim for graphs with momenta in 4-dimensional Euclidean space with
\[
I_G(\omega, q, m) = \int_{\sigma_G} \omega_G^{\Qt} \ .
\]
The only difference is that in this case we must assume $\omega \in \Omega_{\can}$ is even, since odd quaternionic canonical forms vanish by remark  \ref{rem: Quatoddvanish}.

\subsection{Stokes' formula}
In \cite{Cosmic}, it was shown  that one can extend the usual Connes-Kreimer coproduct to an enlarged coproduct
\[ \Delta (G) = \sum_{\gamma}  \gamma  \otimes G/\gamma
\]
where the sum is over all motic subgraphs of $G$.  It has two types of terms: if $\gamma\subset G$ is a core subgraph which is not \emph{m.m.}, then $\gamma$ is considered to be scaleless, and encodes ultraviolet divergences of the graph. If $\gamma \subset G$ is \emph{m.m.} then $G/\gamma$ is scaleless, and $\gamma$ encodes certain kinds of  infrared divergences of $G$.

\subsubsection{Complex momenta}

\begin{thm} \label{thm: Stokes} Let $\omega \in \Ocan_{\can}$ be a canonical form of degree $k$. Write its coproduct in the form $\Delta_{\can} \omega = \sum_i \omega'_i \otimes \omega''_i$.  Suppose that  $G$ is a connected oriented graph with external kinematics as above with $k+2$ edges. 
Then
\begin{equation}  \label{Stokes} 
0 = \sum_{e\in E_G} \int_{\sigma_{G /  e}} \omega_{G /  e} + \sum_i \sum_{\gamma}  \int_{\sigma_{\gamma}} (\omega'_i)_{\gamma}   \int_{\sigma_{G/\gamma}} (\omega''_i)_{G/\gamma} 
\end{equation} 
where the rightmost sum is over  motic  subgraphs $\gamma \subsetneq  G$  not including $G$ itself  (which are not necessarily connected) such that $e_{\gamma} = \deg \omega_i'+1>1$, and the orientations on $\sigma_{G / e}, \sigma_{\gamma} \times \sigma_{G/\gamma}$ are induced from those on $\sigma_G$. 
\end{thm}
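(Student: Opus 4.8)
The plan is to apply Stokes' theorem to the pulled-back form $\widetilde\omega_G = \pi_G^*(\omega_G)$ on the compact manifold-with-corners $\widetilde\sigma_G = \overline{\pi_G^{-1}(\sigma_G)} \subset P^G$ introduced in the proof of Theorem \ref{thm: converges}. Since $\omega_G$ is closed, so is $\widetilde\omega_G$, and by Theorem \ref{thm: Canformsalongdivisors} the latter extends to a smooth form on $P^G \setminus Y_G$; as $\widetilde\sigma_G$ is compact and disjoint from $Y_G$ under the genericity hypothesis, $\widetilde\omega_G$ is smooth on a neighbourhood of $\widetilde\sigma_G$. Because $\sigma_G$ has dimension $e_G - 1 = k+1$, the form $d\widetilde\omega_G$ has top degree $k+1$ but vanishes identically, so Stokes gives
\[
0 \;=\; \int_{\widetilde\sigma_G} d\widetilde\omega_G \;=\; \int_{\partial \widetilde\sigma_G} \widetilde\omega_G \ .
\]
Everything then reduces to enumerating the codimension-one faces of $\widetilde\sigma_G$ and restricting $\widetilde\omega_G$ to each.

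First I would describe the boundary. The faces of $\sigma_G$ are the loci $\alpha_e = 0$, whose total transform under $\pi_G$ is precisely the simple normal crossing divisor $D$; hence the facets of $\widetilde\sigma_G$ correspond to the irreducible components of $D$ meeting $\widetilde\sigma_G$, namely the strict transforms $D_e$ of the coordinate hyperplanes and the exceptional divisors $D_\gamma$ attached to motic subgraphs $\gamma$ with $e_\gamma > 1$. For an edge facet, $D_e \cap \widetilde\sigma_G$ is identified with $\widetilde\sigma_{G/e}$, and Theorem \ref{thm: Canformsalongdivisors} gives $\widetilde\omega_G\big|_{D_e} = \widetilde\omega_{G/e}$; integrating yields $\int_{\sigma_{G/e}}\omega_{G/e}$, which assembles into the first sum of \eqref{Stokes}.

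For an exceptional facet I would use the product identification $D_\gamma \cap \widetilde\sigma_G \cong \widetilde\sigma_\gamma \times \widetilde\sigma_{G/\gamma}$ together with the restriction formula \eqref{RestrictionCanonicalFormsGeneral}, which expresses $\widetilde\omega_G\big|_{D_\gamma}$ as $\sum_i (\omega'_i)_\gamma \wedge (\omega''_i)_{G/\gamma}$ in terms of the coproduct $\Delta_{\can}\omega = \sum_i \omega'_i \otimes \omega''_i$. By Fubini the integral of such a product form over $\sigma_\gamma \times \sigma_{G/\gamma}$ factorises as $\int_{\sigma_\gamma}(\omega'_i)_\gamma \cdot \int_{\sigma_{G/\gamma}}(\omega''_i)_{G/\gamma}$, and a summand contributes only when the bidegree matches the factor dimensions, i.e.\ $\deg \omega'_i = \dim\sigma_\gamma = e_\gamma - 1$, which is exactly the stated constraint $e_\gamma = \deg\omega'_i + 1$; facets with $e_\gamma = 1$ are the edge faces already treated. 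The trichotomy of \eqref{RestrictionCanonicalFormsGeneral} (first versus second kind, core versus m.m.\ $\gamma$) is subsumed in the coproduct notation, so no separate case analysis is needed once the kinds of the sub-forms are read off from whether $\gamma$ is mass-momentum spanning; the non-primitive case follows from the primitive one by multiplicativity. This produces the second sum of \eqref{Stokes}.

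The main obstacle I expect is the orientation bookkeeping. The ambient orientation of $\widetilde\sigma_G$ is fixed by the ordering $e_1 \wedge \cdots \wedge e_N$, and Stokes endows each facet with an induced boundary orientation; one must check that, under the identifications $D_e \cap \widetilde\sigma_G \cong \sigma_{G/e}$ and $D_\gamma \cap \widetilde\sigma_G \cong \sigma_\gamma \times \sigma_{G/\gamma}$, these induced orientations coincide with the ones declared in the statement, so that every term enters with a plus sign and no cancellation or relative sign is introduced. The analytic content --- smoothness of $\widetilde\omega_G$ near $\widetilde\sigma_G$ and the validity of Stokes on a compact manifold with corners --- is supplied directly by Theorems \ref{thm: converges} and \ref{thm: Canformsalongdivisors}, so this sign verification is the only genuinely delicate step.
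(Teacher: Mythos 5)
Your proposal is correct and is essentially the argument the paper invokes: it proves the theorem by citing the proof of Theorem 8.5 of \cite{CanonicalForms}, which is exactly your strategy of applying Stokes' theorem to $\widetilde\omega_G$ on the compact blown-up simplex $\widetilde\sigma_G$, enumerating the facets as the edge divisors $D_e$ and exceptional divisors $D_\gamma$, and evaluating the restrictions via Theorem \ref{thm: Canformsalongdivisors} together with the degree-matching condition $e_\gamma=\deg\omega_i'+1$. No substantive difference from the paper's route.
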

\begin{proof} Identical to the proof of \cite[Theorem 8.5]{CanonicalForms}, using theorem \ref{thm: Canformsalongdivisors}.
\end{proof} 

Note that in the case when $\omega \in B^{\can}$ is purely of the second kind, one has 
\begin{multline}  \label{Stokes2} 
0 = \sum_{e \in E_G} \int_{\sigma_{G/ e}} \varpi_{G / e} + \sum_i  \sum_{\substack{\gamma  \,core \, \\   not\, m.m.}}  \int_{\sigma_{\gamma}} (\omega'_i)_{\gamma}   \int_{\sigma_{G/\gamma}} (\varpi''_i)_{G/\gamma}   \\
+  \sum_i  \sum_{\gamma\,   m.m.}  \int_{\sigma_{\gamma}} (\varpi'_i)_{\gamma}   \int_{\sigma_{G/\gamma}} (\omega''_i)_{G/\gamma} \ ,
\end{multline} {
where we have decomposed the set of motic subgraphs $\gamma$ of $G$, such that $1< e_{\gamma}<e_G$ into two types according to whether $\gamma$ is \emph{m.m.} or not.

Note that the Stokes' formulae above remain valid by analytic continuation to a larger region of kinematic space wherever the integrals are finite. 
\subsubsection{Quaternionic case} The statement is identical  on replacing $\omega$ with $\omega^{\Qt}$.

\begin{rem} In \cite[Theorem 8.5]{CanonicalForms}, the Stokes' formula is not quite correct as stated in the case when a graph $G$ has a tadpole $\gamma$. In that case,  the corresponding term was counted twice: once in the term $G/e$, where $e= \gamma$,  and again in the form $\gamma \times G/\gamma$.   The condition imposed above, that $e_{\gamma}>1$, fixes this problem.

Similarly,  if $\gamma$ is an \emph{m.m.} subgraph which has only a  single edge, then 
it only contributes a single term in the formula \eqref{Stokes}, which is counted in the left-most summation over all edges, and not in the right-hand double sum.
\end{rem}
 
\subsection{Canonical forms of compact type}
Let $h\geq 1$. Let us say that a canonical form $\omega\in B^{\can}$ is of compact type relative to $h$ if it is divisible by the primitive form $\omega^{2h-1}$ of degree $2h-1$, i.e., $\omega = \omega^{2h-1} \wedge \eta$ for some $\eta \in B^{\can}$.  It follows from the property (6)  stated in \S\ref{sect: invariantforms}, 
that $\omega$ vanishes on all matrices of rank $<h$.

\begin{cor} \label{cor:StokesCompactType}  Let $G$  be  as in  theorem \ref{thm: Stokes}   and let $\varpi \in B^{\can}$ be of the second kind of degree $e_G -2$, and  of compact type relative to $h_G+1$. Thus $\varpi = \varpi^{2 h_{G}+1} \wedge \eta$ for some $\eta \in B^{\can}$ of the second kind. If $G$ is core, then  formula \eqref{Stokes2} reduces to
\begin{equation} \label{GCrelation} 
0 = \sum_{e \in E_G} \int_{\sigma_{G /  e}} \varpi_{G /  e}  \ .\end{equation} 
\end{cor}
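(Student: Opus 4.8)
The plan is to feed $\varpi$ into Theorem~\ref{thm: Stokes} in the form \eqref{Stokes2} and to show that the compact type hypothesis annihilates every term of the two right-hand double sums, so that only the edge-contraction sum \eqref{GCrelation} survives. Write $\varpi = \varpi^{2h_G+1}\wedge \eta$. The basic input is property~(6) of \S\ref{sect: invariantforms}: the primitive factor $\varpi^{2h_G+1} = \beta^{2h_G+1}$ has $k = h_G$, so it vanishes on any Laplacian of rank $\le h_G$. The one subtlety is that in \eqref{Stokes2} the coproduct $\Delta_{\can}\varpi = \sum_i \varpi'_i\otimes\varpi''_i$ splits $\varpi$ and evaluates its two pieces on the Laplacians of $\gamma$ and of $G/\gamma$ separately, so the vanishing cannot be applied to a single matrix.

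First I would do the rank bookkeeping. Since $\varpi^{2h_G+1}$ is primitive, in each coproduct term it is a factor of exactly one of $\varpi'_i$, $\varpi''_i$. By the factorisation \eqref{RestrictionCanonicalFormsGeneral}, in a core-not-m.m.\ term the $\gamma$-piece is of the first kind (rank $h_\gamma$) and the $G/\gamma$-piece of the second kind (rank $h_G-h_\gamma+1$), while in an m.m.\ term the $\gamma$-piece is of the second kind (rank $h_\gamma+1$) and the $G/\gamma$-piece of the first kind (rank $h_G-h_\gamma$); in both cases the two ranks sum to $h_G+1$. By property~(6) the piece carrying $\varpi^{2h_G+1}$ vanishes unless it is evaluated on a matrix of rank $\ge h_G+1$, and by the rank-sum this forces the complementary piece to be evaluated on a matrix of rank $0$.

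It then remains to exclude a rank-$0$ complementary piece, which is where core-ness of $G$ enters and is the main obstacle. In a core-not-m.m.\ term both ranks lie in $[1,h_G]$ (here $h_\gamma\ge 1$ because $\gamma$ is core with $e_\gamma>1$), so neither reaches $h_G+1$ and the term is zero. In an m.m.\ term the only piece that can attain rank $h_G+1$ is the second-kind $\gamma$-piece with $h_\gamma = h_G$, in which case the complementary first-kind $G/\gamma$-piece has rank $h_{G/\gamma}=0$, i.e.\ $G/\gamma$ is a tree. For that piece to be nonzero it would need degree $0$, forcing $e_{G/\gamma}=1$ and $\gamma = G\setminus e$ with $h_\gamma = h_G$; this is impossible, since deleting an edge of the core graph $G$ always lowers the loop number. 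Hence $e_{G/\gamma}\ge 2$, the first-kind factor on the tree has positive degree, and it vanishes by property~(6). With all double-sum terms killed, \eqref{Stokes2} collapses to \eqref{GCrelation}.
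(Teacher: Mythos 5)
Your proof is correct and follows essentially the same route as the paper's: a case analysis on the two types of terms in \eqref{Stokes2}, killing each one via the rank bound of property (6) of \S\ref{sect: invariantforms} together with the identity $h_{\gamma}+h_{G/\gamma}=h_G$ and the core-ness of $G$. The only (harmless) divergence is in the m.m.\ subcase where the compact-type factor sits on the $\gamma$-piece: the paper disposes of it directly by observing that a strict subgraph of a core graph satisfies $h_{\gamma}<h_G$, whereas you take a slightly longer detour through the complementary rank-zero piece and a degree count; both arguments are valid.
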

\begin{proof} 
Let $\gamma \subset G$ be a strict motic  subgraph with at least two edges.
Suppose that $\gamma$ is not  \emph{m.m.} and therefore   core.
 We need to check that  a term of the form 
\[ 
\int_{\gamma}  \omega'_{\gamma} \int_{G/\gamma} \varpi''_{G/\gamma}
\]
vanishes in the right-hand side of \eqref{Stokes2}. Either $\omega'$ or $\omega''$ is of compact type relative to $h_G+1$. If the former, then  $\omega'_{\gamma}= \omega'_{\Lambda_{\gamma}}$ vanishes since $h_{\gamma} < h_G+1$.  If $\omega''$ is of compact type, then
$\varpi''_{G/\gamma}= \omega''_{\widetilde{\Lambda}_{G/\gamma}}$ vanishes if $h_{G/\gamma} +1 < h_G+1$. This inequality holds since $h_{\gamma} + h_{G/\gamma}+1 = h_G+1$, and $h_{\gamma}>0$ because $\gamma$ is core.

 Now suppose that $\gamma$ is \emph{m.m.}.  
We need to check that  any term of the form 
\[ 
\int_{\gamma}  \varpi'_{\gamma} \int_{G/\gamma} \omega''_{G/\gamma}
\]
vanishes. If $\omega'$ is of compact type, then $\varpi'_{\gamma}= \omega'_{\widetilde{\Lambda}_{\gamma}}$ vanishes if $h_{\gamma} + 1 < h_{G}+1$. Since $\gamma$ is a strict subgraph, this follows from $h_{\gamma} < h_{G}$ since $G$ is core.  If $\omega''$ is of compact type, then 
$\omega''_{G/\gamma}= \omega''_{\Lambda_{G/\gamma}}$ vanishes since $h_{G/\gamma} < h_G+1$. 
\end{proof}

\section{Examples of canonical amplitudes (complex momenta)} \label{sect: ExamplesComplex}
For a graph $G$ with edges numbered $1,\ldots, N$, we write 
$$\Omega_G = \sum_{i=1}^N (-1)^i \, \alpha_i d \alpha_1 \wedge \ldots \wedge  \widehat{ d\alpha_i}\wedge  \ldots\wedge  d\alpha_N \ .$$
We consider canonical   differential forms in low degrees  and their integrals. For now, we consider only the case when external momenta lie in $\R^2 \cong \C$.

\subsection{Digression: the  divergent form $\o^1$}  \label{secto1}The  integrals involving  the differential form $\o^1$ diverge, and for that reason are not considered to be `canonical' forms according to our definitions. However, this form provides some interesting first examples, which may  on occasion lead  to convergent integrals.

Graphs which pair with $\o^1$ have two edges. 
Consider the moduli space  $\mathcal{M}^{\trop}_{1,2,2}$ of metric graphs with 1 loop, 2 external legs, and 2   distinct non-zero masses. 
 It has  two  metric graphs of maximal dimension (see figure \ref{fig: Massive2edge}). 
 The one on the right has a non-trivial subdivergence and so the integral of $\o^1$ is logarithmically divergent.  Therefore,  consider the graph  $G$ on the left. It is important that its edge masses $m_1, m_2$ be distinct, for otherwise, this graph has a symmetry which induces an odd permutation of its set of edges, implying that its canonical integral vanishes. 
  \begin{figure}[h]
\quad {\includegraphics[width=10cm]{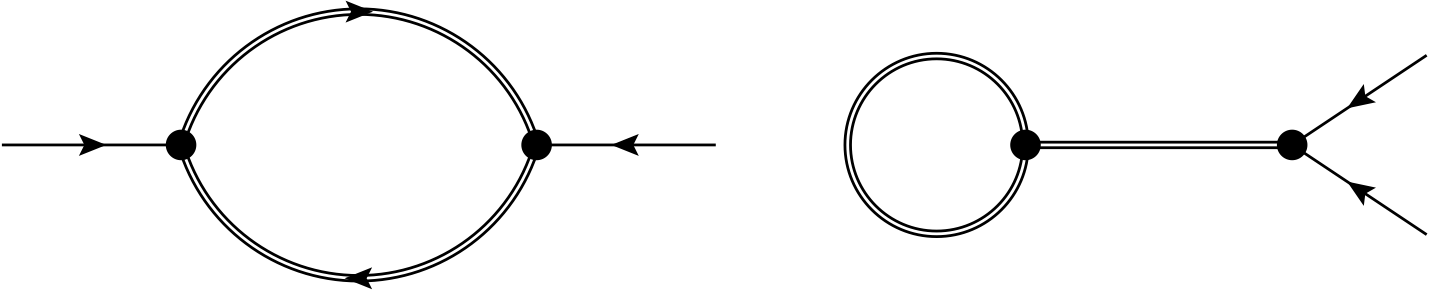}} 
\put(-280,36){$q_1$}\put(-160,36){$q_2$}
\put(-215,45){$2$}\put(-215,10){$1$}
\put(-265,5){$G$}
\caption{The two metric graphs of maximal dimension in  $\mathcal{M}^{\trop}_{1,2,2}$.
}
\label{fig: Massive2edge}
\end{figure}

Orient and number the edges as shown in the figure. 
A  representative for a generator of the  homology $H_1(G;\Z)$ is  $c_1 = e_1 +e_2$.  Momentum conservation implies that $q_2=- q_1 \in \R^2\cong \C$. Any momentum routing is given by a solution to:
\[  q_1 = \mu_2- \mu_1 \ . 
\]
The generalised graph Laplacian matrix with respect to these choices is:
\[
\widetilde{\Lambda}_G(\mu,m) = \begin{pmatrix} 
\alpha_1+\alpha_2 &  \mu_1 \alpha_1 + \mu_2 \alpha_2 \\
\overline{\mu}_1 \alpha_1+ \overline{\mu}_2 \alpha_2 &  \alpha_1 (\mu_1 \overline{\mu}_1+  m_1^2)  + \alpha_2 (\mu_2 \overline{\mu}_2+  m_2^2)  \\
\end{pmatrix}  
\]
The relevant graph polynomials are given by
\begin{eqnarray}
\Psi_G = \det(\Lambda_G) & =  & \alpha_1 + \alpha_2  \nonumber \\ 
\Phi_G (q)   & =  & q_1^2 \alpha_1 \alpha_2  \nonumber \\ 
\Xi_G(q,m) = \det(\widetilde{\Lambda}_G(
\mu,m)) & =  & q_1^2 \alpha_1\alpha_2+  (m_1^2 \alpha_1 + m_2^2 \alpha_2)  (\alpha_1 + \alpha_2 ) \ ,  \nonumber 
\end{eqnarray}
where we write $q_1^2$ for the Euclidean norm, which equals $\overline{\cq}_1 \cq_1$, where $\cq_1$ is $q_1$ viewed in  $\C$.   
Using \eqref{odefn}, the exceptional  form of degree $1$ is 
\[
 \o^1_G = \varpi_G^1 - 2 \,\omega^1_G =  d \log   \det(\widetilde{\Lambda}_G) - 2 \, d \log   \det(\Lambda_G) =
 d \log \left( \frac{\Xi_G(q,m)}{\Psi_G^2} \right)  \ . 
\]
Thus the associated  integral  is
\[
I_G(\o^1 , q, m) = \int_{\sigma_G}  \o^1_G = \int_{0}^{\infty}  d \log \left( \frac{\Xi_G(q,m)}{\Psi_G^2} \right)\Bigg|_{\alpha_2=1} d \alpha_1  
\]
where the last equality follows by computing the projective integral on the affine chart $\alpha_2=1$. The integral reduces  to 
\[
I_G(\o^1,q,m) \  = \   \left[ \log \left( \frac{\Xi_G(q,m)}{\Psi_G^2} \right) \right]^{\alpha_1=\infty, \alpha_2=1}_{\alpha_1=0,\alpha_2=1} \  =  \  \log \left( \frac{m_1^2}{m_2^2} \right)  
\]
which is an agreement with the renormalised amplitude for a massive bubble.  As expected, it vanishes when $m_1=m_2$ for the symmetry reasons mentioned above.

\subsection{Canonical amplitudes associated to $\varpi_3$}
We now turn to \emph{bona fide} canonical amplitudes, whose integrals are always finite by theorem \ref{thm: converges}. The canonical form of smallest degree is  $\varpi^3 \in B_{\can}$.   It pairs with connected graphs which have 4 edges, 
and  vanishes for graphs with fewer than 1 loop. Thus there are 3 cases:
\begin{itemize}
\item  $h_G=1, v_G=4$
\item  $h_G=2, v_G=3$
\item  $h_G=3, v_G=2$
\end{itemize} 
which we consider in turn for the maximal  number of  masses.

\subsubsection{$\mathcal{M}^{\trop}_{1,4,4}$} 
The massive box diagram (figure \ref{fig: MassBox}) is the unique graph which gives a cell of maximal dimension in $\mathcal{M}^{\trop}_{1,4,4}$.
   \begin{figure}[h]
\quad {\includegraphics[width=3.5cm]{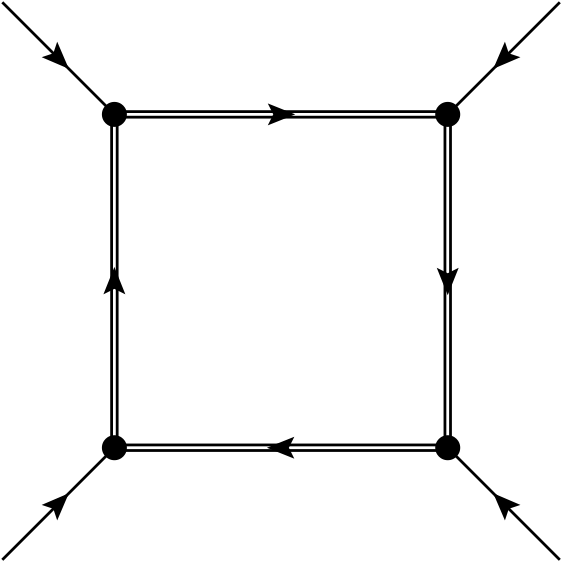}} 
\put(-53,85){$1$}\put(-30,48){$2$} \put(-53,10){$3$} \put(-90,48){$4$}\put(1,95){$q_1$}\put(1,5){$q_2$}\put(-108,5){$q_3$}\put(-108,95){$q_4$}
\caption{Massive box graph}
\label{fig: MassBox}
\end{figure}
Choose any complex numbers $\mu_e$ for each edge $e$ such that momentum
 conservation \eqref{localMC} holds:
$
\mu_{i+1} - \mu_{i} = \cq_i \ ,
$ 
where $q_i \in \R^2$, corresponding to $\cq_i \in \C$, is the external momentum at vertex $i$ subject to momentum conservation $\cq_1+\ldots +\cq_4=0$, and where we write $\mu_{5} =\mu_{1}$. The generalised graph Laplacian with respect to the cycle $c_1=e_1+\ldots +e_4$ is
\[ 
\widetilde{\Lambda}_G(\mu,m)  \ =  \  \begin{pmatrix} \alpha_1+ \alpha_2 + \alpha_3 + \alpha_4 & \mu_1 \alpha_1 + \mu_2\alpha_2 + \mu_3\alpha_3 + \mu_4 \alpha_4 \\ 
\overline{\mu}_1 \alpha_1 + \overline{\mu}_2\alpha_2 + \overline{\mu}_3\alpha_3 + \overline{\mu}_4 \alpha_4 & ( \mu_1\overline{\mu}_1 + m_1^2) \alpha_1 +\ldots  + ( \mu_4\overline{\mu}_4 + m_4^2)  \alpha_4 
\end{pmatrix}
\]
The associated graph polynomials are:
\begin{eqnarray} 
\Psi_G & = &   \alpha_1 + \alpha_2 + \alpha_3 + \alpha_4 \nonumber  \\ 
\Phi_G(q) & = & q_1^2 \alpha_1 \alpha_2 + q_2^2 \alpha_2 \alpha_3 + q_3^2 \alpha_3 \alpha_4  + q_4^2 \alpha_4 \alpha_1
+ (q_1+q_2)^2 \alpha_1 \alpha_3 + (q_1+q_4)^2 \alpha_2 \alpha_4 \nonumber \\ 
\Xi_G(q,m ) & = &\Phi_G(q)  + (   m_1^2 \alpha_1 + m_2^2 \alpha_2 + m_3^2 \alpha_3 + m_4^2 \alpha_4 ) \Psi_G  
\nonumber
\end{eqnarray} 
The canonical form  $\varpi^3_G$  is proportional to the Feynman differential form: 
\[
\varpi^3_{G}(q,m) =  3  N_G \,  \frac{ \Omega_G}{\Xi_G(q,m)^2}  
\]
where   the numerator has the symmetric form:
\[
 N_G= \det \begin{pmatrix} 
1 & 1& 1& 1  \\ 
\mu_1 & \mu_2 & \mu_3 & \mu_4 \\
\overline{\mu}_1 & \overline{\mu}_2 & \overline{\mu}_3 & \overline{\mu}_4 \\
m_1^2 + \mu_1 \overline{\mu}_1 & m_2^2 + \mu_2 \overline{\mu}_2& m_3^2+ \mu_3 \overline{\mu}_3 & m_4^2+ \mu_4 \overline{\mu}_4  \\ 
 \end{pmatrix} \ . 
 \] 
It can be re-expressed as a function of the external momenta:
\[
N_G= 2 i \,  \mathrm{Im} \left(  \cq_2 \overline{\cq}_3 \, m_1^2 -   \cq_3 \overline{\cq}_4\,m_2^2   +  \cq_4 \overline{\cq}_1\,m_3^2   -    \cq_1 \overline{\cq}_2\,m_4^2 + \cq_1 \cq_3 \overline{\cq}_2 \overline{\cq}_4  \right)  \ .
\]
In conclusion, we find that 
\[
I_G( 
\varpi^3, q, m) = \int_{L \mathcal{M}^{\trop}_{1,4,4} }  \varpi^3  = 3  N_G  \int_{\sigma_G}   \frac{ \Omega_G}{\Xi_G(q,m)^2} 
\] 
which is proportional to the Feynman integral for the massive box graph (viewed in $D=4$ spacetime dimensions).

\subsubsection{Examples from $\mathcal{M}^{\trop}_{2,3,4}$}  Figure \ref{Dunce} depicts  cells of dimension 4 on $\mathcal{M}^{\trop}_{2,3,4}$.
   \begin{figure}[h]
\quad {\includegraphics[width=10cm]{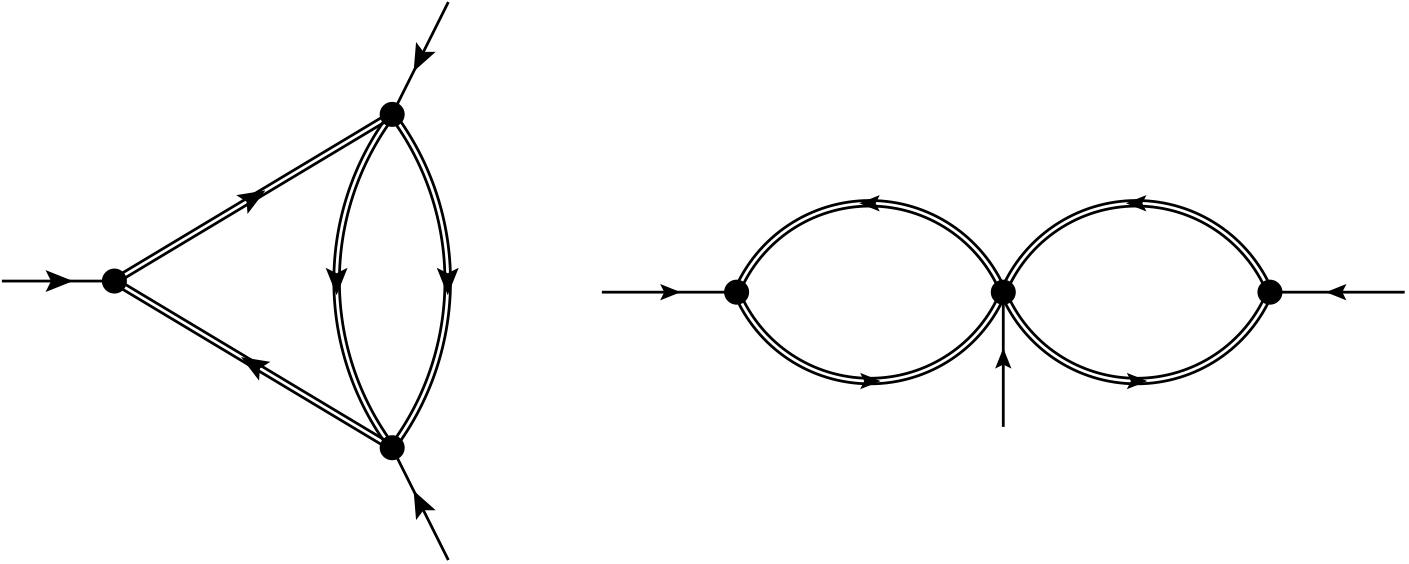}} 
\put(-295,60){$q_1$}\put(-190,100){$q_2$}\put(-190,10){$q_3$}
\put(-240,80){$1$}\put(-240,28){$2$}
\put(-225,55){$3$}\put(-202,55){$4$}
\caption{{\sc Left:} The dunce's cap has a non-trivial divergent subgraph given by the subgraph spanned by  edges $3,4$. 
{\sc Right:} A double bubble graph, upon which the canonical form vanishes. }
\label{Dunce}
\end{figure}
The graph on the right is an example of a 1-vertex join.  There is a similar graph in which the central vertex has no  incoming momentum, and the left-most  vertex has two incoming vertices. In both cases, the graph Laplacian has the form
\[
\begin{pmatrix}
\alpha_1 + \alpha_2 & 0  &  \mu_1 \alpha_1 + \mu_2 \alpha_2 \\ 
0 & \alpha_3 + \alpha_4   &  \mu_3 \alpha_3 + \mu_4 \alpha_4 \\ 
\overline{\mu}_1 \alpha_1 + \overline{\mu}_2 \alpha_2  & \overline{\mu}_3 \alpha_3 + \overline{\mu}_4 \alpha_4 & \sum_i (\mu_i \overline{\mu}_i + m_i^2 ) \alpha_i   
\end{pmatrix} 
\] 
and one verifies that $\varpi^3$ vanishes for this matrix. 
Therefore the only non-zero contribution is from the  dunce's cap, depicted on the left of figure \ref{Dunce}.  It  is one of the first examples of a graph with a non-trivial subdivergence (given by the subgraph spanned by edges $3,4$), so the associated Feynman integral diverges. By contrast, the canonical integral is necessarily finite and it is instructive to see why.

Since there are only 3 external legs, we can always assume that  the momenta are complex numbers $\cq_1, \cq_2, \cq_3$ subject to momentum conservation
$
 \cq_1 + \cq_2 + \cq_3 =0.
 $
 For the basis of cycles given by $c_1 = e_1+ e_3+ e_2$, $c_2= e_4-e_3$, and the orientations of edges as shown in figure \ref{Dunce}, the 
 generalised graph Laplacian takes the form:
 \[ 
 \widetilde{\Lambda}_G = \begin{pmatrix} 
 \alpha_1+ \alpha_2+\alpha_3 & -\alpha_3 & \alpha_1 \mu_1 + \alpha_2 \mu_2 + \alpha_3 \mu_3  \\
-\alpha_3 &  \alpha_3 +\alpha_4 &   \alpha_4 \mu_4 -  \alpha_3 \mu_3   \\
 \overline{\mu}_1 \alpha_1 +   \overline{\mu}_2 \alpha_2 +   \overline{\mu}_3 \alpha_3  
 & \overline{\mu}_4 \alpha_4 - \overline{\mu}_3 \alpha_3 &   \sum_{i=1}^4 \alpha_i(m_i^2 + \mu_i\overline{\mu_i})   
 \end{pmatrix} 
  \]
   where \eqref{localMC} takes the form:
   \begin{eqnarray} 
  \mu_1 - \mu_2 & =&  \cq_1 \nonumber \\ 
  \mu_3 + \mu_4- \mu_1 & =&  \cq_2 \nonumber \\ 
    \mu_3 + \mu_4- \mu_2 & =&  \cq_1+\cq_2 \ = \ - \cq_3 \ . \nonumber    
  \end{eqnarray} 
 One finds that 
 \begin{eqnarray} 
 \Psi_G & = &  (\alpha_1+ \alpha_2) (\alpha_3+ \alpha_4) + \alpha_3\alpha_4 \nonumber \\ 
 \Phi_G(q,m) & = &  \cq_1 \overline{\cq}_1 \left(\alpha_1 \alpha_2\alpha_3 + \alpha_1 \alpha_2 \alpha_4\right) + \cq_2 \overline{\cq}_2 \alpha_1\alpha_3\alpha_4 +  \cq_3 \overline{\cq}_3 \alpha_2\alpha_3\alpha_4  \ . \nonumber
 \end{eqnarray} 
A computation shows that   the canonical form of the second kind in degree 3 is 
\[ 
\varpi^3_G  =   3 (\cq_1 \overline{\cq}_2 - \cq_2 \overline{\cq}_1)  \,  \left( \alpha_3^2 m_3^2 - \alpha^2_4 m_4^2 \right)      \frac{\Omega_G}{\Xi^2_G(q,m)}\ ,
  \]
where we note that the factor  $ \cq_1 \overline{\cq}_2 - \cq_2 \overline{\cq}_1  = \cq_3 \overline{\cq}_1 - \cq_1\overline{\cq}_3$  is indeed symmetric,  by momentum conservation.
The  numerator vanishes at $\alpha_3 =\alpha_4=0$, to compensate the pole arising from the vanishing of $\Xi_G(q,m)$  along this locus.
 Note also that if $m_3=m_4$ then the integral of $\varpi^3_G$ will vanish, due to the fact that interchanging edges $3$ and $4$ induces an odd permutation on the edges of $G$.
In conclusion, for a suitable choice of orientation of $G$ one has 
\[
\int_{\sigma_G} \varpi^3_G =  3 (\cq_1 \overline{\cq}_2 - \cq_2 \overline{\cq}_1)  \, \int_{\sigma_G} \left( \alpha_3^2 m_3^2 - \alpha^2_4 m_4^2 \right)      \frac{\Omega_G}{\Xi^2_G(q,m)}\ .
\]
It is interesting to compare this to the renormalised amplitude associated to the graph $G$, which was discussed in \cite{BrownKreimer}.

\subsubsection{The space $\mathcal{M}^{\trop}_{3,2,4}$}
The associated graphs have at most two external momenta. By momentum conservation, they lie on a real line and so we may choose a real momentum routing for which  the generalised Laplacian $\widetilde{\Lambda}_G(q,m)$ will be symmetric. Therefore $\varpi^3_{G}$ vanishes, and there are no non-zero canonical integrals.

\subsection{Canonical amplitudes in degree 5} In degree 5 there exist  canonical forms $\omega^5$ and $\varpi^5$ of both  kinds.
There is a unique vacuum diagram, namely the wheel with 3 spokes  and no external legs, which pairs non-trivially with the former. Its 
amplitude is $6 \zeta(3)$, and is discussed in detail in \cite{CanonicalForms}. 

Henceforth we  shall focus on the form $\varpi^5$.  Since $\beta^5$ vanishes on matrices of rank $\leq 2$, the form $\varpi^5_G$ is zero  unless  $h_G\geq 2$.  There are several graphs  $G$ with 6 edges and two or more loops which could potentially pair non-trivially with $\varpi^G$,  but we shall only focus on a couple of examples for reasons of space. 

\subsubsection{Examples on $\mathcal{M}^{\trop}_{2,5,6}$} 
One shows by direct calculation that the two graphs depicted in  figure \ref{fig: 6edgenullgraphs} satisfy
$\varpi^5_G =0$. 
   \begin{figure}[h]
\quad {\includegraphics[width=10cm]{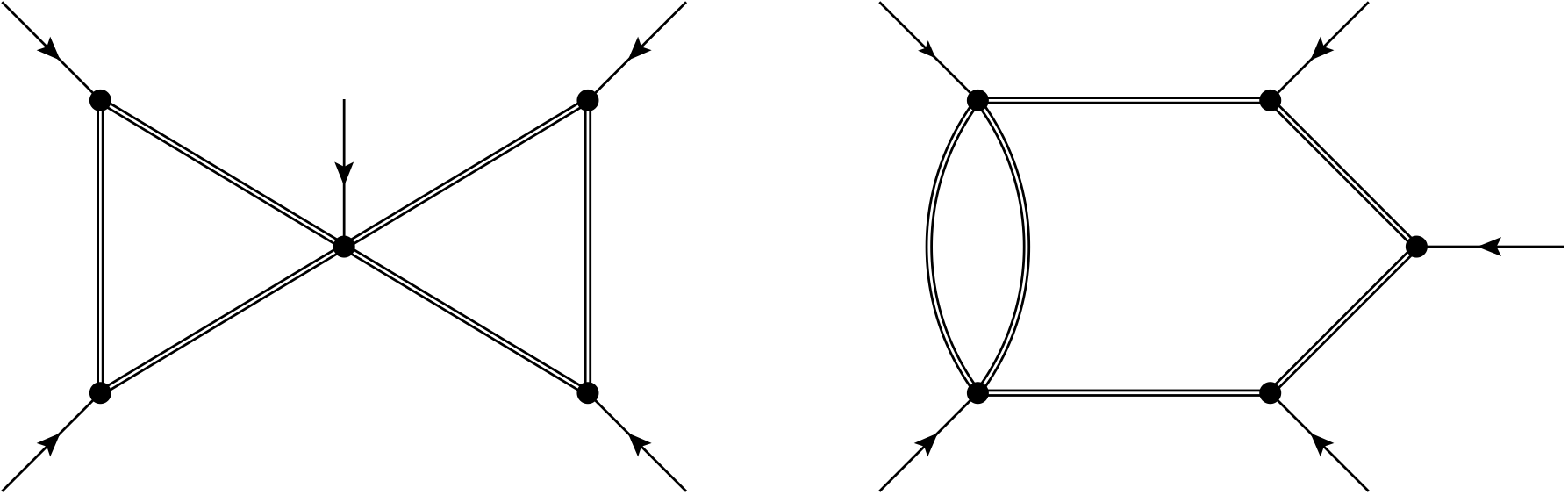}} 
\caption{Some graphs with 6 edges for which $\varpi^5_G$ vanishes}
\label{fig: 6edgenullgraphs}
\end{figure}

There is a single potentially non-zero canonical integral on $\mathcal{M}^{\trop}_{2,5,6}$, given by the 
 box-triangle graph   depicted in figure \ref{fig: BoxTriangle}. A computer calculation shows  that 
\[
\varpi^5_G =    \left(\frac{N_1}{\Xi_G(q,m)}   - 5 {N_2} \right)   \,  \frac{\Omega_G}{\Xi_G(q,m)^2}  
\]
where  $N_2$ is a certain polynomial in the $m_i^2$ and $q_i, \overline{q}_i$, 
and the polynomial $N_1$ factorises into a product of two terms
\[
N_1 = f(\alpha_1,\alpha_2,\alpha_3 ,q) g(\alpha_4,\alpha_5,\alpha_6,q, m) 
\]
where $f$ is a homogeneous linear form in $\alpha_1,\alpha_2,\alpha_3$ which does not depend on the masses and $g$ is of degree $2$ in the $\alpha_i$. 
   \begin{figure}[h]
\quad {\includegraphics[width=6cm]{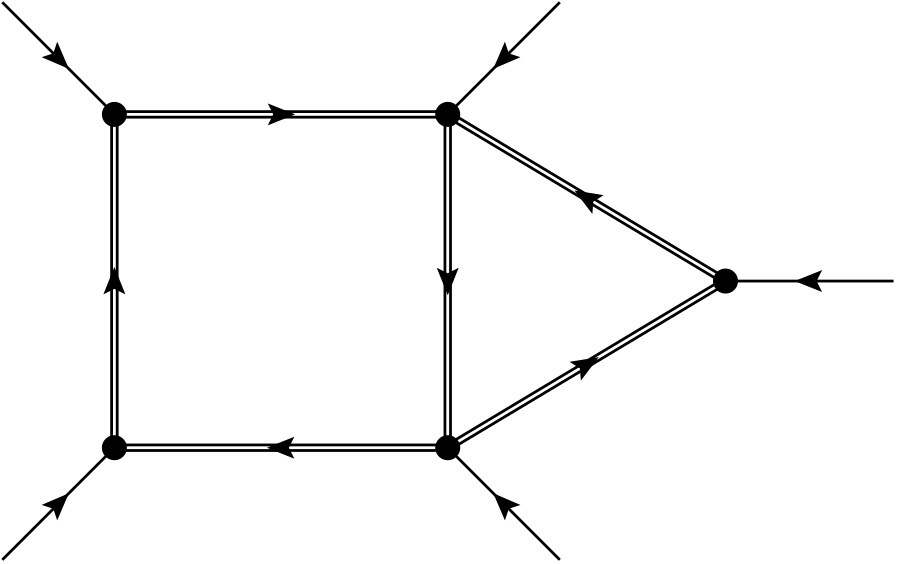}} 
\put(-180,100){$q_1$}\put(-60,100){$q_4$}\put(-10,58){$q_5$}
\put(-180,0){$q_2$}\put(-60,0){$q_3$}
\put(-120,90){$1$}\put(-120,27){$3$}\put(-160,50){$2$}\put(-95,50){$4$}\put(-60,74){$5$}
\put(-60,28){$6$}
\caption{The box-triangle graph}
\label{fig: BoxTriangle}
\end{figure}
The integral corresponding to the  term $N_2$ is the usual Feynman integral in 4 space time dimensions. It would be interesting to know if  the canonical integral admits a natural interpretation  in terms of Feynman integrals via integration by parts identities, or if it may be obtained from the usual Feynman integral by application of a natural differential operator in the external kinematics.

\section{Example of a Stokes relation for the massive  box diagram} \label{sect: Stokes rel} 

Consider the massive pentagon diagram  with arbitrary  masses  $m_1,\ldots, m_5 \in \R$ and external particle momenta in $q_1,\ldots, q_5\in  \R^2 \cong \C$ subject to momentum conservation.  Contracting each of its edges leads to five massive box diagrams.
 \begin{figure}[h]
\quad {\includegraphics[width=5cm]{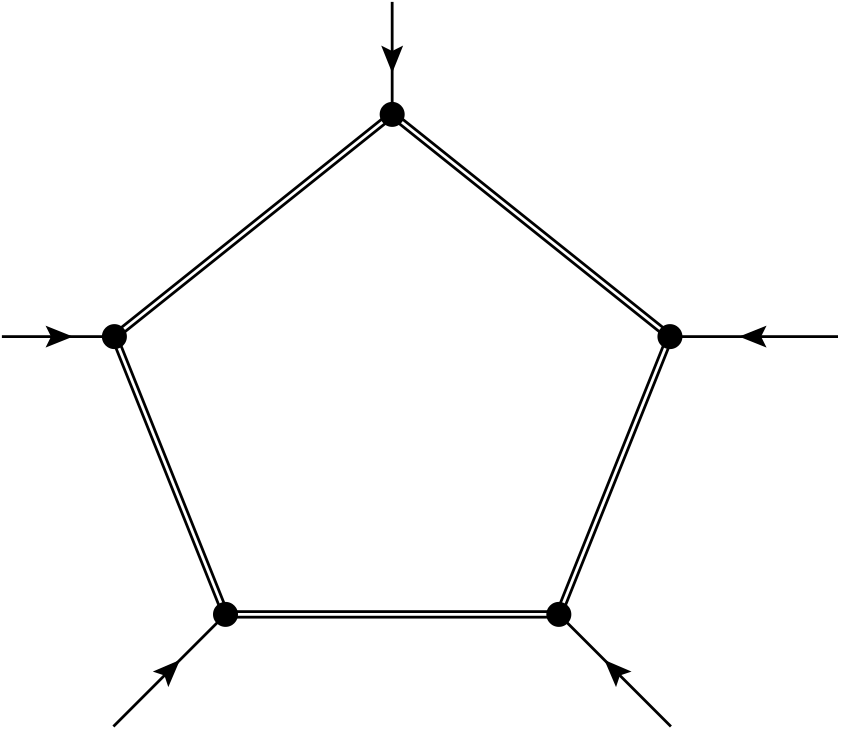}} 
\put(-80,130){$q_5$}\put(-50,90){$m_1$}\put(-115,90){$m_5$}
\put(-10,73){$q_1$}\put(-145,73){$q_4$}
\put(-110,45){$m_4$}\put(-55,45){$m_2$}
\put(-35,13){$q_3$}\put(-128,13){$q_2$}
\put(-80,10){$m_3$}
\caption{Massive pentagon diagram}
\label{fig: Pentagon}
\end{figure}

Since $\varpi^3$ is of compact type for 1-loop graphs, we may apply corollary \ref{cor:StokesCompactType} to deduce a five-term relation for the massive box diagram. Indeed \eqref{GCrelation} implies  that 
\begin{equation} \label{Boxfunctionalequation}
\sum_{i=1}^5 I_{G_i} (\varpi^3, q, m) =0 
\end{equation} 
where $G_1,\ldots, G_5$ are the five diagrams pictured in figure \ref{fig: 5term}. This relation can easily be checked in the massless case using Panzer's Hyperint \cite{HyperInt}.

 \begin{figure}[h]
\quad {\includegraphics[width=9cm]{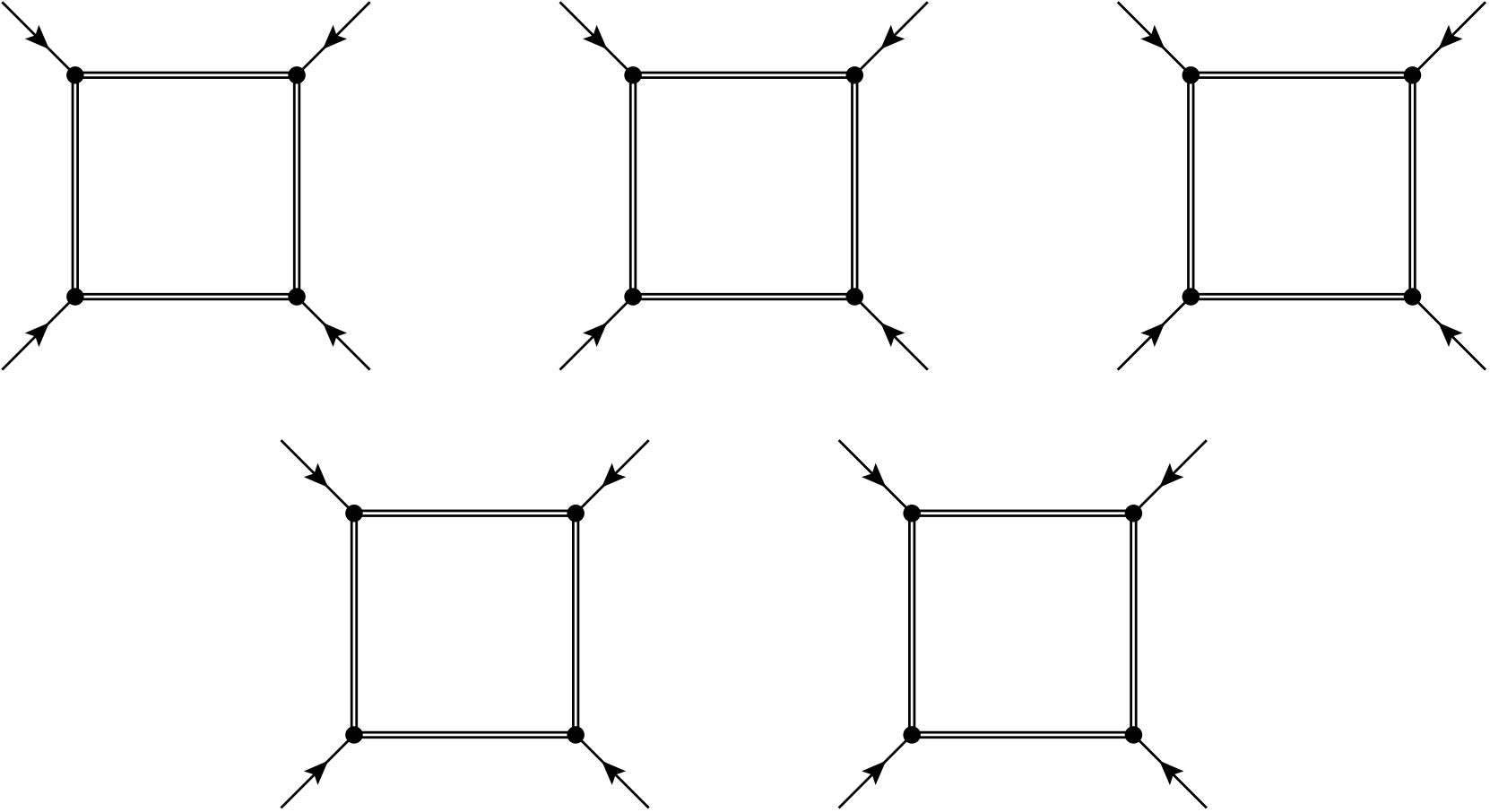}} 
\put(-265,142){\tiny $q_4$}\put(-208,142){\tiny $q_1+q_5$}\put(-230,130){\tiny $m_5$}
\put(-175,142){\tiny $q_3+q_4$}\put(-95,142){\tiny $q_5$}\put(-133,130){\tiny $m_5$}
\put(-75,142){\tiny $q_4+q_5$}\put(0,142){\tiny $q_1$}\put(-37,130){\tiny $m_1$}
\put(-265,75){\tiny $q_3$}\put(-202,75){\tiny $q_2$}\put(-230,82){\tiny $m_3$}
\put(-167,75){\tiny $q_2$}\put(-95,75){\tiny $q_1$}\put(-133,82){\tiny $m_2$}
\put(-72,75){\tiny $q_3$}\put(0,75){\tiny $q_2$}\put(-37,82){\tiny $m_3$}
\put(-65,108){\tiny $m_4$}\put(-10,108){\tiny $m_2$}
\put(-160,108){\tiny $m_3$}\put(-107,108){\tiny $m_1$}
\put(-255,108){\tiny $m_4$}\put(-203,108){\tiny $m_2$}
\put(-208,30){\tiny $m_4$}\put(-155,30){\tiny $m_1$}
\put(-204,65){\tiny $q_4$}\put(-155,65){\tiny $q_5$}\put(-180,55){\tiny $m_5$}
\put(-180,8){\tiny $m_3$}\put(-214,-5){\tiny $q_3$}\put(-155,-5){\tiny $q_1+q_2$}
\put(-112,30){\tiny $m_5$}\put(-57,30){\tiny $m_2$}
\put(-108,65){\tiny $q_5$}\put(-59,65){\tiny $q_1$}\put(-85,55){\tiny $m_1$}\put(-85,8){\tiny $m_4$}\put(-115,-5){\tiny $q_4$}\put(-59,-5){\tiny $q_2+q_3$}
\caption{Five-term relation from pentagon diagram}
\label{fig: 5term}
\end{figure}

Since the canonical integrals for massive boxes are proportional to their Feynman integrals, we deduce a  5-term relation for massive box Feynman integrals. It is known \cite{Geometric1loop} that the latter are expressible as a linear combination of a number of   dilogarithms. The graphical relation above is presumably equivalent to Abel's 5-term equation for the dilogarithm. This functional equation should, in addition, be compatible with the motivic coaction which was computed  in 
 \cite{Tapuskovic1loop}. 

\section{Example of a canonical amplitude in the quaternionic case}  \label{sect: ExamplesQuaternionic}
\subsection{A canonical amplitude for $\varpi^5$}
Let $G$ be the one-loop hexagon diagram with 6 external momenta in $\R^4$, and 6 masses. The  generalised graph Laplacian is
\[ 
\widetilde{\Lambda}_G(\mu, m)= \sum_{i=1}^6  \alpha_i 
\left(
\begin{array}{c|c}
  1   & \mu_i      \\ \hline 
 \overline{\mu}_i  &    \mu_i \overline{\mu_i}+ m_i^2 
\end{array}
\right)
\]
where $\mu_i \in \Qt$ is a quaternionic momentum routing. Its complex adjoint may be represented by 
\[
\chi_{\widetilde{\Lambda}_G(\mu, m)} = 
\sum_{i=1}^6  \alpha_i 
\left(
\begin{array}{cc|cc}
  1 &0   & \lambda_i & \nu_i     \\
 0 &1    &   -\overline{\nu}_i & \overline{\lambda}_i  \\  \hline 
 \overline{\lambda}_i  &  -\nu_i  &  \lambda_i \overline{\lambda}_i+ \nu_i \overline{\nu}_i+m_i^2   &  0  \\
\overline{\nu}_i & \lambda_i &    0  &  \lambda_i \overline{\lambda}_i+ \nu_i \overline{\nu}_i + m_i^2 
\end{array}
\right)
\]
where we write $\mu_i = \lambda_i + \Qj \nu_i$. We find by computer calculation that 
\[ 
\omega^5_G =  60\,  N_G  \, \frac{\Omega_G}{\Xi_G(q,m)^3}
\]
where  $N_G$ is the determinant of the following matrix:
\[
 \begin{pmatrix} 
1 & 1 & 1 & 1& 1& 1 \\
\lambda_1 & \lambda_2 & \lambda_3 & \lambda_4& \lambda_5& \lambda_6 \\
\overline{\lambda}_1 & \overline{\lambda}_2 & \overline{\lambda}_3 & \overline{\lambda}_4& \overline{\lambda}_5& \overline{\lambda}_6 \\
\nu_1 & \nu_2 & \nu_3 & \nu_4& \nu_5& \nu_6 \\
\overline{\nu}_1 & \overline{\nu}_2 & \overline{\nu}_3 & \overline{\nu}_4& \overline{\nu}_5& \overline{\nu}_6 \\
m_1^2 + ||\mu_1||^2 & m_2^2 + ||\mu_2||^2 & m_3^2 + ||\mu_3||^2 &m_4^2 + ||\mu_4||^2 & m_5^2 + ||\mu_5||^2&m_6^2 + ||\mu_6||^2 \\
\end{pmatrix}
\]
where we write $ ||\mu_i||^2 = \lambda_i \overline{\lambda}_i + \nu_i \overline{\nu}_i$. 
Thus the canonical integral associated to the massive hexagon is proportional to the usual Feynman integral (viewed in 6 spacetime dimensions).  Stokes' formula  applied to a massive quaternionic heptagon diagram leads to a 7-term functional equation for these integrals.

\section{Tropical single-valued integration on curves} \label{sect: TropicalSV}

In this section, which is not required for the paper, we provide some motivation for the definition of the generalised graph Laplacian by analogy with  smooth projective algebraic curves with a finite number of punctures. 

\subsection{Periods on smooth algebraic curves}  \label{sect: SVpairing}
Let $X$ be a smooth compact curve over a field $k \subset \C$, and let $\Sigma \subset X(\C)$ denote a finite set of points. 
Consider the    Gysin (residue) sequence in Rham cohomology 
\[
0  \To H^1_{dR}(X) \To H^1_{dR}(X \backslash \Sigma) \overset{\mathrm{Res}}{\To} H_{dR}^0(\Sigma)(-1) \To H_{dR}^2(X) 
\]
and suppose that we are given an element  (`external momentum'):
\[ 
\alpha_q \  \in \     \ker \left( H_{dR}^0(\Sigma)(-1)  \To H_{dR}^2(X) \right)\otimes_k \C\ ,
\]
which may be interpreted as a divisor  of degree $0$ on $X$ with coefficients in $\C$ supported on $\Sigma$.   
By pulling back the above  sequence along the copy of $ \C(-1) \subset H_{dR}^0(\Sigma)(-1)\otimes_k \C$ spanned by $\alpha_q$, we obtain a simple extension over $\C$:
\begin{equation}
0  \To H^1_{dR}(X) \otimes_k \C \To \mathcal{E}_X  \To \C(-1)    \To 0 \ . 
\end{equation} 
Since $F^1 \C(-1) = \C(-1)$, it splits if and only if the sequence
\[
0  \To F^1 H^1_{dR}(X)\otimes_k \C  \To F^1 \mathcal{E}_X \To \C(-1) \To 0  
\]
splits. Such a splitting is given by    an element (`momentum routing')
\begin{equation}
\mu  \quad \in \quad   F^1 \mathcal{E}_X 
\end{equation} 
whose image  in $\C(-1)$ is $\alpha_q$. In other words,  $\mathrm{Res}\,  \mu = \alpha_q$.   Since
$H^0(X, \Omega_X^1 (\log \Sigma)) \overset{\sim}{\rightarrow} \mathrm{gr}_F^1(H^1_{dR}(X\backslash \Sigma))$, we can assume that $\mu$
is represented by a differential of the third kind with only logarithmic poles along $\Sigma$. 
It is uniquely defined only up to adding a linear combination of holomorphic forms:
\[ 
\mu \mapsto \mu + \sum_{i=1}^g  \lambda_i \omega_i
\]
where $\omega_1,\ldots, \omega_g \in 
 H^0(X, \Omega^1_{X}) \cong F^1 H^1_{dR}(X)$ and $\lambda_i \in \C$.

We wish to consider a  Hermitian form $Q$
on $F^1 \mathcal{E}_X$   defined formally by:
\begin{eqnarray} \label{HermitianIntegralPairing}
Q: F^1 \mathcal{E}_X \otimes F^1 \mathcal{E}_X  & \To &  \C   \\ 
\eta_1 \otimes \eta_2 & \mapsto &  \frac{1}{2\pi i} \int_{X} \eta_1 \wedge \overline{\eta_2} \nonumber
\end{eqnarray} 
Alas, the integrals on the right are not always convergent. There are three cases:
\begin{enumerate}
\item $\eta_1=\omega_i$, and  $\eta_2=\omega_j$ are holomorphic. Then  the restiction of $Q$   to $H^1_{dR}(X)$ is given by the convergent integrals 
\[   \omega_i \otimes \omega_j  \mapsto  \frac{1}{2\pi i} \int \omega_i \wedge \overline{\omega}_j \] 
from which one retrieves the Riemann  polarisation form. If $k$ is real, then this actually defines a symmetric quadratic form on $H^1_{dR}(X)$. 
\item $\eta_1 =\omega_i$ is holomorphic, and $\eta_2= \mu$ is of the third kind (or vice-versa). Then the singular  integral
\[
c_i(\mu)=  \frac{1}{2\pi i}\int_X   \omega_i \wedge \overline{\mu}  
\]
is  convergent since it defines a single-valued period of $H^1_{dR}(X\backslash \Sigma)$. To see this, note that 
$F^1 H^1(X, \Sigma)   \overset{\sim}{\rightarrow}   F^1 H^1(X)$ is an isomorphism and since $\omega_i$ vanishes along $\Sigma$ it may canonically be viewed as an element of $H^1(X,\Sigma)$. The integral above  is an instance of the single-valued pairing  \cite[\S6.4]{BrownDupont}:
\[
\mathsf{s}:  H^1_{dR}(X, \Sigma) \otimes H^1_{dR}(X \backslash \Sigma) \To \C
\]

\item Only in the  case $\eta_1 =\eta_2=  \mu$ does one have an ill-defined integral: 
\[
 \frac{1}{2\pi i}\int \mu \wedge \overline{\mu} 
\]
It is singular in  a neighbourhood of a point $\sigma \in \Sigma$, since the integrand may be expressed in local polar coordinates $z = \rho e^{i \theta}$ based at $\sigma$ in the form
\[
\mu \wedge \overline{\mu}  =   \lambda \,  \frac{d\rho}{\rho}  d \theta   + \hbox{regular terms}   \ , 
\]
for some $\lambda \in \C$, which is not `polar-smooth' in the terminology of \cite{BrownDupont}. One can regularise these integrals around each point of $\Sigma$ by a number of methods (see also \cite[Remark 3.22]{BrownDupont}).   The method used  is unimportant for this discussion but the value of the integral will in general depend upon it. In any case  we may simply put
\[
 \frac{1}{2\pi i}\int^{\mathrm{reg}} \mu \wedge \overline{\mu} : =x_m
\]  
where  $x_m$ is any  real and positive number of our choosing. The possibility of choosing $x_m$ relates to the presence of masses in Feynman graphs. 
\end{enumerate}

Thus, with respect to the basis $\omega_1, \ldots, \omega_g, \mu$ of $\mathcal{E}_X$, the  form $Q$ is represented by a Hermitian matrix
\[ 
Q = 
\left(
\begin{array}{ccc|c}
  &   &  &    \vdots \\
  & Q_0   &   & c_i(\mu)  \\
  &   &   & \vdots   \\ \hline
  \cdots  &  \overline{c_i(\mu)}&  \cdots   & x_m
\end{array}
\right)
\]
This is precisely the form of the generalised Laplacian matrix (compare \eqref{LambdaGblockmatrix}).
\subsection{Tropical analogy}
Let $G$ be a connected graph (with no external half edges in the first instance). Choose an orientation on its edges. 
According to \cite{MikhalkinZharkov},   a differential form on $G$ is a linear combination 
\[
\omega= \sum_{e\in E_G} \lambda_e de
\]
where $\lambda_e \in \Z$ such that for all vertices  $v$ one has $\sum_{s(e)= v} \lambda_e = \sum_{t(e)= v} \lambda_e$.  
Thus the space of differential forms $\Omega_G^1$  on $G$ lies in an exact sequence 
\[ 
0 \To \Omega_G^1 \To \Z^{E_G} \To \Z^{V_G} \To \Z \To 0 
\]
and may be  identified with $H_1(G;\Z)$. The inner product on $\Z^{E_G}$  is given by 
\[  \langle d e_i , d e_j \rangle = x_i \delta_{ij} \ ,
\]
and restricts to  a positive definite quadratic form on $\Omega_G^1$. 

Now suppose that $G$ has external half-edges, and that  we are given  an element $\cq \in \mathrm{ker} (\Z^{V_G} \rightarrow \Z) \otimes \C$.
As in \S \ref{sect: DefGraphLaplacianComplex}, we deduce an extension $\widetilde{\Omega}^1_G$  satisfying
\[
0 \To \Omega^1_G \otimes \C \To \widetilde{\Omega}^1_G   \To  \cq \, \C \To 0
\]
and the above inner product  (extended $\C$-linearly)  restricts to a Hermitian form 
\begin{eqnarray}  
 \widetilde{\Omega}^1_G  \otimes_{\R}   \widetilde{\Omega}^1_G  &\To&  \C \\ 
 \eta_1 \otimes \eta_2 & \mapsto &  \langle \eta_1, \overline{\eta}_2 \rangle \nonumber 
\end{eqnarray} 
which is a tropical analogue of \eqref{HermitianIntegralPairing}.  The generalised  graph Laplacian matrix  $\widetilde{\Lambda}_G(q,0)$ for zero masses is precisely the matrix of this Hermitian form with respect to a 
 splitting of $\widetilde{\Omega}^1_G$, which  is given by 
any  differential form 
\[
\nu = \sum_{e} \mu_e de
\]
which satisfies the `residue' equation $\sum_{s(e)= v} \lambda_e -  \sum_{t(e)= v} \lambda_e = \cq_v$ for all vertices $v$. 
 The presence of masses corresponds, as above, to modifying only the value of $\langle \nu , \overline{\nu} \rangle$ in the bottom-right hand corner, and from this perspective, we see that the generalised graph Laplacian \eqref{LambdaGblockmatrix} is a tropical analogue 
  of the regularised single-valued integration pairing discussed in  \S \ref{sect: SVpairing}.

\bibliographystyle{alpha}

\bibliography{biblio}

\end{document}